\definecolor{ao}{rgb}{0.0, 0.5, 0.0}
\theoremstyle{definition}
\theoremstyle{definition}\newtheorem{theorem}{Theorem}
\theoremstyle{definition}\newtheorem{lemma}{Lemma}
\theoremstyle{definition}\newtheorem{definition}{Definition}
\newtheorem{assumption}{Assumption}
\newtheorem*{problem}{MOWSP Problem}
\DeclareMathOperator*{\argminA}{arg\,min} 
\begin{document}

\date{}

\title{\Large \bf An Efficient Algorithm for Finding Sets of Optimal Routes}

\author{
{\rm Ido Zoref and Ariel Orda}\\
Viterbi Faculty of Electrical Engineering , The Technion – Israel Institute of Technology
} 

\maketitle

\begin{abstract}
In several important routing contexts it is required to identify a set of routes, each of which optimizes a different criterion. For instance, in the context of vehicle routing, one route would minimize the total distance traveled, while other routes would also consider the total travel time or the total incurred cost, or combinations thereof. In general, providing such a set of diverse routes is obtained by finding optimal routes with respect to different sets of weights on the network edges. This can be simply achieved by consecutively executing a standard shortest path algorithm. However, in the case of a large number of weight sets, this may require an excessively large number of executions of such an algorithm, thus incurring a prohibitively large running time.
\\

We indicate that, quite often, the different edge weights reflect different combinations of some "raw" performance metrics (e.g., delay, cost). In such cases, there is an inherent dependency among the different weights of the same edge. This may well result in some similarity among the shortest routes, each of which being optimal with respect to a specific set of weights. In this study, we aim to exploit such similarity in order to improve the performance of the solution scheme. 
\\

Specifically, we contemplate edge weights that are obtained through different linear combinations of some (``raw'') edge performance metrics. We establish and validate a novel algorithm that efficiently computes a shortest path for each set of edge weights. We demonstrate that, under reasonable assumptions, the algorithm significantly outperforms the standard approach. Similarly to the standard approach, the algorithm iteratively searches for routes, one per set of edge weights; however, instead of executing each iteration independently, it reduces the average running time by skillfully sharing information among the iterations.
\\

\end{abstract}

\section{Introduction}\label{intro}
Routing is a core networking task, in particular in road networks and computer networks \cite{routingTask1} \cite{routingTask2} \cite{routingTask3} \cite{routingTask4}. A fundamental routing problem is that of finding a route that optimizes a single additive metric (e.g., time, cost). Essentially, its solution consists of computing a shortest path on a weighted graph.
\\

In some routing contexts, the routing objective function cannot be satisfactorily captured through a single additive edge cost. For example, planing a vehicle trajectory to a desired destination often requires a fast enough route (in terms of driving time) as well as a route that incurs sufficiently low fuel cost. Similarly, communication network routing problems often consider both throughput and delay objectives.
\\

The problem of routing with multiple objectives is a well-known challenge. Usually, different routes optimize different objective functions, hence finding a single route that simultaneously (i.e., independently) minimizes more than a single objective function (e.g. both travel time and cost) is usually impossible. We proceed to discuss the different alternatives for handling the multi-objective routing problem and present our selected approach.
\\

A common approach for dealing with multi-objective routing problems is to search for a single path that optimizes one selected objective, subject to the condition that the other objectives do not exceed given threshold values.
This is an NP-hard problem, known as the Resource Constrained Shortest Path (CSP) problem \cite{cspDef}.
In the literature, approximate algorithms and heuristic approaches have been proposed to deal with the CSP problem.\\
Approximation algorithms for the CSP problem are usually based on scaling and rounding of the input data. Warburton \cite{Warburton} was the first to develop a fully polynomial time approximation algorithm for the CSP problem on acyclic networks. In \cite{Hassin}, Hassin later improved upon this to derive two fully polynomial time approximation schemes (FPTAS) that are applicable for general networks. Other related approximation schemes providing certain improvements to Hassin’s algorithm can be found in \cite{LorenzoOrda}. In particular, a significant improvement of Hassin's result was achieved by Lorenz and Raz \cite{LorenzoRaz}, who established a strongly polynomial time approximation scheme that is also applicable to general networks.  In \cite{Goel}, the authors considered the problem of finding a path whose delay is at most (1 + $\epsilon$) times the specified delay bound and whose cost is not greater than that of the minimum cost path of the CSP problem.\\

As for heuristic approaches for the CSP problem, several proposed algorithms are based on solutions to the integer relaxation or the dual of the integer relaxation of the CSP problem \cite{IntegerCSP}. Juttner et al. \cite{Juttner} introduced the LARAC algorithm, which solves the Lagrangian relaxation of the CSP problem. An applicable heuristic variation of the LARAC algorithm is LARAC-BIN \cite{larac}. It employs binary search technique to skillfully find a path $p$ where the deviation of $p$'s cost from the optimal path cost is smaller then the tuning parameter $\tau$.
\\

An alternative approach for dealing with multi-objective routing is to provide a (``suitable'') {\em set of routes}. This can be obtained by finding several optimal routes with respect to different sets of edge weights. For instance, in the context of vehicle routing, one route may minimize the total distance traveled, while other routes would also consider the total travel time or the total incurred cost, or combinations thereof.
\\

Such an approach is often used  in recommendation systems, such as autonomous navigation systems \cite{EvRoutePlaning}, \cite{HAR}. In these systems, it is difficult to match the personal preferences of individual users by providing only a single route. These preferences may be based, for example, on better local knowledge, a bias for or against a certain route objective, or other factors. One can deal with this issue by presenting the user with a number of alternative routes with the hope that one of them would be satisfactory. 
\\

A possible method for identifying such a ("suitable") set of routes is to search for several routes that optimize different linear combinations of some "raw" objectives \cite{flexObj}, e.g., distance traveled, incurred cost, etc. Such a linear combination allows to translate several objectives into a single weight for each edge, which in turn implies an ``optimal'' (i.e., minimum weight) route from a source to target node. Therefore, by considering several linear combinations, finding the set of optimal routes (each defined with respect to one linear combination) can be simply achieved by consecutively executing a standard shortest path algorithm. However, in the case of a large number of linear combinations, this may require an excessively large number of executions of such an algorithm, thus incurring a prohibitively large running time.\\

Accordingly, in this study we seek to efficiently find a set of routes, each of which minimizes a different linear combination of the objectives. Funke and Storandt \cite{Funke} suggest such a solution. Specifically, they show that contraction hierarchies – a very powerful speed-up technique originally developed for the standard shortest path problem \cite{ch}, can be constructed efficiently for the case of several linear combinations of the edge "raw" objectives. However, this method requires some pre-processing efforts. Pre-processing can be applied to speed up graph-based search algorithms in car navigation systems or web-based route planners. Still, in some routing contexts it may not be possible to perform pre-processing since the network may dynamically change over short periods of time.
\\

A possible alternative solution for the considered problem can be obtained by finding a set of \textit{Pareto optimal} routes.\footnote{To define Pareto optimality, consider two $W$–dimensional objective vectors $x = (x_1,...,x_W)$ and $y = (y_1,...,y_W)$. If $x_i \leq y_i$ for each $j \in \{ 1,..,W \}$ and $x_i < y_i$ for some $j \in \{ 1,..,W \}$, then $x$ dominates $y$. Given a finite set $X$ of $W$-dimensional vectors, we say that $x \in X$ is {\em Pareto–optimal in $X$} if there is no $y \in X$ that dominates $x$.}  
\\
Recall that, in our problem, each required route minimizes a different linear combination of the "raw" objectives. It is easy to verify that each such route is actually a \textit{Pareto-Optimal} route. Hence, by finding the set of Pareto optimal routes we obtain a solution for the problem. We shall present several algorithms for finding the set of Pareto optimal routes.
\\

Standard solutions for finding the set of Pareto optimal routes are inspired by the Dijkstra algorithm (which is the standard approach for finding a shortest route in the basic single-objective case). The most common approaches are the Multicriteria Label-Setting (MLS) algorithm \cite{Martins} \cite{Hansen} and the Multi-Label-Correcting (MLC) algorithm \cite{Dean} \cite{DellingWagner}. The MLS algorithm keeps, for each node, a set of non-dominated paths. The priority queue maintains paths instead of vertices, typically ordered lexicographically. At each iteration, it extracts the minimum path, say $L$, which is a path ending at some node, say $u \in V$. Then, MLS scans the outgoing edges from $u$, say $a = (u, v)$ . It does so by evaluating whether the new path, $L_{new}=L \mathbin\Vert a$, is non-dominated by the queue's paths, and in that case it inserts the new path, $L_{new}$, to the queue. The MLC algorithm is quite similar, except that it sorts the priority queue in a more intuitive manner, thus resulting in better performance in term of running time. 
\\

Both MLS and MLC are considered to be fast enough as long as the set of Pareto paths is relatively small \cite{DisserParetoFeasible} \cite{paretoIsFeasible}. Unfortunately, Pareto sets may consist of a prohibitively large (non-polynomial) number of paths, even for the restricted case of two objectives \cite{Martins}. A possible approach for dealing with larger Pareto sets can be obtained by $\epsilon$-optimal solutions, which provide a polynomial number of Pareto paths \cite{Papadimitriou}. Such an approximated solution can be computed efficiently through a fully polynomial approximation scheme (FPAS) \cite{Loridan} \cite{Tsaggouris} \cite{White}.
However,  as explained above, we seek to find a sub-set of Pareto routes where each route minimizes a different linear combination of the "raw" objectives. Hence, obtaining a sub-optimal Pareto set may not provide a valid solution. In other words, it is not guaranteed that the set of required routes is included in an approximated Pareto set.
\\

Accordingly, in this study we establish and validate the \textit{Iterative Dijkstra with Adapted Queue (IDAQ)} algorithm. IDAQ provides an optimal solution for the problem without imposing any pre-processing computations. We show that, under reasonable assumptions, IDAQ significantly outperforms the standard approach of consecutively executing a standard shortest path algorithm. Similarly to the standard approach, IDAQ iteratively searches for routes; however, instead of executing each iteration independently, it shares information among the iterations. Unlike the standard approach, this allows IDAQ avoid scanning any optimal path (with respect to some linear combination) more than once, thus providing improved performance in terms of running time.
\\

The rest of the paper is organized as follows. In Section 2, we formulate the problem.
Next, in Section 3, we present and analyze a standard approach algorithm. Our approach, namely the IDAQ algorithm, along with a theoretical analysis, are presented in Sections 4 and 5, respectively. Section 6 presents a simulation study, which demonstrates that the IDAQ algorithm considerably improves performance (in terms of running time) in comparison to the standard approach. This is demonstrated on both randomly generated settings as well as on settings that correspond to real-life data. Finally, concluding remarks are presented in Section 7.

\section{Problem Formulation}
In this section, we formulate the \textit{Multi-Objective Weighted Shortest Path problem (MOWSP)}, discussed in this article.\\
To that end, we shall use the following definitions.
\begin{definition}[edge]\label{edge} 
Consider a set of nodes $V$. An $edge$ is defined as an ordered pair of nodes in $V$.
\\

As mentioned, we aim to solve a problem of finding several optimal routes with respect to different sets of edges weights. More specifically, each set of edges weights is produced through a (different) linear combination of the  "raw" objectives. Formally, $MOG$ (definition \ref{MOG}) is defined as a graph with several objective values attached to each edge; and a $coefficient$ $vector$ (definition \ref{lamda}) specifies the considered linear combinations of the objectives (where each coefficient represents the relative importance of each objective).
\\

\end{definition}
\begin{definition}[MOG]\label{MOG} A Multi-Objective Graph, $MOG(V,E)$, is a set of connected nodes $V = \{v_1,...,v_n\}$ and a set of directed edges $E = \{e_1,...,e_m\}$ so that associated with each edge $e \in E$ are $W$ different non-negative additive values, each representing some objective. We shall
denote the $i$th objective of an edge $e \in E$ (where $i \leq W$) by $w_e[i]$.
\end{definition}

\begin{definition}[coefficient vector]\label{lamda} A $coefficient$ $vector$ $\lambda \in  \mathbb{R} ^{W}$  is a vector of $W$ positive numbers. We denote the $i$th element of $\lambda$ (where $i \leq W$) by $\lambda [i]$.
\end{definition}

\begin{definition}[$Path_{s}$]\label{paths} Consider a $MOG(V,E)$ (definition \ref{MOG}) and a ``start node'' $s \in V$. $Path_{s}$ is the set of paths from $s$ to any node $v \in V $. In other words, each element of $Path_{s}$ is an ordered list of nodes that starts with $s$ and each subsequent node is connected to the previous one by an edge $e \in E$ \footnote{ Notice that under the $Path_{s}$, $Path_{s,v}$ definitions, we also consider paths that contain loops.}

\end{definition}
\begin{definition}[$Path_{s,v}$]\label{stateDef} Consider an $MOG(V,E)$ (definition \ref{MOG}), a start node $s \in V$ and some node $v \in V$. $Path_{s,v}$ is the set of paths from $s$ to $v$. In other words, each element $p \in Path_{s,v}$ is the prefix of $Path_{s}$ having $v$ as the last node.
\end{definition}

\begin{definition}[edge cost]

Consider a coefficient vector $\lambda$ (definition \ref{lamda}). The $cost$ of an edge $e \in E$ with respect to $\lambda$ is defined as follows:
\begin{equation}
\label{edgeEq}
    Cost(e, \lambda) =\sum_{j = 0 } ^{W}\lambda[j] \cdot w_e[j]
\end{equation} 
\end{definition}

\begin{definition}[path cost]
 Consider a coefficient vector $\lambda$ (definition \ref{lamda}), a start node $s \in V$ and a path $p \in Path_{s}$. The $cost$ of $p$ with respect to $\lambda$ is defined as follows:
\begin{equation}
\label{costEq}
    Cost(p, \lambda) =\sum_{e \in p} Cost(e,\lambda) = \\
    \sum_{e \in p}\sum_{j = 0 } ^{W}\lambda[j] \cdot w_e[j]
\end{equation} 
\end{definition}

\begin{definition}\label{opDef}[Optimal $Path_{s,v}$ with respect to $\lambda$] 
Consider an $MOG(V,E)$, a start node $s \in V$, a node $v \in V$ and a coefficient vector $\lambda$. $p_{opt} \in Path_{s,v}$ is $optimal$ with respect to $\lambda$ if, for each $p \in Path_{s,v}$,  the following holds:
\begin{equation}
\label{optimalEq}
    Cost(p_{opt}, \lambda)  \leq Cost(p, \lambda)
\end{equation} 
\end{definition}
We are ready to state the MOWSP problem:
\begin{problem}[Multi-Objective Weighted Shortest Path problem]\label{mowsp}
Consider an $MOG$ $G(V,E)$, a start node $s \in V$ and a set of coefficient vectors $\Lambda$, where:
\begin{equation}
\label{lambdaEq}
    \Lambda = \{ \lambda_{i} \in  \mathbb{R} ^{w} | i\in{1,..,K}\}
\end{equation}
For each coefficient vector $\lambda_i \in \Lambda$, find set of paths 
$$P_i=\{p_{v_1}, p_{v_2},...,p_{v_n}\}$$
where each path $p_{v_i} \in Path_{s,v_i}$ is optimal with respect to $\lambda_i$.
\end{problem}


\section{Standard Algorithm: Dijkstra Iterations}
\label{naiveAlgo}
In this section, we present a standard-approach algorithm for solving the MOWSP problem, which is based on multiple executions of the Dijkstra algorithm (hencefort, {\em the Standard Algorithm}). The rest of this section is structured as follows: in Section \ref{standartAlgoSec}, we describe the Standard Algorithm; in Section \ref{standardCorrect}, we prove its correctness; finally, in Section \ref{standardComplexity}, we analyze its time complexity.
\\
\subsection{Standard Algorithm Description}\label{standartAlgoSec}
The algorithm iterates through each $ \lambda_{i} \in \Lambda $ (defined by equation \ref{lambdaEq}). In each iteration, the algorithm constructs a new graph $G_{temp}(V, E)$, which is identical to $G(V, E)$ except for the following: we reduce the cost of each edge $e \in E$ to a single dimension using the Cost(e, $\lambda_{i} $) function (defined by equation \ref{edgeEq}). The algorithm calculates and returns the shortest path for each node in $G_{temp}(V, E)$ from the start node $s$ using the Dijkstra shortest-path algorithm.
\\
The pseudo-code of the Standard Algorithm is presented herein.
\begin{algorithm}
\caption{Standard Algorithm}\label{naiveAlgoPseudo}
\begin{algorithmic}[1]
\Procedure{Standard Algorithm}{G,s,$\lambda$}
\State $optimal\_paths$ $ \gets \text{null}$
\For{i=1 to K}
\State $E_{temp} \gets G.E$
\For{ each $e \in E_{temp}$}
\State $e.Cost \gets Cost(e, \lambda_{i})$
\EndFor
\State $G_{temp} \gets (G.V,E_{temp})$
\State $P_i \gets \text{Dijkstra(G$_{temp}$, s)}$
\State $optimal\_paths\_set \gets optimal\_paths\_set \cup \{P_i\}$
\EndFor

\State Return $optimal\_paths\_set$
\EndProcedure
\end{algorithmic}
\end{algorithm}

\subsection{Standard Algorithm Correctness}
\label{standardCorrect}
\begin{lemma}
The Standard Algorithm solves the MOWSP problem.
\end{lemma}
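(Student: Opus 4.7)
The plan is straightforward because the Standard Algorithm reduces the multi-objective problem to $K$ independent single-objective shortest-path instances, so I can prove correctness one iteration at a time. I would fix an arbitrary index $i \in \{1,\dots,K\}$ and show that the set $P_i$ returned by the inner Dijkstra call satisfies the MOWSP requirement, namely that for every $v \in V$ the returned path $p_v \in Path_{s,v}$ is optimal with respect to $\lambda_i$ in the sense of Definition \ref{opDef}.

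The key observation is that in the constructed graph $G_{temp}$, the scalar weight assigned to every edge $e$ is, by line~6 of Algorithm~\ref{naiveAlgoPseudo}, exactly $Cost(e, \lambda_i)$. By the additivity expressed in Equation \ref{costEq}, the total scalar weight of any $s$-to-$v$ walk in $G_{temp}$ is identical to its multi-objective cost $Cost(p, \lambda_i)$. Therefore the set of paths minimizing scalar weight in $G_{temp}$ from $s$ to each $v$ coincides, element-wise, with the set of paths minimizing $Cost(\cdot, \lambda_i)$ over $Path_{s,v}$, which is precisely the optimality condition of Definition \ref{opDef}.

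Before invoking the standard correctness guarantee of Dijkstra's algorithm I would verify its precondition, namely that all edge weights in $G_{temp}$ are non-negative. This follows immediately by combining Definition \ref{MOG} (each raw objective satisfies $w_e[j] \geq 0$) with Definition \ref{lamda} (each coefficient satisfies $\lambda_i[j] > 0$), giving $Cost(e, \lambda_i) \geq 0$ for every $e \in E$. Hence the classical correctness of Dijkstra applies to $G_{temp}$, and the output $P_i$ is exactly the desired set of $\lambda_i$-optimal paths.

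To conclude, I would note that the outer loop iterates over all $\lambda_i \in \Lambda$ and accumulates the corresponding $P_i$ into $optimal\_paths\_set$, so upon termination the algorithm returns exactly the collection required by the MOWSP problem statement. I do not anticipate a genuine obstacle; the only point that deserves to be made explicit rather than taken for granted is the non-negativity of the reduced edge weights, since without it the appeal to Dijkstra's correctness would be unjustified.
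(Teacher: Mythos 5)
Your proof follows essentially the same route as the paper's: reduce to a single iteration $i$, observe that the scalar edge costs in $G_{temp}$ make the Dijkstra objective coincide with $Cost(\cdot,\lambda_i)$, invoke Dijkstra's correctness, and then aggregate over the outer loop. The one improvement you offer is explicitly verifying the non-negativity precondition for Dijkstra (from Definitions \ref{MOG} and \ref{lamda}), a hypothesis the paper's proof implicitly relies on but never states.
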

\begin{proof}

As explained in Section \ref{naiveAlgo}, The Standard Algorithm is based on $K$ executions of the Dijkstra algorithm. Consider one of these executions (lines 4-9 of some iteration).
\\

In lines 4-6, the graph $G_{temp}$ is generated, and on this graph, we execute the Dijkstra algorithm.
The nodes of $G_{temp}$ are precisely those of $G$, namely $V$. The edges of $G_{temp}$, namely $E_{temp}$, are equal to $E$ with the following addition: each edge $e \in E_{temp}$ cost value, denoted by $e.Cost$, is determined by the $Cost$ function (line 6).
\\

The Dijkstra algorithm (executed in line 8) returns for each node $v \in V$ a path, namely $p_v \in Path_{s,v}$. $p_v$ is the shortest path ending at $v$, with respect to the cost values $e.Cost$ for each edge $e \in E_{temp}$. 
\\
Let us denote the output of the Dijkstra algorithm as $P_i$ where: 
$$P_i=\{p_{v_1}, p_{v_2},...,p_{v_n}\}$$

Recall that $G_{temp}$ has the same nodes and edges as $G$ hence, for any node $v$, each path $p \in Path_{s,v}$ is necessarily a path on both $G_{temp}$ and $G$.
\\
From the correctness of the Dijkstra algorithm we can conclude that the path $p_{v}$ is such that:
$$
 p_{v}=\argminA_{p \in Path_{s,v}} f(p)
$$
where:
$$
    f(p) =\sum_{e \in p} e.Cost =\sum_{e \in p} Cost(e,\lambda_t) = Cost(p,\lambda_{t})
$$
In other words, for each $p \in Path_{s,v}$ the following equation holds:    
$$
Cost(p_{v}, \lambda_t)  \leq Cost(p, \lambda_t)
$$
Notice that, by definition, $p_{v}$ is  optimal with respect to $\lambda_i$, i.e,  $P_i$ is the set of optimal $Path_{s,v}$ for each $v \in V$ with respect to $\lambda_i$. 
\\
The $optimal\_paths\_set$ which is return at the end of the algorithm is precisely:
$$
optimal\_paths\_set=\{P_1,P_2,...,P_K\}
$$
hence, it is by definition the solution for the MOWSP as required.
\end{proof}

\subsection{Standard Algorithm Complexity Analysis}\label{standardComplexity}
\begin{lemma}
\label{l0}
The complexity of the Standard Algorithm is given by:
    $$O(K \cdot W \cdot |E|+K \cdot |V| \cdot log|V|)$$
\end{lemma}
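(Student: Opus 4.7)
The plan is to decompose the running time of the Standard Algorithm into its two per-iteration ingredients, bound each, and then sum over the $K$ iterations of the outer loop.

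First I would analyze a single iteration $i$ (lines 4--9 of Algorithm~\ref{naiveAlgoPseudo}). Lines 4--7 construct $G_{temp}$ by copying the edge set and assigning to each edge $e$ the scalar cost $Cost(e,\lambda_i)$. Since each individual invocation of $Cost(e,\lambda_i)$ evaluates the sum $\sum_{j=1}^{W}\lambda_i[j]\cdot w_e[j]$, it takes $\Theta(W)$ arithmetic operations, and this is done once per edge, giving $O(W\cdot|E|)$ work for the edge reweighting phase.

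Next I would account for the Dijkstra call in line~8. Using a Fibonacci-heap implementation~\cite{fibQueueDijkstraComplexity}, a single-source shortest-path computation on $G_{temp}$ runs in $O(|E|+|V|\log|V|)$ time, and this bound applies directly because $G_{temp}$ has the same vertex and edge sets as $G$ and its edge costs are now single scalars. Adding the two phases, one iteration costs
\[
O(W\cdot|E|) + O(|E|+|V|\log|V|) = O(W\cdot|E|+|V|\log|V|),
\]
where the $O(|E|)$ term from Dijkstra is absorbed into $O(W\cdot|E|)$ (as $W\geq 1$).

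Finally I would observe that the outer \textbf{for} loop executes exactly $K$ times, once per coefficient vector $\lambda_i\in\Lambda$, and that the union operation in line~10 adds only constant work per iteration (or $O(|V|)$ if one insists on counting the pointer insertions of the returned set $P_i$, which is still dominated by the Dijkstra cost). Multiplying the per-iteration bound by $K$ yields the claimed total complexity of $O(K\cdot W\cdot|E|+K\cdot|V|\log|V|)$. There is no real obstacle here; the only subtlety to be careful about is making sure the choice of priority-queue data structure matches the quoted Dijkstra bound, and confirming that the $O(|E|)$ term from each Dijkstra call is indeed subsumed by $O(W\cdot|E|)$ so that it does not need to be carried separately in the final expression.
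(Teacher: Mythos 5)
Your proposal is correct and follows essentially the same route as the paper: decompose each of the $K$ iterations into the $O(W\cdot|E|)$ edge-reweighting phase and the $O(|E|+|V|\log|V|)$ Fibonacci-heap Dijkstra call, then multiply by $K$. Your added remarks that the Dijkstra $O(|E|)$ term is absorbed since $W\geq 1$ and that the set-union step is dominated are small but welcome clarifications that the paper leaves implicit.
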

\begin{proof}
Recall that $K$ is the number of input coefficient vectors (see MOWSP definition), and $W$ is defined as the number of objective values in each edge (see definition \ref{MOG}).
\\

As described in Section \ref{naiveAlgo}, The Standard Algorithm includes $K$ iterations, in each of which the following operations are performed:
\begin{enumerate}
  \item  Constructing a search graph G$_{temp}$ by calculating for each edge $e \in E$ its cost value using the $Cost$ function (equation \ref{costEq}). This is done in $ O(|E| \cdot W)$.
  \item Calculating the shortest path for each node in G$_{temp}$ using the Dijkstra algorithm. This can be done in $ O(|E|+|V| \cdot \log{|V|})$ using a Fibonacci heap for the priority queue implementation \cite{fibQueueDijkstraComplexity}.
\end{enumerate}
Hence, the complexity of the Standard Algorithm is given by:
    $$O(K \cdot W \cdot |E|+K \cdot |V| \cdot log|V|)$$
\end{proof}
\section{IDAQ Algorithm}
We turn to present IDAQ, an efficient algorithm that solves the MOWSP problem. We shall prove that, under reasonable assumptions, IDAQ has lower time complexity than the Standard Algorithm  described in Section \ref{naiveAlgo}.\\
The rest of this section is organized as follows. We begin with a general description of the IDAQ algorithm (Subsection \ref{IDAQDesc}). Then, in Subsections \ref{domSum} and \ref{adaptedQueue}, we introduce procedures that serve as building blocks by the IDAQ algorithm. Finally, in Subsection \ref{idaqPseudoSec} we present the IDAQ algorithm.
\subsection{General Description}
\label{IDAQDesc}
Similarly to the Standard Algorithm, IDAQ is an iterative algorithm: at the end of each iteration, say $i \in \{1,..,K\}$, IDAQ finds the the set of optimal $Path_{s,v}$ for each $v \in V$ with respect to $\lambda_i$. 
\\
However, unlike the Standard Algorithm, IDAQ shares knowledge among its iterations. The basic idea is the following: consider some node $v \in V$. While evaluating a path $p \in Path_{s,v}$, IDAQ checks whether $p$ might be optimal with respect to $\lambda_I$ where $I \in \{i+1,..,K\}$ is some future iteration. In that case, IDAQ remembers $p$, and this allows IDAQ to execute iteration $I$ more efficiently.\\

\subsection{IDAQ Relevance Check}
\label{domSum}
In this subsection we present IDAQ Relevance Check, a procedure used by IDAQ to determine whether an evaluated path is potentially optimal with respect to any coefficient vector $\lambda \in \Lambda$. Specifically, in Subsection \ref{relDefSec} we present the definition of {\em a relevant path}
and in Subsection \ref{isRelAlgoSec} we present the procedure's pseudo-code.

\subsubsection{Relevance definition}\label{relDefSec}
In this subsection we establish the relevance definition (definition \ref{nonDomDef}). Intuitively, a relevant path is a potential optimal path with respect to any coefficient vector $\lambda \in \Lambda$ and therefore should not be ignored. We begin by introducing some auxiliary definitions.

\begin{definition}[$Q_{v}$]\label{Qv} Consider some node, say $v \in V$, and  a list of paths $Q \subseteq Path_{s}$. $Q_{v}$ is defined as the sub-group of paths in $Q$ containing each paths $b \in Q$ where $b \in Path_{s,v}$.
\end{definition}

\begin{definition}[relevant due to optimality with respect to $Q_v$] \label{relOpt} Consider some node $v \in V$, a list of paths $Q_v \subseteq Path_{s,v}$ and some path $p \in Path_{s,v}$ where $p \notin Q_v$. $p$ is \textit{relevant due to optimality} with respect to $Q_v$ , if there exists a coefficient vector $ \lambda_{i} \in \Lambda $ where for each path $b \in Q_{v} $:
$$
Cost(p,\lambda_{i}) \leq Cost(b,\lambda_{i})
$$
\end{definition}

\begin{definition}[relevant due to dominance with respect to $Q_{v}$]\label{RelParetoDef}
Consider some node $v \in V$, a list of paths $Q_v \subseteq Path_{s,v}$ and some path $p \in Path_{s,v}$. $p$ is \textit{relevant due to dominance} with respect to $Q$ , if $p$ is Pareto non-dominated by each path $b \in Q_{v}$.
\end{definition}

We are ready to establish the definition of ``relevance''.
\definition[relevant with respect to Q]\label{nonDomDef} $p$ is $relevant$  with respect to $Q$ if it is either relevant due to optimality with respect to $Q_v$  or relevant due to dominance with respect to $Q$. 

\subsubsection{Is-Relevant procedure} \label{isRelAlgoSec}
In this subsection, we present the Is-Relevant procedure used by the IDAQ algorithm to determine whether a path $p \in Path_{s,v}$ is relevant with respect to a list of paths, $Q \subseteq Path_{s}$.\\
According to definition \ref{nonDomDef}, $p$ is relevant with respect to $Q$ if it meets either of the following conditions:
\begin{enumerate}
\item \label{ka} $p$ is relevant due to optimality with respect to $Q_{v}$ \footnote{Assuming the minimal cost for each  $\lambda_{i} \in \Lambda$ was calculated beforehand, it is easy to see that this can be verified in $O(K \cdot W)$.}.

 \item \label{pa} $p$ is relevant due to dominance with respect to $Q_{v}$ \footnote {Assuming the number of Pareto paths is given by $O(L)$, a non-dominance check can be verified in $O(L \cdot W)$. Note that the number of Pareto paths $L$ can be non-polynomial, however, in many practical applications $L$ could be relatively small \cite{paretoIsFeasible}.}
  \end{enumerate}
The following Is-Relevant procedure identifies which of the conditions (\ref{ka} or \ref{pa}) can be verified more efficiently. Eventually, this shall allow us to prove that, under reasonable assumptions, the IDAQ algorithm has lower time complexity than the Standard Algorithm.
\\

We begin by introducing an auxiliary definition followed by the Is-Relevant procedure pseudo-code.
\begin{definition}[$best_v$]\label{best}
Consider some node, say $v \in V$, and  a list of paths $Q \subseteq Path_{s}$. $best_v$ is define as an auxiliary list used by the Is-Relevant procedure, such that:
$$
best_v = \{best_v[1],best_v[2],...best_v[K]\}
$$
where for each $i=\{1,...,K\}$ and for any path $p \in Q_v$, the following holds:
$$best_v[i] \in Q_v \ \  \cap \ \ Cost(best_v[i],\lambda_i) \leq Cost(p,\lambda_i)$$
\end{definition}

\newpage
\begin{algorithmic}[1]\label{dominanceAlgo}
\Procedure{Is-Relevant}{$p$, Q, $\Lambda$}
\State $\text{K} \gets \textit{length of $\Lambda$}$
\State $\text{v} \gets \textit{p.Node}$
\If{$|Q_{v}|$ < $\frac{K}{W}$}
\State \textcolor{green}{ //Check for condition \ref{pa}}
\If{$p$ is Pareto Non-Dominated by $Q_{v}$}
 \State $P \gets $\textit{All paths that are dominated by $p$ in $Q$}
 \State $\textit{Remove P from Q}$
 \State $\textit{Return True}$
 \Else
 \State $\textit{Return False}$
 \EndIf
\Else 
\State \textcolor{green}{     //Check for condition \ref{ka}}
\State $\text{$p$.Costs} \gets \textit{$p$'s Costs For Each $
\lambda$}$    
\State $\text{$best_v$.Costs} \gets \textit{$best_v$ Costs For Each $\lambda$}$
\For{i=1 to K}
    \If{$p.Costs[i] \geq best_v[i].Costs[i]$}
    \State $p.Costs[i]=null$
    \Else
    \State $best_v[i]=p$
 \EndIf
\EndFor
\If{All $p.Costs$ are null}
 \State \textit{Return False}
 \Else
 \State $P \gets $ All path in $Q_v$ and not in $best_v$
 \State Remove $P$ from $Q$
 \State \textit{Return True}
 
  \EndIf
 \EndIf

\EndProcedure
\end{algorithmic}

\subsection{Adaptive Queue}
\label{adaptedQueue}
We turn to present the Adaptive Queue - a priority queue for determining the next path to be developed by the IDAQ algorithm. The Adaptive Queue, being a priority queue, is defined by the following operations: \textit{Push, Pop, Is-Empty}. Besides, the Adaptive Queue consists of one more operation, namely the \textit{Adapt} operation, used to sort the queue's paths according to a new coefficient vector. We begin with some auxiliary definitions.

\definition[$\Lambda\_index$] Index of coefficient vector in $\Lambda$ initialized with $1$.

\definition[Path Priority] \label{stateCost} Consider some node $v \in V$, path $p \in Path_{s,v}$ and $\Lambda\_index$. The $Priority$ of $p$, denoted by $p.Priority$, is equal to either null or $Cost(p,\lambda_{\Lambda\_index})$, as follows: 
\begin{enumerate}  
\item \textbf{null - } If there exists another path $b \in Path_{s,v}$ so that at least one of the following conditions holds:
\begin{enumerate}
    \item $b$ is either in the Adaptive Queue or was $popped$ from the Adaptive Queue, and: $$Cost(b,\lambda_{\Lambda\_index}) < Cost(p,\lambda_{\Lambda\_index})$$
    \item $b$ is in the Adaptive Queue, was $pushed$ to the Adaptive Queue before $p$ and: $$Cost(b,\lambda_{\Lambda\_index}) = Cost(p,\lambda_{\Lambda\_index})$$
    \item $b$ was popped from the Adaptive Queue and: $$Cost(b,\lambda_{\Lambda\_index}) = Cost(p,\lambda_{\Lambda\_index})$$
\end{enumerate}
\item \textbf{ Cost($p$,$\lambda_{\Lambda\_index}$) - } otherwise.
  \end{enumerate}

\definition[Priority Heap]\label{priorityHeap} A Fibonacci heap that contains each path in the Adaptive Queue with non-null Priority. The $Priority$ $Heap$ is managed by the Adaptive Queue operations (to be defined later in this section).\\

We are ready to define the Adaptive Queue operations:
\begin{description}

\item[Push($p$) -] 
Insert a Relevant (see definition \ref{nonDomDef}) path, say $p \in Path_{s,v}$ to the Adaptive Queue and perform the following actions:
\begin{enumerate}
\item Set $p.Priority$ according to definition \ref{stateCost}.
\item In case $p.Priority$ is not null:
\begin{enumerate}
    \item Set each path in $Q_v$ Priority to null and remove it, if necessary, from the Priority Heap
    \item Inset $p$ into the Priority Heap. 
\end{enumerate}
\end{enumerate}
  
\item[Is Empty() -] Check whether exists a path in the queue with not null Priority value. In other words, check if the Priory Heap is empty.

\item[Pop() -]\label{pop} Using the Priority Heap (see definition \ref{priorityHeap}), remove (and return) out of the Adaptive Queue a path with minimal not null Priority.

\item[Adapt(i) -] Perform the following actions:
\begin{enumerate}
\item Initialize the Priority Heap to a new empty heap.
\item Sets $\Lambda\_index=i$.
\item Calculate the Priority of each of the Adaptive Queue paths according to definition \ref{stateCost}.
\item Insert into the Priority each path with not null priority.
\end{enumerate}

\end{description}

\subsection{Build-Sets Procedure}\label{buildSetSec}
The IDAQ algorithm manages a list of optimal paths with respect to any coefficient vector in $\Lambda$. To solve $MOWSP$, IDAQ has to convert such list to a set of path for each coefficient vector as required by the $MOWSP$ definition (\ref{mowsp}).\\
Formally, the $Build-Sets$ procedure gets a list of paths, say $PL$, and returns for each coefficient vector $\lambda_i \in \Lambda$, a set of paths:
$$P_i=\{p_{v_1}, p_{v_2},...,p_{v_n}\}$$
where for each node $v_i \in V$: $p_{v_i} \in Path_{s,v_i}$ has minimum $Cost(p,\lambda_i)$ among all paths $p \in PL \cap Path_{s,vi}$.
\\
\newpage

\begin{algorithmic}[1]\label{buildSets}
\Procedure{Build-Sets}{$PL$, $\Lambda$,$G$}
\For{ i=1 to K}
\State $P_i \gets null$
\For{ each $v \in G.V$}
\State $paths \gets $ All paths in $PL$ ending in node $v$
\State $min \gets \infty$
\State $optimal\_path \gets null$ 
\For{ each $p \in paths$}
\If {$Cost(p,\lambda_i)<min$}
\State $optimal\_path \gets p$
\State $min \gets Cost(p,\lambda_i)$ 
\EndIf
\EndFor
\State $P_i \gets P_i \cup optimal\_path$
\EndFor
\State $optimal\_sets \gets optimal\_sets \cup \{P_i\}$
\EndFor
\State Return $optimal\_paths\_set$
\EndProcedure
\end{algorithmic}

\subsection{IDAQ algorithm}
\label{idaqPseudoSec}
We present the IDAQ algorithm, an efficient algorithm that solves MOWSP. We begin by introducing some auxiliary definitions followed by IDAQ's pseudo-code  (Algorithm \ref{idaqAlgo}).
\\

IDAQ uses an Adaptive Queue instance in order to manage a list of developed paths that is returned when the algorithm terminates (per the following definitions \ref{queue} and \ref{opPathsList}). 
\\

\begin{definition}[$queue$]\label{queue} An Adaptive Queue used by the IDAQ algorithm.
\end{definition}
\begin{definition}[$optimal\_paths$]\label{opPathsList}A list of paths that were popped from the $queue$ as paths with minimal non-null priority.
\end{definition}

Occasionally, we shall address the set of paths $Q$ defined as follows.

\begin{definition}[Discovered Paths]\label{Q} The set of paths that are in the $queue$ or the $optimal\_paths$ list. We denote the \textit{Discovered Paths} by $Q$.  
\end{definition} 
IDAQ uses an auxiliary list of paths in order to make sure that unnecessary paths will not be pushed to the $queue$. Such a list is defined as follows:

\begin{definition}[$pareto\_sample$]\label{singlePareto}A list of paths that is calculated in line 6 of IDAQ (Algorithm \ref{idaqAlgo}). Whenever the $pareto\_sample$ list contains exactly one path ending at some node $v \in V$, we denote it by $pareto\_sample[v]$.
\end{definition}
Indeed, we shall prove (Lemma \ref{paretoSample}) that the $pareto\_sample$ list contains precisely a single Pareto non-dominated path ending at each node.

\begin{definition}[iteration]\label{iteration}
A variable used by IDAQ. The value of the $iteration$ variable is initialized to $1$ (line 8 of Algorithm \ref{idaqAlgo}) and increased by $1$ each time the $queue$'s $Is-Empty$  operation \footnote{The $queue$ operations are defined in Section \ref{adaptedQueue}} returns true (line 14 of Algorithm \ref{idaqAlgo}). we shall denote the period where $iteration = i$ as "the $i$th iteration" or "iteration $i$".

\end{definition}
The IDAQ algorithm is specified as follows.
\begin{algorithm}
\caption{IDAQ}\label{idaqAlgo}
\begin{algorithmic}[1]
\Procedure{IDAQ}{G,s,$\Lambda$}
\State $E_{temp} \gets G.E$
\For{ each $e \in E_{temp}$}
\State $e.Cost \gets Cost(e, \lambda_{1})$
\EndFor
\State $G_{temp} \gets (G.V,E_{temp})$
\State $\textit{$pareto\_sample$} \gets Dijkstra(G_{temp}, s)$
\State $\textit{$optimal\_paths$} \gets empty\ list$
\State $\textit{iteration} \gets \text{1}$
\State $\textit{queue} \gets \text{Adaptive Queue (s, $\Lambda$,  iteration)}$
\While {$\textit{True}$}
\While {$queue$ is not empty} 
\If {$iteration$ > length of $\lambda$} 
\State \textit{Return $BuildSets(optimal\_paths,\Lambda,G)$}
\EndIf
\State $\textit{iteration} \gets \text{iteration+1}$
\State $\textit{queue.Adapt( iteration )}$
\EndWhile
\State $\textit{p} \gets \text{queue.Pop()}$
\State $\textit{Insert p to optimal\_paths} $
\For {$\textit{Each neighbor of p.Node}$}
\State $next\_p \gets \text{ Path(p , neighbor)}$
\State $pareto\_p \gets \textit{ pareto\_sample[p.Node]}$
\If {$pareto\_p$ Pareto Dominates $next\_p$ }
\State \textit{Continue}
\EndIf
\State $Q \gets queue \cup optimal\_paths$
\If {$Is-Relevant(next\_p, Q, \Lambda)$ } 
\State \textit{Push $next\_p$ to $queue$}
\EndIf

\EndFor
\EndWhile

\EndProcedure
\end{algorithmic}
\end{algorithm}
\newpage
\section{IDAQ Analysis}\label{IDAQCorrectSec}
In this section we prove the correctness of the IDAQ algorithm and analyze its time complexity.

\subsection{General definitions}

\begin{definition}[D]\label{D}  Average node degree:
$$D=\frac{|E|}{|V|}$$
\end{definition}

\begin{definition}[MaxDeg$_+$] Maximum in-degree of a node:
$$MaxDeg_+= max\{deg_+(v) | v\in V\}$$
\end{definition}

\begin{definition}[developed path]\label{devPath}
For the IDAQ algorithm, a {\it developed path} is a path that is popped from the $queue$ (Algorithm \ref{idaqAlgo} line 12-13), while for the Standard Algorithm, it is a path that is popped from the priority queue during a Dijkstra iteration.
\end{definition}

\begin{definition}[scanned path]
Consider a path $p \in Path_{s,v_{n}}$ where the list of nodes representing $p$ is given by: $\{s,v_1,v_2,...,v_{n-1},v_n\}$. $p$ is defined as a \textit{scanned path} if $p'$ is a developed path where:
$$
p' = \{s,v_1,v_2,...,v_{n-1}\}
$$
\end{definition}

\begin{definition}[developed node]\label{developedNode} A node that is at the end of a developed path. 
\end{definition}

\begin{definition}[ancestor]\label{ancestor} Consider a path $p \in Path_{s,v_{n}}$ where the list of nodes representing $p$ is given by: $\{s,v_1,v_2,...,v_n\}$. $p$'s {\it ancestors} are defined to be paths of the form: $\{s,v_1,v_2,...v_m\}$ where $m<n$.
\end{definition}
\begin{definition}[first optimal path]  Consider some coefficient vector, say $\lambda \in \Lambda$, and an optimal path with respect to $\lambda$, say $p \in Path_{s,v}$. $p$ is defined as a \textit{first optimal path} if at the time $p$ was scanned, neither the $queue$ nor the $optimal\_paths$ list contains another optimal $Path_{s,v}$ with respect to $\lambda$.
\end{definition}

\begin{definition}[first developed optimal path] Consider some coefficient vector, say $\lambda \in \Lambda$, and an optimal path with respect to $\lambda$, say $p \in Path_{s,v}$. $p$ is defined as a \textit{first developed optimal path} if at the time $p$ was developed, the $optimal\_paths$ list dose not contain another optimal $Path_{s,v}$ with respect to $\lambda$.
\end{definition}

\subsection{definitions given $l$}\label{complexityDef}
We proceed with some auxiliary definitions. 
Each of the following definitions depends on an input numerical integer value, namely $l$. The precise definition of $l$ shall be provided later;
intuitively, $l$  corresponds to the maximal number of Pareto paths for most nodes, which can be expected to be relatively small in practice \cite{paretoIsFeasible}.
\begin{definition}[$l$-node]\label{lnode}
A node that is at the end of at most $l$ different Pareto paths.
\end{definition}
 \begin{description}

\item[$V_L(l)$ - ] Set of all $l$-Nodes in $G$:

\begin{center}
$V_L(l) = \{v \in V  | v\ is\ an\ l$-Node$\}$ 
\end{center}

\item[$\alpha(l)$ - ] $l$-Nodes ratio. $$\alpha(l)= \frac{|V_L(l)|}{|V|}$$
\item[$E_L(l)$ -] Set of all outgoing edges from $l$-Nodes:
$$E_L(l) = \{ e\in E  | e.Start \in V_L(l) \} $$
\item[$\gamma(l)$ -] $l$-Node's outgoing edges ratio:
$$\gamma(l) = \frac{|E_L(l)|}{|E|} $$

\item[$D_L(l)$ -]  Average $l$-Node's degree:
$$D_L(l)=\frac{|E_L(l)|}{|V_L(l)|}$$

\item[$N_{L}(l)$ - ] The maximum number of paths non-dominated by a single Pareto path to any $l$-Node.

\item[$I(l)$ -] Difference between $K$ and $l$: $$I(l)=K-l$$

\end{description}

Next, we determine a specific value of $l$, namely $L$. Intuitively, $L$ is a small-enough number so that most of $G$ nodes are $L$-nodes.
\begin{definition}[L]\label{L} $L$ is defined as the solution of the following optimization problem:
$$
    L = \argminA_{l} l
$$
Subject to:
$$
\alpha(l) \geq  1- \frac{\log(|V|)}{|V|}
$$

\end{definition}
 In the following, whenever we do not state otherwise, $L$ is the default value of $l$ for each of the above defined definitions, e.g: $\alpha=\alpha(L)$, $V_{L}=V_{L}(L)$, etc.

\subsection{Assumptions}
\label{IDAQassumptionsSection}
We present several assumptions under which we can prove that IDAQ's time complexity is lower than that of the Standard Algorithm. Before stating each assumption, we provide some intuition that justifies it.
\\

In many classes of multi-objective optimization problems the number of objectives is relatively small, e.g. communication network routing problems often deal with just two objectives, e.g., throughput and delay.
\begin{assumption} \label{a0}
The number of objectives in a MOG is $O(1)$, i.e.,
$$W=O(1)$$
\end{assumption}


Moreover, many classes of networks are represented by sparse graphs, e.g. in road networks each intersection typically consists of the crossing of at most 4 roads. Thus:
\begin{assumption} \label{a2}
 The maximum number of incoming/outgoing edges from a single node is upper bounded by  $O(\log(|V|)$. In other words, $G$ is a sparse graph:
$$MaxDeg_-=MaxDeg_+=O(\log(|V|)$$
\end{assumption}

According to \cite{paretoIsFeasible}, in many practical applications, the number of different Pareto paths is often relatively small. Hence, a small number of input coefficient vectors (which is equal to the number of returned paths ending at each node), should produce paths with sufficient variety, i.e.:
\begin{assumption} 
\label{a3}
The number of input coefficient vectors in $\Lambda$, namely $K$ (equation \ref{lambdaEq}) is upper-bounded by  $O(\sqrt{|V|})$. 
$$K=O(\sqrt{|V|})$$
\end{assumption}

According to \cite{paretoIsFeasible}, in many practical applications, the various objectives are correlated. For example, in road networks, the shortest-distance route is among the fastest routes (albeit not necessarily the fastest). Accordingly, we assume that the number of paths that are non-dominated by a Pareto path is relatively small, i.e.:
\begin{assumption} 
\label{A5_1}
The maximum number of paths non-dominated by a single Pareto-non-dominated path to an $L$-Node (denoted by $N_{L}$), is upper-bounded by $O(L)$
$$
   N_{L}=O(L)
$$
\end{assumption}

Consider some $L$-Node $v \in V_{L}$. We assume that $N_{L}$ is smaller than $\frac{K}{W}$. Thus, IDAQ checks whether a $p \in Path_{s,v}$ Is-Relevant in an efficient matter by checking if $p$ is relevant by optimality (Definition \ref{relOpt}). \\
Note that $\frac{K}{W}=O(\sqrt{|V|}) $, while $N_{L}=O(L)=O(log(|V|))$, thus implying that the following assumption is acceptable:
\begin{assumption} 
\label{A5}
The maximum number of paths non-dominated by a single Pareto-non-dominated path to $L$-Node, $N_{L}$, is smaller or equal to $\frac{K}{W}$
$$
   N_{L}\leq{\frac{K}{W}}
$$
\end{assumption}

According to \cite{paretoIsFeasible}, the number of Pareto paths for most nodes is relatively small in practice. Intuitively, such a number should not be dramatically affected by the graph size; for example, in transportation networks, an "optimal road" from Boston to New-York would not pass through San Francisco. Hence:
\begin{assumption} 
\label{A6}
$$
   L=O(F(|V|) )
$$
where $F$ is a sub-logarithmic function, e.g.,
$F(|V|) = \sqrt{\log{|V|}}$, $F(|V|) = \log{\log{|V|}}$.

\end{assumption}

\subsection{IDAQ algorithm Correctness}
In this section, we prove that the IDAQ algorithm solves the MOWSP problem. In Theorem \ref{optStatelemma} we establish that, for any coefficient vector in $\Lambda$ and for each node $v \in V$, an optimal $Path_{s,v}$ is developed by the IDAQ algorithm. Next, based on Theorem \ref{optStatelemma}, in Theorem \ref{idaqcorrect} we shall prove that IDAQ solves the MOWSP problem. We begin by introducing and proving several auxiliary lemmas.

\begin{lemma}\label{optIsPareto}
An optimal $Path_{s,v}$ with respect to any coefficient vector in $\Lambda$ is Pareto non-dominated by any other $Path_{s,v}$.
\end{lemma}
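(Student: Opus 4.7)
The plan is to prove the contrapositive by contradiction: assume an optimal path $p_{opt} \in Path_{s,v}$ with respect to some $\lambda \in \Lambda$ is Pareto-dominated by some other path $p \in Path_{s,v}$, and then derive a contradiction with the optimality of $p_{opt}$.

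First I would unfold definitions. Let $p_{opt}$ be optimal with respect to $\lambda \in \Lambda$, so by Definition \ref{opDef}, $Cost(p_{opt}, \lambda) \leq Cost(p, \lambda)$ for every $p \in Path_{s,v}$. For a path $p$, define the per-objective accumulated weights $w_p[j] = \sum_{e \in p} w_e[j]$, so that $Cost(p, \lambda) = \sum_{j=1}^{W} \lambda[j] \cdot w_p[j]$. Suppose toward contradiction that some $p' \in Path_{s,v}$ Pareto-dominates $p_{opt}$, i.e., $w_{p'}[j] \leq w_{p_{opt}}[j]$ for every $j \in \{1,\dots,W\}$, with strict inequality for at least one index $j_0$.

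The key step is to exploit that $\lambda$ is a coefficient vector, hence by Definition \ref{lamda} every entry $\lambda[j]$ is strictly positive. Multiplying each inequality $w_{p'}[j] \leq w_{p_{opt}}[j]$ by $\lambda[j] > 0$ preserves the inequality, and at coordinate $j_0$ we get the strict version $\lambda[j_0] \cdot w_{p'}[j_0] < \lambda[j_0] \cdot w_{p_{opt}}[j_0]$. Summing over $j$ yields
\begin{equation*}
Cost(p', \lambda) = \sum_{j=1}^{W} \lambda[j] \cdot w_{p'}[j] < \sum_{j=1}^{W} \lambda[j] \cdot w_{p_{opt}}[j] = Cost(p_{opt}, \lambda),
\end{equation*}
which contradicts the optimality of $p_{opt}$ with respect to $\lambda$. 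Hence no such $p'$ exists, and $p_{opt}$ is Pareto non-dominated by any other $Path_{s,v}$.

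The argument is essentially a one-line observation, so there is no significant obstacle; the only subtlety worth highlighting is the reliance on strict positivity of the coefficients in $\lambda$ (as opposed to mere non-negativity) — without it, a zero coefficient at the coordinate where strict dominance occurs would fail to yield a strict inequality in the summed cost, and the argument would collapse. Since Definition \ref{lamda} explicitly requires positive entries, this condition is satisfied.
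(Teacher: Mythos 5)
Your proof is correct and takes essentially the same approach as the paper's: assume a dominating path exists, use the strict positivity of the coefficients in $\lambda$ to turn the dominance inequalities into a strict cost inequality, and contradict optimality. Your explicit remark that strict positivity (not mere non-negativity) is what makes the argument work is a nice touch, but the underlying argument is the same.
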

\begin{proof}
Consider an optimal $Path_{s,v}$ with respect to $\lambda \in \Lambda$, namely $p_v$.
We assume, by way of contradiction, that $p_v$ is not Pareto-non-dominated, i.e., there exists some path $p_{opv} \in Path_{s,v}$ that dominates $p_v$. In other words, for each objective, say $i$:
$$\sum_{e \in p_v} w_e[i] \geq \sum_{e \in p_{opv}} w_e[i] $$
and for at least one objective, say $k$:
$$\sum_{e \in p_v} w_e[k] > \sum_{e \in p_{opv}} w_e[k] $$
Note that any coefficient vector in $\Lambda$ is composed by positive weights. Hence:
$$    Cost(p_v, \lambda) = 
    \sum_{e \in p_v}\sum_{j = 0 } ^{W}\lambda[j] \cdot w_e[j] > 
    $$
    $$
    \sum_{e \in p_{opv}}\sum_{j = 0 } ^{W}\lambda[j] \cdot w_e[j]  = Cost(P_{opv},\lambda)
$$
in contradiction to the optimality of $p_v$ with respect to $\lambda$. Hence, $p_v$ is Pareto non-dominated.
\end{proof}

\begin{lemma}\label{paretoSample}
The $pareto\_sample$ list contains exactly one Pareto non-dominated path ending at each node.
\end{lemma}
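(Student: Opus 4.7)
The plan is to invoke the preceding Lemma \ref{optIsPareto} together with the definition of $pareto\_sample$ from lines 2--6 of Algorithm \ref{idaqAlgo}. The claim naturally splits into two parts: (i) $pareto\_sample$ contains \emph{exactly one} path ending at each node, and (ii) each such path is Pareto non-dominated among all paths in $Path_{s,v}$.

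For part (i), I would observe that $pareto\_sample$ is produced by a single execution of Dijkstra's algorithm on the graph $G_{temp}$, which has the same vertex set as $G$ and scalar edge weights $Cost(e,\lambda_{1})$. By the correctness of Dijkstra, its output associates each reachable node $v \in V$ with precisely one shortest path from $s$ to $v$ with respect to the scalar weights on $G_{temp}$. Since $G$ is assumed to be connected (Definition \ref{MOG}), every node is reachable from $s$, so $pareto\_sample$ contains exactly one entry per node.

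For part (ii), let $p_v \in pareto\_sample$ be the unique path ending at $v$. By Dijkstra's correctness, $p_v$ minimizes $\sum_{e \in p_v} e.Cost$ over all paths in $Path_{s,v}$, and since $e.Cost = Cost(e,\lambda_{1})$, this is equivalent to saying that $p_v$ is optimal with respect to $\lambda_{1}$ in the sense of Definition \ref{opDef}. Since $\lambda_{1} \in \Lambda$, Lemma \ref{optIsPareto} immediately implies that $p_v$ is Pareto non-dominated by any other path in $Path_{s,v}$.

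The argument is essentially a direct combination of Dijkstra correctness with Lemma \ref{optIsPareto}, so there is no real obstacle. The only subtlety worth flagging is that ``Pareto non-dominated path ending at $v$'' must be interpreted with respect to $Path_{s,v}$ (i.e., the competition is among $s$-to-$v$ paths), which is exactly the scope in which Lemma \ref{optIsPareto} provides the non-domination conclusion, so the two statements line up without further work.
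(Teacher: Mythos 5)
Your proposal is correct and follows essentially the same route as the paper's proof: reduce to the observation that Dijkstra on $G_{temp}$ produces, for each node $v$, a path minimizing $\mathrm{Cost}(\cdot,\lambda_1)$ over $Path_{s,v}$, and then invoke Lemma \ref{optIsPareto} to conclude Pareto non-dominance. The only difference is that you explicitly address the ``exactly one'' clause via connectedness, which the paper leaves implicit in its appeal to Dijkstra's output; this is a harmless refinement, not a different argument.
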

\begin{proof}
Recall that, in order to construct the $pareto\_sample$ paths list, an auxiliary graph, namely $G_{temp}$ is generated and on that graph we execute the Dijkstra algorithm.
\\

In lines 2-5 of IDAQ (Algorithm \ref{idaqAlgo}), the graph $G_{temp}$ is generated. Similarly to the first iteration of the Standard Algorithm, the nodes of $G_{temp}$ are precisely those of $G$, namely $V$. The edges of $G_{temp}$, namely $E_{temp}$, are equal to $E$ with the following addition: the cost value of each edge $e \in E_{temp}$, denoted by $e.Cost$, is determined by the $Cost$ function (line 4).
\\

The Dijkstra algorithm (executed in line 6) returns for each node $v \in V$ a path, namely $p_v \in Path_{s,v}$, which is the shortest path ending at $v$, with respect to the cost values $e.Cost$ for each edge $e \in E_{temp}$. 
\\
The output of the Dijkstra algorithm is denoted as $pareto\_sample$ where: 
$$pareto\_sample=\{p_{v_1}, p_{v_2},...,p_{v_n}\}$$

Recall that $G_{temp}$ has the same nodes and edges as $G$ hence, for any node $v$, each path $p \in Path_{s,v}$ is necessarily a path on both $G_{temp}$ and $G$.
\\
From the correctness of the Dijkstra algorithm we conclude that the path $p_{v}$ is such that:
$$
 p_{v}=\argminA_{p \in Path_{s,v}} f(p)
$$
where:
$$
    f(p) =\sum_{e \in p} e.Cost =\sum_{e \in p} Cost(e,\lambda_1) = Cost(p,\lambda_1)
$$
In other words, for each $p \in Path_{s,v}$ the following equation holds:    
$$
Cost(p_{v}, \lambda_1)  \leq Cost(p, \lambda_t)
$$
Hence, $p_v$ is by definition an optimal $Path_{s,v}$ with respect to $\lambda_1$ and in view of Lemma \ref{optIsPareto} we conclude that $p_v$ is indeed a Pareto non-dominated path.
\end{proof}

\begin{lemma}\label{optimalStateAncestor}
Consider an optimal path with respect to $\lambda_i\in \Lambda$, namely $p \in Path_{s,v_1}$. $p$'s ancestors are also optimal paths with respect to $\lambda_i$.
\end{lemma}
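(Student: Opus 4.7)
The plan is to prove this by contradiction, exploiting the additive structure of the cost function defined in equation \ref{costEq}. Suppose, for the sake of contradiction, that some ancestor of $p$ is not optimal with respect to $\lambda_i$. Let the list of nodes representing $p$ be $\{s,v_n,v_{n-1},\ldots,v_2,v_1\}$, and let $p_m = \{s,v_n,\ldots,v_m\}$ denote the ancestor of $p$ ending at node $v_m$ (for some $m \in \{2,\ldots,n\}$) that is assumed to be non-optimal with respect to $\lambda_i$.

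Since $p_m$ is not optimal, by Definition \ref{opDef} there exists some path $q \in Path_{s,v_m}$ with strictly smaller cost, i.e., $Cost(q, \lambda_i) < Cost(p_m, \lambda_i)$. The key step is to construct a new path $p^\ast \in Path_{s,v_1}$ by concatenating $q$ with the suffix of $p$ that starts at $v_m$ and ends at $v_1$. Denote this suffix by $p_{\text{suf}} = \{v_m, v_{m-1},\ldots, v_2, v_1\}$, so $p^\ast = q \mathbin\Vert p_{\text{suf}}$, which is a valid element of $Path_{s,v_1}$ because it starts at $s$ and each consecutive pair of nodes is connected by an edge in $E$ (the edges of $p_{\text{suf}}$ are those of $p$, and the edges of $q$ form a path into $v_m$ by assumption).

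Next I would use the additivity of the cost, applying equation \ref{costEq} to split the sum over edges:
\begin{equation*}
Cost(p, \lambda_i) = Cost(p_m, \lambda_i) + Cost(p_{\text{suf}}, \lambda_i),
\end{equation*}
and similarly $Cost(p^\ast, \lambda_i) = Cost(q, \lambda_i) + Cost(p_{\text{suf}}, \lambda_i)$. Combining these with the strict inequality $Cost(q, \lambda_i) < Cost(p_m, \lambda_i)$ yields $Cost(p^\ast, \lambda_i) < Cost(p, \lambda_i)$, which contradicts the optimality of $p$ with respect to $\lambda_i$ (Definition \ref{opDef}). Hence every ancestor $p_m$ of $p$ must be optimal with respect to $\lambda_i$, which is what we wanted to show.

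I do not expect any real obstacle here: the argument is the classical principle-of-optimality / cut-and-paste argument, and the only subtle point is making sure the decomposition of $Cost(p,\lambda_i)$ into the ancestor part and the suffix part is valid, which is immediate from the fact that $Cost(p,\lambda_i)$ is defined as a sum over the edges of $p$ (equation \ref{costEq}) and the edge sets of $p_m$ and $p_{\text{suf}}$ partition the edge set of $p$.
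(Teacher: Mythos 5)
Your proof is correct and is essentially identical to the paper's own argument: both proceed by contradiction, replace the allegedly non-optimal ancestor with a cheaper path to the same intermediate node, splice on the original suffix, and invoke the additivity of $Cost(\cdot,\lambda_i)$ to contradict the optimality of $p$. You spell out the cost decomposition a bit more explicitly than the paper does, but there is no difference in substance.
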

\begin{proof}
Assume by way of contradiction that there exists an ancestor path of $p$, say $b \in Path_{s,v_2}$, which is not optimal with respect to $\lambda_i$; in other words, there exists another path, namely $c \in Path_{s,v_2}$, which is optimal with respect to $\lambda_i$, i.e.:
$$
Cost(c,\lambda i) < Cost(b,\lambda i)
$$
Let us construct a new path to $v_1$, say  $d \in Path_{s,v_1}$, where: $d$ is the path $c$ followed by the path between $b$ and $p$. From the additivity of the cost function we conclude that:
$$
Cost(d,\lambda i) < Cost(a,\lambda i)
$$
in contradiction to the optimality of $p$ with respect to $\lambda_i$.
\end{proof}
In the following lemmas (\ref{firstOpsScan},\ref{optScanedInQueue}) we establish some properties that hold in case a first optimal path is scanned. Based on these, we will then show (Theorem \ref{optStatelemma}) that any first optimal path is scanned by the time that IDAQ terminates.
\begin{lemma}\label{firstOpsScan}
In case a first optimal path is scanned, it is pushed to the $queue$.
\end{lemma}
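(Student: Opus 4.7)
The plan is to trace through the neighbor-scanning loop (lines~20--29 of Algorithm~\ref{idaqAlgo}) at the exact moment $p$ is generated as $next\_p$, and to verify that the two filters --- the Pareto-sample dominance check on lines~23--25 and the Is-Relevant check on line~27 --- both let $p$ through so that the \textit{Push} on line~28 is executed. A preliminary observation I would use throughout is that, by definition, the scanning of $p$ happens immediately after its (unique) parent $p'$ is popped from the $queue$. Since a path is popped at most once and then appended to $optimal\_paths$, and since the only mechanism that inserts paths into the $queue$ is this very loop, $p$ is presented as $next\_p$ at most once and in particular $p \notin Q_v$ at this moment --- which is exactly the precondition required by Definition~\ref{relOpt}.

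For the Pareto-sample check I would invoke Lemma~\ref{optIsPareto}: because $p$ is optimal with respect to some $\lambda_i \in \Lambda$, no other $Path_{s,v}$ Pareto-dominates it, so the $pareto\_sample$ path consulted on line~22 either coincides with $p$ or is Pareto-incomparable to it, and strict dominance fails. For the Is-Relevant check I would exploit the defining property of ``first optimal'': since neither the $queue$ nor $optimal\_paths$ contains another optimal $Path_{s,v}$ with respect to $\lambda_i$, every $b \in Q_v$ must satisfy the strict inequality $Cost(b,\lambda_i) > Cost(p,\lambda_i)$. Then I would split according to the branch of Is-Relevant: when $|Q_v| < K/W$, the procedure returns True iff $p$ is Pareto non-dominated by $Q_v$, which is again immediate from Lemma~\ref{optIsPareto}; when $|Q_v| \geq K/W$, at iteration index $i$ of the inner loop the test $p.Costs[i] \geq best_v[i].Costs[i]$ fails by the strict cost gap above, so $best_v[i]$ is overwritten to $p$, $p.Costs[i]$ is never nulled, and the final ``all $p.Costs$ are null'' check fails, so Is-Relevant returns True. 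Either way, line~28 pushes $p$.

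The step I expect to require the most care is the second branch of Is-Relevant, because the loop rewrites $best_v$ in place as it goes, so I would need to verify that the relevant comparison at index $i$ is made against a genuine element of $Q_v$ rather than against $p$ after some earlier overwrite. This reduces to the observation that Definition~\ref{best} describes $best_v$ at entry to the call and the loop reads $best_v[i']$ \emph{before} any possible assignment in the same iteration. A secondary, cosmetic concern is the ambiguity of which node's $pareto\_sample$ entry is consulted on line~22 (the popped node or the neighbor node $v$); either reading is dispatched by Lemma~\ref{optIsPareto}, so it does not affect the conclusion.
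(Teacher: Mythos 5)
Your proof is correct and takes essentially the same approach as the paper's: both invoke Lemma~\ref{optIsPareto} to pass the $pareto\_sample$ dominance filter, and both derive from the ``first optimal'' definition the strict inequality $Cost(p,\lambda_i) < Cost(b,\lambda_i)$ for every $b \in Q_v$, which forces Is-Relevant to return True. The paper argues slightly more abstractly --- it notes that $p$ is relevant under both Definition~\ref{relOpt} and Definition~\ref{RelParetoDef}, so whichever branch Is-Relevant takes must succeed --- whereas you trace each branch of the pseudocode explicitly; same substance, just a different level of granularity.
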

\begin{proof}
Consider the scan of a first optimal path, with respect to some coefficient vector $\lambda_i \in \Lambda$,say $p \in Path_{s,v}$.
\\
Recall that, by definition, at the time that a first optimal path is scanned, it is neither in the $queue$ nor in the $optima\_paths$ list.
\\
According to Lemma \ref{optIsPareto}, $p$ is non-dominated by another $Path_{s,v}$, in particular, $pareto\_sample[v]$. Therefore, the Is-relevant procedure is executed in order to determine whether $p$ is inserted to the $queue$.
\\
From the definition of a first optimal path we can conclude that, for each path $b \in Q_{v} $:
$$
Cost(p,\lambda_{i}) < Cost(b,\lambda_{i})
$$ 
Therefore, according to definition \ref{relOpt}, $p$ is relevant due to optimality with respect to $Q_v$. 
\\
From Lemma \ref{optIsPareto} we conclude that $p$ is Pareto non-dominated by any other $Path_{s,v}$, in particular $Q_v$. Hence, according to Definition \ref{RelParetoDef}, $p$ is relevant by dominance with respect to $Q_v$.
\\

Since $p$ is relevant by dominance with respect to $Q_v$ and it is relevant by optimality with respect to $Q_v$, it is identified as relevant with respect to $Q$ by the Is-Relevant procedure (\ref{isRelAlgoSec}) and pushed to the $queue$ (lines 24-25 of Algorithm \ref{idaqAlgo}), as required.
\end{proof}

In the following Lemma \ref{optScanedInQueue}, we shall prove that not only a scanned first optimal path is pushed to the $queue$ (as established by Lemma \ref{firstOpsScan}), but also it is not removed from the $queue$ unless the first developed optimal path is in the $optimal\_paths$ list.
\begin{lemma}\label{optScanedInQueue}
From the point in the execution of the algorithm that the first optimal path is scanned, either the first optimal path is in the $queue$  or the first developed optimal path is in the $optimal\_paths$ list.
\end{lemma}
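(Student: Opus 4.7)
The plan is to maintain an invariant: at every time $t \geq t_0$, where $t_0$ is the moment the first optimal path $p \in Path_{s,v}$ (with respect to some $\lambda_i \in \Lambda$) is scanned, either $p$ is in the $queue$, or the first developed optimal path is in $optimal\_paths$. By Lemma \ref{firstOpsScan}, $p$ is pushed into the $queue$ at $t_0$, so the invariant trivially holds at $t_0$. The remainder of the argument consists of verifying that no subsequent algorithmic event can falsify the invariant.

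The central step is to show that, once $p$ is in the $queue$, the only mechanism by which $p$ can leave it is a Pop operation. I would walk through every removal branch of the Is-Relevant procedure. For the Pareto-dominance branch (condition \ref{pa}), Lemma \ref{optIsPareto} guarantees that $p$ is Pareto non-dominated by any $Path_{s,v}$, so no subsequent candidate path $q$ can dominate $p$, and $p$ is never removed via this branch. For the optimality branch (condition \ref{ka}), I would argue that as soon as $p$ enters $Q_v$, $p$ becomes $best_v[i]$, because the definition of ``first optimal path'' implies that no previously present path in $Q_v$ is optimal with respect to $\lambda_i$, so $p.Costs[i]$ is strictly less than the current $best_v[i].Costs[i]$; and that $p$ remains $best_v[i]$ thereafter, because any incoming candidate $q$ satisfies $q.Costs[i] \geq p.Costs[i]$ and therefore triggers the branch $q.Costs[i] := null$ rather than displacing $p$ in $best_v[i]$. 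Consequently, when Is-Relevant deletes from $Q$ the paths in $Q_v$ that are absent from $best_v$, $p$ is always retained.

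Having ruled out removal by Is-Relevant, I would note that the remaining Adaptive-Queue operations cannot evict $p$ either: Push's side effect of nulling the priorities of other paths in $Q_v$ only manipulates the Priority Heap and leaves the $queue$'s membership intact, and Adapt rebuilds the Priority Heap without touching the $queue$. Therefore $p$ stays in the $queue$ up to and including the moment it is popped. At that moment, line 20 of Algorithm \ref{idaqAlgo} appends $p$ to $optimal\_paths$, and I would split into two cases. If no optimal $Path_{s,v}$ with respect to $\lambda_i$ was popped prior to $p$, then $p$ itself is the first developed optimal path, and the invariant passes cleanly from ``$p$ in queue'' to ``first developed optimal path in $optimal\_paths$''. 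Otherwise, some optimal path was popped at an earlier time $t' < t$; that path is then the first developed optimal path, was inserted into $optimal\_paths$ at $t'$, and, since entries are never deleted from $optimal\_paths$, remains there ever after, preserving the invariant throughout.

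The main obstacle is the meticulous bookkeeping of Adaptive-Queue operations: in particular, distinguishing removal from the Priority Heap (which merely nulls a path's priority and does not affect queue membership) from removal from the $queue$ itself (which is what the invariant forbids for $p$), and making explicit the fact that Push, Adapt, and Is-Relevant interact with $best_v$ and with the $queue$ exactly as claimed above. Once this case analysis is complete, the monotonicity of $optimal\_paths$ closes the argument.
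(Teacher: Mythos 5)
Your proof is correct and follows essentially the same route as the paper: anchor the invariant with Lemma~\ref{firstOpsScan}, rule out eviction via the Pareto branch by Lemma~\ref{optIsPareto}, rule out eviction via the optimality branch because an optimal path can never be displaced from $best_v[\,i\,]$, and note that a Pop transfers the invariant to $optimal\_paths$. The one difference is that you spell out the $best_v$ bookkeeping and the Priority-Heap-versus-queue distinction explicitly, whereas the paper compresses the optimality-branch case into the terse assertion that ``$p.Cost[i]$ is not null''; the argument is the same, but your version is more careful about what the Is-Relevant procedure actually does.
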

\begin{proof}
Consider a first optimal path $p \in Path_{s,v}$ with respect to some coefficient vector $\lambda_i \in \Lambda$.
\\
According to Lemma \ref{firstOpsScan}, in case $p$ is scanned, it is pushed to the $queue$. Note that $p$ can only be removed from the $queue$ either by the $queue.pop$ operation or during the scan of some other path.
\\
In case the first developed optimal path is popped from the $queue$, it is inserted to the $optimal\_paths$ list (lines 16-17 of Algorithm \ref{idaqAlgo}), as required.
\\
Otherwise, assume by way of contradiction that $p$ was removed from the $queue$ during the scan of some other path, say $b \in Path_{s,v}$. One out of the following two cases must hold:
\begin{enumerate}
    \item \label{l5nd} $b$ dominates $p$.
    \item \label{l5null} All $p.Costs$ are null.
\end{enumerate}
Consider case \ref{l5nd}. From Lemma \ref{optIsPareto} we conclude that $p$ is Pareto non-dominated by any other $Path_{s,v}$, in particular $b$, hence this scenario is not possible.
\\
Consider case \ref{l5null}. Since $p$ is optimal with respect to $\lambda$, the following holds: 
$$Cost(p,\lambda_i) \leq Cost(b,\lambda_i)$$ Since $p$ has been inserted to the $queue$ before $b$, $p.Cost[i]$ is not null, this scenario is not possible.
\\
To conclude, we have established that none of the two scenarios is possible, in contradiction to our assumption, thus the lemma follows.

\end{proof}

\begin{theorem}\label{optStatelemma}
Consider any coefficient vector $\lambda_i \in \Lambda$ and any node $v \in V$. An optimal $Path_{s,v}$ with respect to $\lambda_i$ has been developed by the time that the IDAQ algorithm returns.
\end{theorem}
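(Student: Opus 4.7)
The plan is to proceed by induction on the number of edges $n$ of an optimal $Path_{s,v}$ with respect to $\lambda_i$. In the base case $n=0$, we must have $v=s$: the trivial path consisting of $s$ alone is the single seed of the $queue$ at initialization, carries priority $0$, and is therefore popped and developed in the first iteration. For the inductive step, let $p=(s,v_1,\ldots,v_{n-1},v)$ be optimal with respect to $\lambda_i$. Its ancestor $p'=(s,v_1,\ldots,v_{n-1})$ is optimal with respect to $\lambda_i$ by Lemma \ref{optimalStateAncestor}, so the induction hypothesis furnishes some optimal path $q'$ to $v_{n-1}$ with respect to $\lambda_i$ that is developed during the run. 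At the moment $q'$ is popped, the loop over the neighbors of $v_{n-1}$ scans the extension $q' \mathbin\Vert (v_{n-1},v)$; since $Cost(q',\lambda_i)=Cost(p',\lambda_i)$, this extension has $\lambda_i$-cost equal to $Cost(p,\lambda_i)$, making it itself an optimal $Path_{s,v}$ with respect to $\lambda_i$. Thus some optimal $Path_{s,v}$ with respect to $\lambda_i$ is scanned along the run.

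Take the chronologically first such scan: by definition the scanned path qualifies as a first optimal path, so Lemma \ref{firstOpsScan} inserts it into the $queue$, and Lemma \ref{optScanedInQueue} then maintains the invariant that thereafter either this first optimal path remains in the $queue$ or the first developed optimal $Path_{s,v}$ with respect to $\lambda_i$ is already in $optimal\_paths$. It remains to collapse the first alternative into the second at termination. IDAQ exits only once $iteration>K$, so in particular iteration $i$ runs to completion; during that iteration the $Adapt$ operation re-prioritizes every queued path with respect to $\lambda_i$ and the inner loop drains every path of non-null priority before advancing. A first optimal path attains the minimum $\lambda_i$-cost, so by Definition \ref{stateCost} either it itself carries non-null priority and is popped during iteration $i$, or its priority was nulled by a chronologically earlier path of equal $\lambda_i$-cost, which is then itself optimal; descending to the earliest-pushed such optimum, which by construction has no earlier equal-cost rival, shows that some optimal $Path_{s,v}$ with respect to $\lambda_i$ must be popped and entered into $optimal\_paths$ before termination.

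The principal obstacle lies in this last step: Lemma \ref{optScanedInQueue} by itself leaves open the possibility that the distinguished first optimal path languishes in the $queue$ with null priority across all iterations, and bridging to actual development in $optimal\_paths$ requires the careful tie-breaking analysis under Definition \ref{stateCost} together with the precise semantics of $Adapt$; in particular, one must argue that a priority nulled by an earlier equal-cost path can always be traced back to an optimum that does get popped within iteration $i$ or earlier.
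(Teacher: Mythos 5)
Your induction on the number of edges is a genuinely different route from the paper's argument, which instead fixes a single moment in time and derives a contradiction. However, the induction hypothesis as stated is too weak, and the gap that results is precisely the one your closing paragraph gestures at without closing. The hypothesis only yields ``some optimal path $q'$ to $v_{n-1}$ is developed \emph{during the run},'' with no control over \emph{when}; consequently the inductive step only guarantees that some optimal $Path_{s,v}$ is \emph{scanned} at some time, possibly during an iteration $j>i$. The termination argument in your third paragraph tacitly assumes the first optimal path is already queued when iteration $i$ begins (so that $Adapt(i)$ re-prioritizes it) or is pushed during iteration $i$ (so that $Push$ assigns a $\lambda_i$-priority). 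If instead it is first scanned during iteration $j>i$, its priority is computed with respect to $\lambda_j$, the index $\lambda_i$ is never visited again, and the path may sit in the queue with null priority for the remainder of the run. Lemma \ref{optScanedInQueue} keeps it in the queue but does not force it to ever be popped, so ``collapsing the first alternative into the second'' is exactly what is not yet established. (The tie-breaking worry you raise is, by contrast, benign: a first optimal path has no earlier-pushed optimal rival by Definition, and a popped equal-cost rival is itself an already-developed optimal.)

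The repair is to strengthen the hypothesis to ``an optimal $Path_{s,v_{n-1}}$ with respect to $\lambda_i$ is developed \emph{by the end of iteration $i$}.'' The base case still holds, and in the inductive step $q'$ is then popped by the end of iteration $i$, so the extension is scanned and pushed no later than that; Lemma \ref{optScanedInQueue} keeps the first optimal in the queue through $Adapt(i)$, at which point it receives a non-null $\lambda_i$-priority (or a tied optimal has already been popped), forcing a pop before iteration $i$ closes. This is exactly the invariant the paper proves, but the paper obtains it more directly: it assumes no optimal $Path_{s,v}$ has been developed by the end of iteration $i$, walks along the hypothesized first optimal path to locate the last node $v_e$ with a developed optimal and its successor $v_f$, and shows the induced optimal path to $v_f$ must still be in the queue with non-null priority --- contradicting that iteration $i$ has ended. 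That snapshot contradiction does the work of your induction in a single step and avoids the cross-iteration timing bookkeeping that your version must carry explicitly in the hypothesis.
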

\begin{proof}
Assume by way of contradiction that there exists some node, say $v \in V$, and some coefficient vector $\lambda_i \in \Lambda$, so that no optimal $Path_{s,v}$ with respect to $\lambda_i$ has been developed by the end of the $i$th iteration (see Definition \ref{iteration}).
\\
Consider the first optimal path with respect to $\lambda_i$, namely $b \in Path_{s,v}$.
From Lemma \ref{optScanedInQueue} we can conclude that, in case that $b$ is scanned, it must be in the $queue$. Hence, by the time that the $i$th iteration ends, one of the following two cases must hold:
\begin{enumerate}
    \item \label{optScenarioPrevIter} $b$ is in the $queue$. 
    \item \label{BNever2} $b$ has yet to be pushed to the $queue$.
\end{enumerate}
Consider case \ref{optScenarioPrevIter}. $b.Priority$ must be null since the iteration ends only when the priority of all the paths in the $queue$ is null. Consider the priority calculation of path $b$ (either by the $queue.adapt$ operation or by the $queue.push$ operation). According to the path priority definition (\ref{stateCost}), since $b$ is a first optimal path, its priority can be set to null only if an optimal $Path_{s,v}$ was developed, in contradiction to our assumption.
\\

Turning to case \ref{BNever2}, consider the list of ordered nodes represented by path $b$:
$$b.Nodes=\{s,v_{b1}, v_{b2},... v\}$$

Let us define node $v_e$ as the last node in $b.Nodes$ for which an optimal $Path_{s,v_e}$ with respect to $\lambda_i$ has already been developed. We also denote by $e$ the first developed optimal $Path_{s,v_e}$ \footnote{Note that $e$ is not necessarily an ancestor of $b$ (see Definition \ref{ancestor})}.
\\

We note that $v_e$ and $e$ must exist since the first developed path, which contains only the start node $s$ (in other words: $v_e=s$  and $e.Nodes=\{s\}$) is a first developed optimal $Path_{s,v_e}$ with respect to $\lambda_i$.
In addition, $v_e \neq v$ since otherwise an optimal $Path_{s,v}$ with respect to $\lambda_i$ has been developed, contrary to our assumption.
\\

Let $v_f$ be the node after $v_e$ on the path $b$. Let  $f \in Path_{s,v_f}$ be the path $e$ followed by the node $v_f$.
\\

Consider the ancestor (see Definition \ref{ancestor}) of $b$ ending in $v_f$, namely $b_{f}$. According to Lemma \ref{optimalStateAncestor}, $b_{f}$ is optimal with respect to $\lambda_i$. From the additivity of $Cost$ and since $e$ is optimal with respect to $\lambda_i$ as well, we conclude that:
$$Cost(f,\lambda_i) = Cost(b_{f},\lambda_i).
$$
Hence, $f$ is optimal $Path_{s,v_f}$ with respect to $\lambda_i$.
\\
Recall that $v_e$ is followed by $v_f$ on the $b.Nodes$ list and that $v_e$ is the last node in the $b.Nodes$ list for which an optimal $Path_{s,v_e}$ with respect to $\lambda_i$ has already been developed, i.e.,an optimal $Path_{s,v_f}$ with respect to $\lambda_i$ has not been developed.
\\

Additionally, $f$ was scanned, since its ancestor, $e$, had been developed. Hence, either $f$ is the first optimal $Path_{s,v_f}$, or the first optimal $Path_{s,v_f}$ is already in the $queue$; in either case, the first optimal $Path_{s,v_f}$ with respect to $\lambda_i$ must be in the $queue$. According to Definition \ref{stateCost}, such a path does not have null priority. Hence, the iteration has not concluded, which is a contradiction.
\\\\
To conclude, we have established that none of the scenarios is possible, thus contradicting our assumption. Thus, after the $i$th iteration, an optimal $Path_{s,v}$ with respect to $\lambda_i$ has been developed.\\
The IDAQ algorithm returns only when the last iteration ends, hence the theorem follows. 

\end{proof} 

\begin{theorem}\label{idaqcorrect}
IDAQ solves the $MOWSP$ problem.
\end{theorem}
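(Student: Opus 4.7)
The plan is to reduce this theorem almost entirely to Theorem \ref{optStatelemma} together with the defining behavior of the Build-Sets procedure. Inspection of Algorithm \ref{idaqAlgo} shows that IDAQ terminates on line 14 by returning $BuildSets(optimal\_paths, \Lambda, G)$. Thus the proof naturally splits into two steps: first, verify that the list $optimal\_paths$ passed into Build-Sets contains, for every pair $(\lambda_i, v) \in \Lambda \times V$, at least one path in $Path_{s,v}$ that is optimal with respect to $\lambda_i$; second, verify that Build-Sets, given such an input, emits precisely the family of sets $\{P_i\}_{i=1}^{K}$ required by the MOWSP problem (Definition \ref{mowsp}).

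For the first step, I would directly invoke Theorem \ref{optStatelemma}: it already guarantees that by the time IDAQ returns, an optimal $Path_{s,v}$ with respect to $\lambda_i$ has been developed. Combined with the observation that every developed path is appended to $optimal\_paths$ (line 17 of Algorithm \ref{idaqAlgo}), this immediately gives the desired containment. I would also briefly note—essentially a one-line induction on the path-extension step in lines 18--21, where $next\_p \gets Path(p, neighbor)$ concatenates a single outgoing edge to an already-valid path—that every element of $optimal\_paths$ is a legitimate element of some $Path_{s,v}$, so that the set $optimal\_paths \cap Path_{s,v}$ referenced by Build-Sets is well defined.

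For the second step, I would unpack the Build-Sets pseudo-code of Section \ref{buildSetSec}: for each $\lambda_i \in \Lambda$ and each $v \in V$, it selects the path $p_v \in optimal\_paths \cap Path_{s,v}$ of minimum $Cost(\cdot,\lambda_i)$ over that intersection. Let $p^*_{i,v}$ denote the optimal path furnished by step (i). Then on the one hand $Cost(p_v,\lambda_i) \leq Cost(p^*_{i,v},\lambda_i)$ because $p^*_{i,v}$ lies in the intersection over which Build-Sets minimizes, and on the other hand $Cost(p_v,\lambda_i) \geq Cost(p^*_{i,v},\lambda_i)$ because $p^*_{i,v}$ is optimal over all of $Path_{s,v}$. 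Equality therefore forces $p_v$ itself to be optimal with respect to $\lambda_i$, which is exactly the condition demanded by Definition \ref{mowsp}. Assembling these $p_v$'s across $v \in V$ gives a valid $P_i$ for each $i$, and assembling across $i$ gives the full MOWSP solution.

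I do not expect a genuine obstacle here: the conceptual work is done by Theorem \ref{optStatelemma}, and the remainder is a short sequence of definitional unpackings. The only point requiring any care is the minor bookkeeping observation that every entry of $optimal\_paths$ is a well-formed $s$-rooted path, so that the Build-Sets filtering step is meaningful; this is straightforward from the path-construction step of the algorithm and does not warrant more than a sentence.
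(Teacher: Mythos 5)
Your proposal is correct and follows essentially the same route as the paper's proof: both invoke Theorem~\ref{optStatelemma} to guarantee that a developed optimal path appears in $optimal\_paths$, and both then compare the cost of the Build-Sets output against that witness over $optimal\_paths \cap Path_{s,v}$. The only cosmetic difference is that you argue the optimality of $p_v$ directly via a two-sided inequality, whereas the paper frames the same inequality chain as a proof by contradiction.
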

\begin{proof}
Recall that the IDAQ algorithm returns the following set: $$optimal\_sets=Build-Sets(optimal\_paths,\Lambda,G)$$
The Build-Sets procedure described in Section \ref{buildSetSec} returns for each coefficient vector $\lambda_i \in \Lambda$, a set of paths
$$P_i=\{p_{v_1}, p_{v_2},...,p_{v_n}\}$$
where for each node $v_i \in V$:
$$p_{v_i} \in Path_{s,v_i}$$

Assume by way of contradiction that there exists a set in $optimal\_sets$, say $P_j$, and some node, say $v \in V$, such that a $p_v \in P_j$ is not an optimal $Path_{s,v}$ with respect to $\lambda_j$. 
\\

We shall denote the first developed optimal $Path_{s,v}$ with respect to $\lambda_j$ by $op_v$. Note that, according to Theorem \ref{optStatelemma}, an optimal $Path_{s,v}$ with respect to $\lambda_j$ has been developed by the time that the IDAQ algorithm returns, i.e., $op_v$ has been developed by that time. Recall that any path that has been developed is inserted into the $optimal\_paths$ list (lines 16-17 of Algorithm \ref{idaqAlgo}), hence $op_v$ is in the $optimal\_paths$ list. In other words,
$$op_v \in optimal\_paths\cap Path_{s,v}$$

According to the description of the Build-Sets procedure (Section \ref{buildSetSec}), for any paths set in $optimal\_sets$, say $P_j$, and for any node in $V$, say $v$, the following holds: $p_v\in P_j$, achieves the minimum value of $Cost(p,\lambda_j)$ among all paths $p \in optimal\_paths\cap Path_{s,v}$. \\
Since $op_v \in optimal\_paths\cap Path_{s,v}$ the following holds:
$$Cost(p_v,\lambda_j) \leq Cost(op_v,\lambda_j)$$
According to Definition \ref{opDef}, since $op_v$ is optimal with respect to $\lambda_j$, for any $p \in Path_{s,v}$ the following holds:
$$Cost(op_v,\lambda_j) \leq Cost(p,\lambda_j)$$
Hence,
$$Cost(p_v,\lambda_j) \leq Cost(op_v,\lambda_j) \leq Cost(p,\lambda_j)$$
Thus, $p_v$ is by definition an optimal $Path_{s,v}$ with respect to $\lambda_j$, which is a contradiction, hence establishing the theorem.
\end{proof}

\subsection{IDAQ Time Complexity Analysis}\label{IDAQComplexityAnalisys}
We turn to analyze the time complexity of the IDAQ algorithm. We divide IDAQ operations into four parts and analyze each separately.
\begin{enumerate}
    \item \label{Part1} Developing paths (theorem \ref{developTime}).
    \item \label{Part2} Scanning paths ( theorem \ref{P9}).
    \item \label{Part3} Adapting the $queue$ between iterations (theorem \ref{AdaptThoreme}). 
    \item \label{Part4} Executing the Build-Sets procedure. 
\end{enumerate}
The overall time complexity of IDAQ shall be established in Theorem \ref{IDAQcomplexity}.
\\


\subsubsection{Complexity Analysis Part \ref{Part1} - Developing Paths}\label{part1Analysis}
In this subsection we establish the time complexity analysis of developing paths in IDAQ (Part \ref{Part1}). The main motivation for IDAQ is to avoid developing paths that have already been developed in previous iterations, therefore we expect to get here a lower bound on the number of developed paths (and, as a result, a lower time complexity) than for the Standard Algorithm.
\\

First, we calculate the time complexity of developing paths (Theorem \ref{developTime}). We begin by introducing and proving several auxiliary lemmas.\\

\begin{lemma}\label{optimalStateNotNull}
Consider a first optimal path with respect to $\lambda_i\in \Lambda$, say $p \in Path_{s,v}$. In case $p$ is pushed to the queue during the $i$th iteration, the priority of each of the $queue$'s paths ending at node $v$ is set to null.
\end{lemma}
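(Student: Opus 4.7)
The plan is to trace through Definition \ref{stateCost} of Path Priority and the Push operation, exploiting the fact that $p$, being a first optimal path with respect to $\lambda_i$, has strictly smaller $\lambda_i$-cost than any other path in $Q$ ending at $v$.

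First, I would observe that, by the definition of a first optimal path, at the moment $p$ is scanned $Q_v$ contains no other optimal $Path_{s,v}$ with respect to $\lambda_i$. By Definition \ref{opDef}, every $b \in Q_v$ therefore satisfies $Cost(p, \lambda_i) \leq Cost(b, \lambda_i)$. Moreover, any such $b$ with $Cost(b, \lambda_i) = Cost(p, \lambda_i)$ would itself be optimal with respect to $\lambda_i$, contradicting the first-optimal hypothesis; so in fact $Cost(p, \lambda_i) < Cost(b, \lambda_i)$ strictly. This strict inequality continues to hold between scan time and the moment $p$ is actually pushed, since the Is-Relevant procedure only removes paths from $Q$ and never inserts one.

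Next, I would verify that, at push time, $p.Priority$ is set to $Cost(p, \lambda_{\Lambda\_index}) = Cost(p, \lambda_i)$ rather than null. Each of the three sub-cases that force null in Definition \ref{stateCost} requires a path $b \in Path_{s,v}$, either currently in the queue or previously popped, with $Cost(b, \lambda_i) \leq Cost(p, \lambda_i)$ (and an appropriate tie-breaking condition for equality). The strict inequality established above rules out all three sub-cases simultaneously, so $p.Priority$ is non-null.

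Finally, I would invoke the second clause of the Push operation (Section \ref{adaptedQueue}): when the priority of the incoming path is non-null, the procedure explicitly sets the priority of every path in $Q_v$ to null and removes each such path from the Priority Heap if needed. Since $Q_v$ here is precisely the set of the queue's paths ending at $v$, this yields exactly the conclusion of the lemma. The main subtle point, and the one I would flag most carefully, is the tie-breaking argument — it is tempting to appeal only to the strict optimality of $p$ versus the non-optimality of other paths, but one must separately rule out coincidental equal-cost paths in $Q_v$, which is where the observation that equal $\lambda_i$-cost would itself imply optimality becomes essential.
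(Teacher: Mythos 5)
Your proof is correct and takes essentially the same approach as the paper: both hinge on establishing the strict inequality $Cost(p,\lambda_i) < Cost(b,\lambda_i)$ for every path $b$ ending at $v$ that is in the queue or was popped, and then reading off the conclusion from Definition \ref{stateCost} and the Push operation. The paper simply asserts this strict inequality in one line, whereas you explicitly justify it by noting that an equal-cost path would itself be optimal and hence contradict the first-optimal hypothesis, and you also trace the non-nullity of $p.Priority$ and the second clause of Push in more detail; this is a more careful rendering of the same argument, not a different one.
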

\begin{proof}
Consider any path $b \in Path_{s,v}$.
Note that, since $p$ is first optimal with respect to $\lambda$, at the time $p$ is pushed to the queue, the following holds:
$$Cost(p,\lambda_i)<Cost(b,\lambda_i)$$
Hence, by the definition of path priority (Definition \ref{stateCost}), $b$'s priority is set to null, as required.

\end{proof}

In the following lemma we establish that the IDAQ algorithm does not develop non-optimal paths. This shall allow us to reach an upper bound on the number of developed paths.
\begin{lemma}\label{optStatetheorem}
For $i \in \{1 ,2,...,K\}$, a developed path in iteration $i$ is optimal with respect to $\lambda_i$.
\end{lemma}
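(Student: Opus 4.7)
The approach is proof by contradiction: assume that a path $p \in Path_{s,v}$ is developed during iteration $i$ but $Cost(p, \lambda_i) > Cost(p^*, \lambda_i)$ for some optimal $p^* \in Path_{s,v}$ with respect to $\lambda_i$. The overall strategy is to exhibit another path already sitting in the Priority Heap whose priority is strictly smaller than $p.Priority$ at the moment Pop is invoked, contradicting the fact that Pop returns a minimum-priority element.

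First I would establish that no optimal $Path_{s,v}$ with respect to $\lambda_i$ has been developed yet: any such path would reside in $optimal\_paths$ with cost $Cost(p^*, \lambda_i) < Cost(p, \lambda_i)$, and rule (a) of Definition \ref{stateCost} would force $p.Priority$ to be null. I would then walk along $p^*$, writing its node sequence as $s = v_0, v_1, \ldots, v_k = v$, and let $j^*$ be the smallest index for which no optimal $Path_{s,v_{j^*}}$ with respect to $\lambda_i$ has been developed by the time $p$ is popped. Such $j^*$ exists by the preceding observation and satisfies $j^* \geq 1$ because the trivial path $\{s\}$ is developed in iteration $1$. By minimality, some optimal $Path_{s,v_{j^*-1}}$ with respect to $\lambda_i$, call it $q$, has been developed, so at that moment the algorithm scanned the edge $(v_{j^*-1}, v_{j^*})$, producing a path $q' \in Path_{s,v_{j^*}}$. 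Additivity of $Cost$ together with the optimality of $q$ and of the ancestor of $p^*$ ending at $v_{j^*}$ (optimal by Lemma \ref{optimalStateAncestor}) shows that $q'$ is itself optimal at $v_{j^*}$ with respect to $\lambda_i$. Hence the first optimal $Path_{s,v_{j^*}}$ with respect to $\lambda_i$ has been scanned, and invoking Lemma \ref{optScanedInQueue} I can conclude that it currently resides in the $queue$ — the alternative, that the first developed optimal lies in $optimal\_paths$, is ruled out by the choice of $j^*$.

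Among all optimal $Path_{s,v_{j^*}}$'s with respect to $\lambda_i$ presently in the $queue$, I would let $b^*$ be the one that was pushed earliest. The crux is to verify that $b^*.Priority$ is non-null by excluding each of the three null-triggering conditions in Definition \ref{stateCost}: rule (a) fails because $Cost(p^*, \lambda_i)$ is the minimum cost at $v_{j^*}$; rule (b) fails by the minimality of the push-time of $b^*$ among optimal paths at $v_{j^*}$; and rule (c) would require a previously popped path at $v_{j^*}$ with cost equal to $Cost(p^*, \lambda_i)$, which would place an optimal path in $optimal\_paths$, contradicting the choice of $j^*$. Therefore $b^*.Priority = Cost(p^*, \lambda_i) < Cost(p, \lambda_i) = p.Priority$, so $b^*$ would be popped before $p$, which gives the desired contradiction.

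The main obstacle is precisely this priority-bookkeeping: the mere presence of an optimal path in the $queue$ is not enough, because subsequent pushes of equal-cost siblings can null out priorities via rule (b). Selecting the earliest-pushed optimal representative $b^*$ and ruling out rule (c) via the choice of $j^*$ — which requires tying the global property ``no optimal path has been developed at $v_{j^*}$'' to the local priority state — is what makes the argument go through.
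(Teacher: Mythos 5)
Your proof is correct and shares the paper's core idea — the frontier argument walking along an optimal path and exhibiting an optimal prefix-path in the queue whose priority is strictly smaller than $p.Priority$. What you do differently is the decomposition. The paper fixes $b$ to be the \emph{first} optimal $Path_{s,v}$ and splits into four explicit cases on whether and when $b$ and $p$ were pushed, disposing of three of them by showing $p.Priority$ must be null and reserving the frontier argument for the case ``$b$ not yet pushed.'' You collapse those three cases into a single preliminary observation (no optimal $Path_{s,v}$ has been developed, otherwise rule (a) of Definition \ref{stateCost} kills $p$'s priority) and then run the frontier argument along an arbitrary optimal $p^*$ — your choice of $j^*$ as the smallest index with no developed optimal is a mirror image of the paper's $v_e$, ``the last node with a developed optimal.'' Your route is cleaner: it avoids the four-way split, does not need to reason about $b$'s push-time relative to $p$'s, and it is more careful than the paper in explicitly ruling out each of the three null-triggering clauses for $b^*$ (the paper merely asserts ``according to Definition \ref{stateCost}, $g.Priority$ is not null''). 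The only thing to fix is a notational slip at the end: you write $b^*.Priority = Cost(p^*, \lambda_i)$, but $p^*$ ends at $v$, not at $v_{j^*}$; the correct chain is $b^*.Priority = Cost(b^*, \lambda_i) \leq Cost(p^*, \lambda_i) < Cost(p, \lambda_i) = p.Priority$ (the first inequality by non-negativity of edge costs), which does not affect the conclusion.
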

\begin{proof} 
Consider some iteration (see definition \ref{iteration}), say $i$, and a path $p \in Path_{s,v}$ that was developed during the $i$th iteration.
Assume by way of contradiction that $p$ is not optimal with respect to $\lambda$. In other words, there exists another path $b \in Path_{s,v}$ such that $b$ is first optimal with respect to $\lambda_i \in \Lambda$, i.e.:
$$Cost(b,\lambda i)<Cost(p,\lambda_i)$$
One of the following cases must hold:
\begin{enumerate}
    \item \label{optScenarioPrevIter2} $b$ and $p$ were pushed to the $queue$ during a previous iteration.
    \item \label{optScenarioPrevBCurrA} $b$ was pushed to the $queue$ before $p$; $p$ was pushed to the $queue$ during iteration $i$.
    \item \label{optScenarioPrevACurrB} $p$ was pushed to the $queue$ before $b$; $b$ was pushed to the $queue$ during iteration $i$.
    \item \label{optScenarioBNever} $b$ has yet to be pushed to the $queue$.
\end{enumerate}

Consider case \ref{optScenarioPrevIter2}. At the beginning of the $i$th iteration, $p$ is in the $queue$ and $b$ is in either the $queue$ or the $optimal\_paths$ list. We examine the priority calculation of $p$ by the $Adapt$ operation at the beginning of the $i$th iteration. According to Definition \ref{stateCost}, $p$'s priority is set to null, thus implying that it could not have been developed during the $i$th iteration.
\\

Consider case \ref{optScenarioPrevBCurrA}. According to Lemma \ref{optScanedInQueue}, during the Push of $p$ to the $queue$, either $b$ is in the $queue$ or the first developed optimal $Path_{s,v}$ is in the $optimal\_paths$ list. We examine the priority calculation of $p$ by the $Push$ operation. According to Definition \ref{stateCost}, $p$'s priority is set to null, thus implying that it could not have been developed during the $i$th iteration.
\\

Consider case \ref{optScenarioPrevACurrB}. $b$ is first optimal with respect to $\lambda_i$, therefore, at the time that $b$ is being pushed to the $queue$ (during iteration $i$), $b$.Priority is not null. According to Lemma \ref{optimalStateNotNull}, the priority of each of the $queue$'s paths ending at node $v$ is set to null, in particular, the priority of $p$, thus implying that $p$ could not have been developed during the $i$th iteration.
\\
Consider case \ref{optScenarioBNever}. Note that, Since $b$ is first optimal,
$$Cost(b,\lambda_i) < Cost(p,\lambda_i)$$

Consider the list of ordered nodes represented by path $b$:
$$b.Nodes=\{s,v_{b1}, v_{b2},... v\}$$

Let us define node $v_e$ as the last node in $b.Nodes$ for which an optimal $Path_{s,v_e}$ with respect to $\lambda_i$ has already been developed. We also denote by $e$ the first developed optimal $Path_{s,v_e}$ \footnote{Note that $e$ is not necessarily an ancestor  of $b$ (see Definition \ref{ancestor})}.
\\

We note that $v_e$ and $e$ must exist since the first developed path, which contains only the start node $s$ (in other words: $v_e=s$  and $e.Nodes=\{s\}$) is a first developed optimal $Path_{s,v_e}$ with respect to $\lambda_i$.
In addition, $v_e \neq v$ since otherwise an optimal $Path_{s,v}$ with respect to $\lambda_i$ has been developed, contrary to our assumption.
\\

Let $v_f$ be the node after $v_e$ on the path $b$. Let  $f \in Path_{s,v_f}$: $f$ be the path $e$ followed by the node $v_f$. 
\\
Consider the ancestor (see Definition \ref{ancestor}) of $b$ ending in $v_f$, namely $b_{f}$. According to Lemma \ref{optimalStateAncestor}, $b_{f}$ is optimal with respect to $\lambda_i$. From the additivity of $Cost$ and since $e$ is optimal with respect to $\lambda_i$ as well, we conclude that:
$$Cost(f,\lambda_i) = Cost(b_{f},\lambda_i)
$$
Hence, $f$ is optimal $Path_{s,v_f}$ with respect to $\lambda_i$.
\\
Recall that $v_e$ is followed by $v_f$ on the $b.Nodes$ list and that $v_e$ is the last node in the $b.Nodes$ list for which an optimal $Path_{s,v_e}$ with respect to $\lambda_i$ has already been developed, i.e., an optimal $Path_{s,v_f}$ with respect to $\lambda_i$ has not been developed.
\\

In addition, $f$ was scanned, since its ancestor, $e$, had been developed. Hence, either $f$ is first optimal $Path_{s,v_f}$, or the first optimal $Path_{s,v_f}$ is already in the $queue$; in either case, the first optimal $Path_{s,v_f}$ with respect to $\lambda_i$ must be in the $queue$. We shall denote such a path by $g$. According to Definition \ref{stateCost}, $g.Priority$ is not null. Additionally, we have:
$$g.Priority = Cost(g,\lambda_i)=Cost(f,\lambda_i) =$$
$$Cost(b_{f},\lambda_i)\leq Cost(b,\lambda_i) < Cost(p,\lambda_i)$$
i.e., $g.Priority < p.Priority$, in contradiction to $p$ being a path with minimal not null priority in the $queue$.
\\\\
To conclude, we have established that none of the scenarios is possible, in contradiction to our assumption.
\end{proof}

\begin{lemma}
\label{l1}
In each iteration, IDAQ develops each node at most once. 
\end{lemma}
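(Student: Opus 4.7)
The plan is to argue by contradiction: suppose some iteration $i$ develops two distinct paths $p_1, p_2 \in Path_{s,v}$ ending at the same node $v$. By Lemma \ref{optStatetheorem}, both are optimal with respect to $\lambda_i$, so $Cost(p_1,\lambda_i)=Cost(p_2,\lambda_i)$. Without loss of generality assume $p_1$ is popped before $p_2$; each is popped with non-null priority.

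The heart of the argument traces the priorities of $p_1$ and $p_2$ through the Push and Adapt operations of the Adaptive Queue. First, suppose $p_2$ was pushed to the $queue$ before $p_1$. Then at the instant $p_1$ is pushed, $p_2$ is still resident in the $queue$ (since $p_2$ is popped only after $p_1$) with the same $\lambda_i$-cost, so condition 1(b) of Definition \ref{stateCost} forces $p_1.Priority$ to be null. If $p_1$ was in fact pushed during an earlier iteration, then the Adapt operation at the beginning of iteration $i$ recomputes priorities against $\lambda_i$; since $p_2$ is still present and was pushed first, condition 1(b) continues to apply and $p_1.Priority$ remains null. Either way $p_1$ is never inserted into the Priority Heap during iteration $i$, contradicting the fact that it was popped.

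Next, suppose $p_2$ was pushed after $p_1$. At the moment of this push, $p_1$ is either still in the $queue$ or has already been moved to the $optimal\_paths$ list. In the former case condition 1(b) of Definition \ref{stateCost} (with $b=p_1$) sets $p_2.Priority$ to null; in the latter case condition 1(c) does. In both sub-cases, $p_2$ never enters the Priority Heap and cannot be popped during iteration $i$, once again a contradiction.

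The main obstacle is the declarative, state-dependent nature of Definition \ref{stateCost}: one must carefully verify that the Adaptive Queue operations (Push and, across iterations, Adapt) consistently implement the nullifications prescribed by conditions 1(b) and 1(c), so that the contradictions above are forced by the concrete algorithm rather than only by the abstract priority definition. Once this bookkeeping is in place, the two cases above exhaust all possibilities for the relative push order of $p_1$ and $p_2$, and the lemma follows.
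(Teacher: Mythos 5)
Your argument is essentially the paper's: both proofs reduce to Lemma~\ref{optStatetheorem} (the two developed paths have equal $\lambda_i$-cost) and then trace how the Push and Adapt operations enforce the nullifications of Definition~\ref{stateCost}. The decomposition differs in a way that is actually to your credit: the paper fixes $b$ as the path pushed first and splits on ``both pushed in a previous iteration'' versus ``$p$ pushed during iteration $i$,'' then in the second case appeals (somewhat awkwardly) to the Push operation's nullification of $Q_v$ even though it has just argued that $p.Priority$ is null --- a precondition under which the Push operation does not perform that nullification. You instead fix $p_1$ as the path popped first and split on push order, arguing directly that the later-popped path's priority must be null, which cleanly sidesteps that awkward step.

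There is, however, one small gap in your second case ($p_2$ pushed after $p_1$). You observe that at the moment $p_2$ is pushed, $p_1$ is either still in the queue or in $optimal\_paths$, and invoke condition 1(b) or 1(c). That only forces $p_2.Priority$ to be null for iteration $i$ if the push occurs while $\Lambda\_index = i$. If $p_2$ was pushed in an earlier iteration $j<i$ (in which case $p_1$ was too, and both remain in the queue until iteration $i$), then the priority assigned at push time is evaluated against $\lambda_j$, where the two costs need not coincide, so $p_2$ could legitimately have non-null priority and reside in the Priority Heap during iteration $j$. What rescues the argument --- and what you already deploy in the parallel sub-case of your first case --- is the Adapt operation at the start of iteration $i$: it reinitializes the Priority Heap and recomputes priorities against $\lambda_i$, at which point $p_1$ is still in the queue, was pushed before $p_2$, and has equal $\lambda_i$-cost, so condition 1(b) nullifies $p_2.Priority$ and $p_2$ is excluded from the fresh Priority Heap. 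Add that sentence and your case analysis is exhaustive and the proof complete.
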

\begin{proof}
Assume by a way of contradiction that there exists a node that was developed more than once, say during iteration $i$. Consider $b$ and $p$ as the first two $Path_{s,v}$s that were developed during the $i$th iteration. Without loss of generality, we assume that $b$ was pushed to the $queue$ before $p$.
\\

According to Lemma \ref{optStatetheorem}, both $p$ and $b$ have minimal cost with respect to $\lambda_i$ among all paths in $Path_{s,v}$, i.e.:
$$
Cost(p,\lambda_i) = Cost(b,\lambda_i).
$$
One of the following two cases must hold:
\begin{enumerate}
\item \label{6_c1} $b$ and $p$ were pushed to the $queue$ in a previous iteration.
\item \label{6_c2} $b$ was pushed to the $queue$ before $p$, while $p$ was pushed to the $queue$ during the $i$th iteration.
\end{enumerate}

Consider case \ref{6_c1}. At the beginning of the $i$th iteration, $p$ and $b$ are in the $queue$. We examine the priority calculation of $p$ by the $Adapt$ operation at the beginning of the $i$th iteration. According to definition \ref{stateCost}, $p$'s priority is set to null, thus implying that it could not have been developed during the $i$th iteration.
\\

Consider case \ref{6_c2}. Namely $b$ is in the $queue$ during the scan of $p$. Therefore, at the time that $p$ is being pushed to the $queue$, $p$.Priority is set to null (Definition \ref{stateCost}). According to the definition of the $Push$ operation, the priority of each of the $queue$'s paths ending at node $v$ is set to null, in particular, the priority of $b$, thus implying that $b$ could not have been developed during the $i$th iteration.
\\

To conclude, we have established that none of the scenarios is possible, in contradiction to our assumption.

\end{proof}
\begin{lemma}
\label{l2}
IDAQ Develops any path \footnote{Out of the set of all possible paths in $Path_{s}$} no more than once. 
\end{lemma}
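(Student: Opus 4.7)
My plan is to prove the lemma by strong induction on the number of edges of $p$, without invoking the priority rule at all. The single observation driving the argument is that the only way a path enters the $queue$, outside of the initialization at the start of IDAQ, is through the neighbor-scan that follows each Pop, and that scan always appends an edge and therefore strictly increases path length.

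For the base case, consider the zero-edge path $\{s\}$. It is pushed exactly once, when the Adaptive Queue is initialized with $s$; no later neighbor-scan can produce a zero-edge path. Hence $\{s\}$ can be popped, i.e.\ developed, at most once. For the inductive step, fix $p$ of length $k\geq 1$ and let $p'$ denote its unique immediate ancestor, obtained by deleting the last node of $p$ (see Definition \ref{ancestor}). By the inductive hypothesis $p'$ is developed at most once. Any instance of $p$ can only be produced inside the neighbor-scan triggered by the Pop of $p'$, and in that scan at most one generated neighbor-extension has node sequence equal to $p$ (namely the one along the edge from $p'$'s endpoint to $p$'s last node). Consequently $p$ is scanned at most once, Is-Relevant is called on $p$ at most once, and $p$ is Pushed into the $queue$ at most once. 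Since each Pop removes an element from the queue, $p$ can be popped---hence developed---no more often than it has been pushed, i.e.\ at most once.

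The main point to get right will be the bookkeeping in the inductive step: the target path $p$ is fixed in advance, and among all paths produced during the scan of $p'$ (one per outgoing edge of $p'$'s endpoint) exactly one has node sequence matching $p$, so that the ``at most one development of $p'$'' hypothesis translates cleanly into ``at most one push of $p$''. Once this is pinned down the induction closes immediately, and no facts about $\Lambda$, about Is-Relevant, or about the Adaptive Queue's priority mechanism are needed.
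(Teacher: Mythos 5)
Your proof is correct and takes a genuinely different route from the paper's. The paper argues by contradiction through the relevance machinery: if a second copy $b$ of a path $p$ were scanned, the first copy would already be in $Q$, so by Definition \ref{nonDomDef} $b$ would fail the Is-Relevant test and never be pushed. Your induction on path length bypasses the Is-Relevant procedure and the priority mechanism entirely, relying only on the structural observations that every push after initialization occurs inside the neighbor-scan triggered by a Pop of the pushed path's unique immediate ancestor, that such a scan forms a given target path at most once, and that Pop removes the element from the queue (so developments are bounded by pushes). This is arguably cleaner: it is indifferent to how equal cost vectors are treated under Pareto dominance and relevance, a point on which the paper's argument is somewhat delicate, since strict dominance as given in the paper's footnote does not let $p$ dominate an identical copy of itself, so the claim that the duplicate is rejected by Is-Relevant deserves a closer look at the procedure. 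The one thing you should make explicit rather than leave implicit is the behavior of the Adaptive Queue constructor in line 9 of Algorithm \ref{idaqAlgo}: your base case depends on it pushing exactly the zero-edge path $\{s\}$ and nothing else, which the paper does not spell out but is the only sensible reading.
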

\begin{proof}
Consider some node $v \in V$. By way of contradiction, let us assume that there exists a path $b \in Path_{s,v}$, which was pushed to the $queue$ after path $ p \in Path_{s,v}$ so that both $p$ and $b$ represent the same path ending at node $v$ (i.e. $p=b$) and have been developed at the time that the IDAQ algorithm returns.
\\

Consider the scan of $b$. Since $p$ was developed and $p$ was pushed to the $queue$ before $b$, $p$ is either in the $queue$ or in the $optimal\_paths$ list. Recall that $Q$ is defined as the group of paths that are either in the $queue$ or in the $optimal\_paths$ list (Definition \ref{Q}), i.e., $p \in Q$. According to Definition \ref{nonDomDef}, $b$ is not Relevant with respect to $Q$, hence, $p$ could not have been inserted to the $queue$ and cannot be developed, in contradiction to our assumption.
\end{proof}

\begin{lemma} \label{a1} The number of nodes that are not $L$-Nodes (see Definition \ref{lnode}) is bounded by $O(\log(|V|)$:
$$(1-\alpha) \cdot |V|=O(\log(|V|)$$
\end{lemma}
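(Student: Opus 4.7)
The plan is to derive the bound directly from the definition of $L$ (Definition \ref{L}), with essentially no extra work required. The key observation is that the constraint in the optimization problem defining $L$ already gives us the required inequality, after a trivial algebraic rearrangement.

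First, I would recall that $L$ was defined as the minimizer of $l$ subject to the constraint $\alpha(l) \geq 1 - \frac{\log(|V|)}{|V|}$. Since $L$ is a feasible solution of its own optimization problem, the constraint must be satisfied at $l = L$, i.e., $\alpha(L) \geq 1 - \frac{\log(|V|)}{|V|}$. Using the convention fixed in Subsection \ref{complexityDef} that $\alpha = \alpha(L)$ by default, this reads $\alpha \geq 1 - \frac{\log(|V|)}{|V|}$.

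Rearranging yields $1 - \alpha \leq \frac{\log(|V|)}{|V|}$, and multiplying both sides by $|V|$ gives
$$(1 - \alpha) \cdot |V| \leq \log(|V|),$$
which is exactly $(1 - \alpha) \cdot |V| = O(\log(|V|))$, as required. To complete the intuition, I would briefly note that $(1-\alpha) \cdot |V|$ is precisely the number of nodes in $V \setminus V_L$, i.e., the nodes that are not $L$-nodes, which follows from the definition of $\alpha$ as $|V_L|/|V|$.

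There is no real obstacle here: the lemma is a direct algebraic consequence of how $L$ was chosen, and its role is essentially to unpack the definition in a form that will be convenient for the subsequent complexity analysis. The only mild subtlety worth mentioning in passing is that the minimization in Definition \ref{L} is well-defined (e.g., $l = K$ trivially satisfies the constraint since every node in $V$ has at most some finite number of Pareto paths in a finite graph), so the value $L$ — and hence $\alpha$ — exists and the bound is meaningful.
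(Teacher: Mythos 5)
Your proof is correct and takes essentially the same approach as the paper's: both simply unpack the constraint in Definition \ref{L} to obtain $\alpha \geq 1 - \frac{\log(|V|)}{|V|}$ and rearrange to $(1-\alpha)\cdot|V| \leq \log(|V|)$. Your additional remarks on why the optimization is well-defined and on the identification of $(1-\alpha)\cdot|V|$ with $|V \setminus V_L|$ are harmless extras not present in the paper's proof.
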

\begin{proof}
According to Definition \ref{L}, $L$ is determined so that:
$$
\alpha \geq  1- \frac{\log(|V|)}{|V|} \iff 
$$
$$
(1-\alpha) \cdot |V| \leq \log(|V|)
$$
i.e:
$$
(1-\alpha) \cdot |V| = O(\log(|V|))
$$
thus the lemma follows.

\end{proof}

\begin{lemma}\label{lNodOpt}
Consider any $L$-node, say $v \in V_L$.  There are at most $L$ different $Path_{s,v}$'s that are optimal with respect to any coefficient vector in $\Lambda$.
\end{lemma}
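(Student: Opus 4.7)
The plan is to reduce the claim directly to two results already in hand: Lemma \ref{optIsPareto}, which says that any $Path_{s,v}$ that is optimal with respect to some coefficient vector $\lambda \in \Lambda$ must be Pareto non-dominated among $Path_{s,v}$, and the definition of an $L$-node (Definition \ref{lnode}), which bounds the number of Pareto non-dominated paths ending at such a node by $L$. The lemma then follows by a single containment argument: the set of ``$\lambda$-optimal'' $Path_{s,v}$ (taken over all $\lambda \in \Lambda$) sits inside the set of Pareto non-dominated $Path_{s,v}$.

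Concretely, I would first fix an arbitrary $L$-node $v \in V_L$ and let $\mathcal{O}_v \subseteq Path_{s,v}$ denote the set of paths that are optimal with respect to at least one coefficient vector in $\Lambda$. Pick any $p \in \mathcal{O}_v$ and let $\lambda \in \Lambda$ be a coefficient vector with respect to which $p$ is optimal. Apply Lemma \ref{optIsPareto} to conclude that $p$ is Pareto non-dominated among the paths of $Path_{s,v}$. Since $p$ was arbitrary, $\mathcal{O}_v$ is contained in the set of Pareto non-dominated $Path_{s,v}$. By Definition \ref{lnode}, this latter set has cardinality at most $L$, hence $|\mathcal{O}_v| \leq L$, which is exactly the statement of the lemma.

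There is essentially no obstacle here; the lemma is really a bookkeeping corollary of Lemma \ref{optIsPareto} and the definition of $L$-node. The only subtle point is being careful that ``different $Path_{s,v}$'s that are optimal with respect to any coefficient vector in $\Lambda$'' is interpreted as the union over $\lambda \in \Lambda$ of the sets of $\lambda$-optimal paths (rather than paths that are simultaneously optimal for every $\lambda$), since the former reading is what the surrounding analysis uses when bounding the per-node work of IDAQ. Under that reading the argument above is complete, and it provides the bound $L$ on the number of optimal $Path_{s,v}$'s at an $L$-node that will feed into the subsequent complexity analysis.
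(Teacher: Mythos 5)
Your proof is correct and follows exactly the same route as the paper: invoke Lemma \ref{optIsPareto} to place every $\lambda$-optimal $Path_{s,v}$ inside the set of Pareto non-dominated $Path_{s,v}$, then apply Definition \ref{lnode} to bound that set by $L$. The paper states this more tersely, but the argument is identical.
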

\begin{proof}
In Lemma \ref{optIsPareto} we have established that each optimal path is Pareto non-dominated. Consider some $L$-node, say $v \in V_L$. According to Definition \ref{lnode}, $v$ is at the end of at most $L$ different Pareto paths. Therefore, the number of optimal paths with respect to any coefficient vector is upper-bounded by $L$, as required.
\end{proof}

\begin{lemma}\label{devLNode}
The number of developed paths that end at an $L$-node is upper-bounded by:
$$|D_{L-nodes}| = O(|V_L| \cdot L) =O(|V| \cdot \alpha \cdot L )$$
\end{lemma}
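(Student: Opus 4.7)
The plan is to bound, for each $L$-node separately, the number of developed paths ending there, and then sum over all $L$-nodes. The key observation is that, by Lemma \ref{optStatetheorem}, every path developed during iteration $i$ is an optimal $Path_{s,v}$ with respect to $\lambda_i$. Therefore the set of all developed paths ending at any fixed node $v$ is a subset of the union, over $i \in \{1,\ldots,K\}$, of optimal $Path_{s,v}$'s with respect to $\lambda_i$.

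First, I would fix an arbitrary $L$-node $v \in V_L$ and let $D_v$ denote the set of developed paths ending at $v$. By Lemma \ref{optStatetheorem}, each $p \in D_v$ is optimal with respect to some $\lambda_i \in \Lambda$, so $D_v$ is contained in the set of paths from $s$ to $v$ that are optimal with respect to at least one coefficient vector in $\Lambda$. Lemma \ref{lNodOpt} tells us that this latter set has cardinality at most $L$, since $v$ is an $L$-node. Thus $|D_v| \leq L$.

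Next, I would note that Lemma \ref{l2} guarantees that no path is developed more than once, so the different elements of $D_v$ are genuinely distinct paths and no double-counting occurs when we union the $D_v$'s over $v \in V_L$. Hence
\[
|D_{L\text{-nodes}}| \;=\; \sum_{v \in V_L} |D_v| \;\leq\; \sum_{v \in V_L} L \;=\; |V_L| \cdot L,
\]
and substituting $|V_L| = \alpha \cdot |V|$ (from the definitions in Section~\ref{complexityDef}) yields $|D_{L\text{-nodes}}| = O(|V| \cdot \alpha \cdot L)$, as required.

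This proof is essentially a bookkeeping step that stitches together previously established facts; the only conceptual point that must be handled carefully is making sure that the bound ``at most $L$ optimal paths per $L$-node'' (Lemma \ref{lNodOpt}) is invoked at the right level of generality, namely over all $\lambda_i \in \Lambda$ simultaneously rather than per-iteration. Consequently, I do not anticipate a substantial obstacle; the main risk is a subtle miscount if one forgets that Lemma \ref{l1} bounds per-iteration developments at a node by one, while the relevant global bound across iterations actually comes from Lemmas \ref{optStatetheorem} and \ref{lNodOpt} rather than from Lemma \ref{l1}.
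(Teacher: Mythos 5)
Your proof matches the paper's own: both invoke Lemma~\ref{optStatetheorem} (every developed path is optimal with respect to some $\lambda_i$), Lemma~\ref{lNodOpt} (an $L$-node has at most $L$ optimal $Path_{s,v}$'s over all $\lambda_i \in \Lambda$), and Lemma~\ref{l2} (no path is developed twice), then sum the per-node bound $L$ over the $|V_L| = \alpha|V|$ $L$-nodes. One small wrinkle: the real role of Lemma~\ref{l2} is to ensure the count of develop \emph{events} at $v$ equals $|D_v|$ (it cannot exceed the number of distinct optimal paths), not to prevent double-counting across different nodes, which is automatic; but this does not affect the validity of the argument.
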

\begin{proof}

By definition (see Section \ref{complexityDef}):
$$\alpha= \frac{|V_L|}{|V|}$$
Hence, the number of $L$-nodes is given by: $$|V_L|=|V|  \cdot  \alpha$$
Recall that:
\begin{enumerate}
    \item Due to Lemma \ref{lNodOpt}, for any $L$-node, say $v \in V_L$, there are at most $L$ different $Path_{s,v}$'s that are optimal with respect to any coefficient vector in $\Lambda$.
    \item  Due to Lemma \ref{optStatetheorem}, each path developed by IDAQ is optimal.
    \item Due to Lemma \ref{l2}, during an execution of the IDAQ algorithm, a path is developed no more than once.
\end{enumerate}
Therefore, the number of developed paths that end at an $L$-node is upper-bounded by:
$$|D_{L-nodes}| = O(|V_L| \cdot L) =O(|V| \cdot \alpha \cdot L )$$,
as required.
\end{proof}

\begin{lemma}\label{devNotLNode}
The number of developed paths ending at a node that is not an $L$-node is upper-bounded by:
$$|D_{Not-L-nodes}| = O(|V 	\setminus V_L| \cdot K)=O(|V| \cdot (1-\alpha) \cdot K )$$
\end{lemma}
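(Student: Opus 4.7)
The plan is to prove this bound via a straightforward counting argument that combines Lemma \ref{l1} (each node is developed at most once per iteration) with the fact that the algorithm runs exactly $K$ iterations, indexed by the coefficient vectors in $\Lambda$. Concretely, for any single node $v \in V$, across the whole execution the number of developed paths ending at $v$ is at most $K$: in iteration $i$ at most one path ending at $v$ is popped from the $queue$, so summing over $i = 1, \dots, K$ gives the per-node bound of $K$.

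Given this per-node bound, I would finish by summing over the set of nodes that are not $L$-nodes. By definition this set has cardinality $|V \setminus V_L| = (1-\alpha)\cdot |V|$, so the total number of developed paths ending in non-$L$-nodes is at most
\begin{equation*}
|D_{\textit{Not-L-nodes}}| \leq |V \setminus V_L| \cdot K = (1-\alpha)\cdot |V| \cdot K,
\end{equation*}
which is exactly the claimed $O(|V \setminus V_L| \cdot K) = O(|V|\cdot(1-\alpha)\cdot K)$ upper bound.

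Unlike the companion Lemma \ref{devLNode}, there is no real obstacle here because we do not try to exploit any structural property of non-$L$-nodes (such as a small number of Pareto-optimal paths); instead, we simply use the worst-case per-iteration bound from Lemma \ref{l1}. The only subtle point to verify is that the quantity we are counting is indeed ``developed paths'' in the sense of Definition \ref{devPath}, and that Lemma \ref{l1} applies per iteration of IDAQ as defined in Definition \ref{iteration}. Both are immediate from the definitions, so the proof is essentially a one-line combination of Lemma \ref{l1} with the cardinality of $V \setminus V_L$.
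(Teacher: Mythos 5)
Your proposal is correct and matches the paper's proof essentially verbatim: both invoke Lemma \ref{l1} to get the per-node bound of $K$ developed paths across the $K$ iterations, and both multiply by $|V \setminus V_L| = (1-\alpha)\cdot|V|$ to conclude.
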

\begin{proof}
By definition (see Section \ref{complexityDef}):
$$\alpha= \frac{|V_L|}{|V|}  \iff |V_L|=|V|  \cdot  \alpha$$
$$ \iff |V| - |V_L|=|V|  \cdot  (1-\alpha)$$
Since $V_L \subseteq V$, we conclude that the number of non-$L$-nodes is given by: 
$$|V \setminus V_L|=|V|  \cdot  (1-\alpha)$$ 

In Lemma \ref{l1} we have established that, at each iteration, IDAQ develops each node at most once. Recall that the number of iterations is given by $K$, i.e., the number of developed paths ending at any node is upper-bounded by $K$. Hence the number of developed paths ending at a node that is not an $L$-node is upper-bounded by:
$$|D_{Not-L-nodes}| =  O(|V \setminus V_L| \cdot K)=O(|V| \cdot (1-\alpha) \cdot K )$$,
as required.
\end{proof}

Finally, we employ lemmas \ref{devLNode} and \ref{devNotLNode} to get the total time complexity of developing nodes in IDAQ, which is presented by the following theorem.
\\
\begin{theorem}
\label{developTime}
The time complexity of developing paths in IDAQ (part \ref{Part1}) is \footnote{Recall that, according to Assumption \ref{A6}, $F(|V|)$ is a sub-logarithmic fun}: 
\\
$$
O(L \cdot |V| \cdot \log(|V|)) =
O(F(|V|) \cdot |V| \cdot \log(|V|))
$$
\end{theorem}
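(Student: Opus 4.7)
The plan is to bound the total work done by IDAQ on pop operations, by first bounding the total number of developed paths and then multiplying by the per-pop cost of the Fibonacci-heap-backed Priority Heap.

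First, I would split the total count of developed paths according to whether the ending node is an $L$-node or not, and apply the two bounds already proved. By Lemma \ref{devLNode}, the number of developed paths ending at $L$-nodes is $O(|V| \cdot \alpha \cdot L)$, and since $\alpha \le 1$ this simplifies to $O(|V| \cdot L)$. By Lemma \ref{devNotLNode}, the number of developed paths ending at non-$L$-nodes is $O(|V| \cdot (1-\alpha) \cdot K)$; applying Lemma \ref{a1} to replace $(1-\alpha)\cdot|V|$ with $O(\log|V|)$, and Assumption \ref{a3} to bound $K = O(\sqrt{|V|})$, yields $O(\sqrt{|V|} \cdot \log|V|)$ for the non-$L$-node contribution.

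Next I would compare the two terms. Because $L \ge 1$ and $|V| \cdot L \ge |V| \ge \sqrt{|V|} \cdot \log|V|$ for all sufficiently large $|V|$, the $L$-node contribution dominates, giving a total developed-path count of $O(|V| \cdot L)$. Each development corresponds to a single \emph{Pop} from the Priority Heap, which is implemented as a Fibonacci heap; its amortized cost per pop is $O(\log n)$ in the heap size $n$. I would argue that $n$ remains polynomial in $|V|$ throughout the execution: the number of paths ever pushed is at most the total number of scans, which (by Assumption \ref{a2}) is bounded by $O(\mathrm{MaxDeg}_+)$ times the number of developed paths, hence polynomial in $|V|$. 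Consequently every pop costs $O(\log|V|)$.

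Multiplying the two bounds gives total pop work $O(|V| \cdot L \cdot \log|V|)$, and an application of Assumption \ref{A6} rewrites this as $O(F(|V|) \cdot |V| \cdot \log|V|)$, matching the claimed expression. The step I expect to require the most care is the justification that the Priority Heap size stays polynomial in $|V|$, since pushes happen during scans rather than developments and are not directly counted by the lemmas above; here I would lean on Assumption \ref{a2} to control the scan-fanout per developed path and on the already-established bound on developed paths to close the loop.
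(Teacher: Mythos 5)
Your decomposition of the developed-path count and the use of Lemmas \ref{devLNode}, \ref{devNotLNode}, \ref{a1} and Assumption \ref{a3} to conclude $|P_{Developed}|=O(|V|\cdot L)$ match the paper's proof exactly. The only place you diverge is in bounding the per-pop cost, and it is worth flagging a small gap there. You bound the Priority Heap size by arguing that the total number of pushes is at most the total number of scans, hence polynomial. But in the paper's Adaptive Queue, paths are inserted into the Priority Heap not only by \textit{Push} during scans but also by \textit{Adapt} at the start of each iteration (which reinitializes the heap and reinserts every path with currently non-null priority); those insertions are not counted by ``scans.'' They are bounded by $K\cdot|V|$ and hence still polynomial, so your conclusion $O(\log|V|)$ per pop is safe, but as written your accounting omits a source of insertions. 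The paper instead uses the invariant implied by Definition \ref{stateCost}: at any moment the Priority Heap contains at most one path per node (since only one $Path_{s,v}$ has non-null priority at a time), giving a direct bound of $|V|$ on the heap size. That argument is both tighter and immune to the Adapt-insertion issue, and it is the cleaner way to close the step you yourself flag as needing the most care.
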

\begin{proof}\

In Lemma \ref{devLNode}, we have established that the number of developed paths that end at an \textbf{$L$-node} is upper-bounded by:
$$|D_{L-nodes}| =  O(|V| \cdot \alpha \cdot L )$$
In Lemma \ref{devNotLNode}, we have demonstrated that the number of developed paths ending at a \textbf{node that is not an $L$-node} is upper-bounded by:
$$|D_{Not-L-nodes}| = O(|V| \cdot (1-\alpha) \cdot K )$$
Obviously, each path in $Path_{s}$ must end either at an $L$-node or at a node that is not an $L$-node, hence the number of developed paths is upper-bounded by $|P_{Developed}|$, where:
$$
 |P_{Developed}|=|D_{Not-L-nodes}|+|D_{L-nodes}|= $$
$$
 \underbrace{|V| \cdot (1-\alpha) \cdot K }_\text{(I)}+ \underbrace{|V| \cdot \alpha \cdot L}_\text{(II)}=(I)+(II)
$$
and:
$$
(I) =|V| \cdot (1-\alpha) \cdot K
$$
$$
(II)=|V| \cdot \alpha \cdot L
$$

Using Lemma \ref{a1} and Assumption \ref{a3} we conclude that:
$$
(I)=K \cdot O(\log(|V|)) = O(\sqrt{|V|} \cdot \log(|V|)).
$$
Since $\alpha < 1$ and due to Assumption \ref{A6} we can conclude that:
$$
(II) = O(L \cdot |V|) = O(F(|V|) \cdot |V|)
$$
We thus get:
$$
|P_{Developed}|=O(L \cdot |V|)=O(F(|V|) \cdot |V|).
$$
Note that each developed path has to be $popped$ out of the $queue$ as a path with minimal non-null priority.
\\
From Definition \ref{stateCost} we conclude that, for each node, say $v \in V$, the  $queue$ contains no more then a single $Path_{s,v}$ with non-null priority. Hence, the total number of paths with non-null priority in the $queue$, which is equal to  the number of paths in the Priority Heap (definition \ref{priorityHeap}), is upper-bounded by $|V|$. 
\\
The $pop$ operation (described in Section \ref{adaptedQueue}) consists of finding a path with minimal priority in the Priority Heap (which is an instance of the Fibonacci Heap data structure).  Therefore, a single $pop$ is done in $O(\log(|V|))$ \cite{fibQueueDijkstraComplexity}. 
\\

We thus conclude that the time complexity of developing paths in IDAQ is:
$$
O(|P_{Developed}| \cdot \log(|V|))=
$$$$
O(L \cdot |V| \cdot \log(|V|)=
O(F(|V|) \cdot \log{|V|} \cdot |V|)
$$
\end{proof}
\subsubsection{Complexity Analysis Part \ref{Part2} - Scanning paths}\label{part2Analysis}
We proceed to analyze the time complexity of scanning paths in IDAQ (part \ref{Part2}).  IDAQ checks whether a scanned path $p \in Path_s$ is potentially optimal (by the Is-Relevant procedure defined in Section \ref{isRelAlgoSec}). In that case, IDAQ pushes $p$ to the $queue$, which make it unnecessary for $p$'s ancestor to be developed again at a future iteration. We shall show that, despite the fact that a single IDAQ scan is more time-consuming than a scan of the Standard Algorithm, the total time complexity of the scans is lower for IDAQ.
\\

We begin by establishing the following lemma.
\begin{lemma}\label{eq8}
The following equation holds:
\begin{equation}\label{OL}
    I \cdot (1-\gamma)= O(L)
\end{equation}
\end{lemma}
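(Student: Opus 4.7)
The plan is to bound the two factors $I$ and $(1-\gamma)$ separately using the previously established assumptions and the structural lemmas on $L$-nodes.

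First, I would unpack the quantity $(1-\gamma)$. By definition, $1-\gamma = \frac{|E \setminus E_L|}{|E|}$, where $E \setminus E_L$ is precisely the set of outgoing edges from nodes that are \emph{not} $L$-nodes. Lemma \ref{a1} gives $|V \setminus V_L| = O(\log|V|)$, and Assumption \ref{a2} gives $MaxDeg_+ = O(\log|V|)$. Multiplying these two bounds yields $|E \setminus E_L| = O(\log^2|V|)$. Since the MOG is connected (Definition \ref{MOG}), we have $|E| \geq |V|-1 = \Omega(|V|)$, so
\[
1-\gamma = O\!\left(\frac{\log^2|V|}{|V|}\right).
\]

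Second, I would bound $I = K - L \leq K$. Assumption \ref{a3} gives $K = O(\sqrt{|V|})$, hence $I = O(\sqrt{|V|})$.

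Putting the two bounds together,
\[
I \cdot (1-\gamma) = O\!\left(\sqrt{|V|} \cdot \frac{\log^2|V|}{|V|}\right) = O\!\left(\frac{\log^2|V|}{\sqrt{|V|}}\right) = o(1).
\]
Since $L \geq 1$ (every reachable node admits at least one Pareto non-dominated path from $s$, so even an $L$-node sits at the end of at least one Pareto path), this vanishing quantity is in particular $O(L)$, completing the proof.

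The main thing to be careful about is the justification that $|E| = \Omega(|V|)$, which relies on the connectedness stipulated in Definition \ref{MOG}, and the tacit fact that $L \geq 1$. Both are mild, but they are the only places where the argument could slip; the rest is a direct substitution of the assumptions and the $L$-node lemma. No delicate combinatorial argument is needed here — the lemma really is just a bookkeeping consequence of sparsity (Assumption \ref{a2}), the $L$-node density bound (Definition \ref{L} via Lemma \ref{a1}), and the size bound on $K$ (Assumption \ref{a3}).
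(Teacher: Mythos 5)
Your proof is correct, and it is a more direct and transparent argument than the one in the paper, though both exploit the same two ingredients: the bound $|E \setminus E_L| = |E|\cdot(1-\gamma) = O(\log^2|V|)$ obtained from Lemma~\ref{a1} together with Assumption~\ref{a2}, and the bound $K = O(\sqrt{|V|})$ from Assumption~\ref{a3}. Where you diverge is in how the pieces are combined. The paper works backward from the target identity through a chain of reformulations: it replaces $I$ by $K$, multiplies both sides by $|E|$, substitutes $|E| = O(|V|\cdot MaxDeg_+)$ on the right to arrive at the intermediate target $|E|\cdot(1-\gamma) = O(\sqrt{|V|}\cdot\log|V|\cdot F(|V|))$, and finally verifies that $O(\log^2|V|)$ is subsumed by that bound. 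You instead bound the two factors in isolation: $1-\gamma = O(\log^2|V|/|V|)$ via $|E| = \Omega(|V|)$, and $I \le K = O(\sqrt{|V|})$, multiply, and observe the product is $o(1) \subseteq O(L)$ since $L \ge 1$. Your version buys two things. First, it surfaces a dependence that the paper leaves implicit: the lemma genuinely needs $|E| = \Omega(|V|)$ (otherwise $1-\gamma$ could fail to be small enough), and you correctly trace this to connectedness in Definition~\ref{MOG}, whereas the paper's chain uses only the \emph{upper} bound $|E| = O(|V|\cdot MaxDeg_+)$, leaving the direction of the final comparison to a target involving $L\cdot|E|/K$ somewhat delicate to check. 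Second, your argument yields the stronger conclusion $I\cdot(1-\gamma) = o(1)$, which makes it immediate that $O(L)$ holds with slack to spare. The one hygiene point worth flagging: you should say explicitly that $L \ge 1$ because $\alpha(0) = 0$ whenever $|V| \ge 2$ (every node reachable from $s$ is terminal for at least one Pareto path), so the constrained minimization in Definition~\ref{L} cannot return $0$; you gesture at this but it is the only load-bearing step not directly cited from a lemma or assumption.
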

\begin{proof}
Since $I=K-L$, we have:
$$
I \cdot (1-\gamma)=O(L) \iff
$$
$$
O(K) \cdot (1-\gamma)=O(L) \iff
$$
$$
1-\gamma= \frac{O(L)}{O(K)} \iff
$$
$$
    |E| \cdot (1-\gamma)=\frac{O(L \cdot |E|)}{O(K)}
$$
To sum up, Equation \ref{OL} holds if the following Equation \ref{NLG} holds:
\begin{equation}
    \label{NLG}
    |E| \cdot (1-\gamma)=\frac{O(L \cdot |E|)}{O(K)}
\end{equation}

Since $E=|V| \cdot D=O(|V| \cdot MaxDeg_{+})$, where $MaxDeg_{+}$ is defined as the maximum number of edges emanating from a single node, we conclude that:
$$
\frac{O(L \cdot |E|)}{O(K)}=\frac{O(L \cdot |V| \cdot MaxDeg_{+})}{O(K)}
$$
From Assumption \ref{a3} we conclude that:
$$
=\frac{O(L) \cdot O(|V| \cdot MaxDeg_{+}))}{O(\sqrt{|V|})}
$$
$$
=\frac{O(L) \cdot O(|V|) \cdot O(MaxDeg_{+})))}{O(\sqrt{|V|})}
$$
$$
=O(L) \cdot O(\sqrt{|V|})) \cdot O(MaxDeg_{+})
$$
From assumptions \ref{a2} and \ref{A6} we conclude that:
$$
=O(F(|V|) \cdot O(\sqrt{|V|})) \cdot O(\log(|V|)))
$$
$$
=O(F(|V|) \cdot \sqrt{|V|}) \cdot \log(|V|))
$$
$$
=O(\sqrt{|V|} \cdot \log(|V|) \cdot F(|V|)) 
$$
Hence, Equation \ref{NLG}\ holds if the following Equation \ref{EGV} holds:
\begin{equation}
    \label{EGV}
    |E| \cdot (1-\gamma)=O(\sqrt{|V|} \cdot \log(|V|) \cdot F(|V|)) 
\end{equation}
Note that the number of edges emanating from nodes that are not $L$-nodes is upper-bounded by the number of such nodes times the maximum nodal degree, namely:
$$
|E| \cdot (1-\gamma)=(|V|-|V_{L}|) \cdot O(MaxDeg_{+})
$$
$$
=(1-\alpha) \cdot |V| \cdot O(MaxDeg_{+})=O((1-\alpha) \cdot |V| \cdot MaxDeg_{+})
$$
Using Lemma \ref{a1} and Assumption \ref{a2} we conclude that:
$$
|E| \cdot (1-\gamma)=O(\log(|V|) \cdot \log(|V|))=O(\log^2(|V|))
$$
Hence, Equation \ref{EGV} holds, thus implying that Equation \ref{OL} holds, and the lemma follows.
\end{proof}
Next, we employ lemmas \ref{devLNode}, \ref{devNotLNode} and \ref{eq8} to establish an upper-bound on the total number scans performed during an execution of the IDAQ algorithm.

\begin{theorem}
 \label{P6}
 An upper-bound of the number of scans performed during an execution of the IDAQ algorithm is given by $|P_{Scanned}|$, where:
 \begin{equation}
     |P_{Scanned}| = O(|E| \cdot L).
 \end{equation}
\end{theorem}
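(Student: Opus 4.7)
The plan is to bound the scans by summing, over every developed path, the out-degree of its endpoint, and then split that sum according to whether the endpoint is an $L$-node or not. Concretely, whenever IDAQ develops a path ending at some node $v$, the algorithm iterates over all outgoing edges of $v$ (the loop in line 18 of Algorithm \ref{idaqAlgo}), each iteration producing exactly one scanned path. Hence
\[
|P_{Scanned}| \;=\; \sum_{p \text{ developed}} deg_+(\text{endpoint of } p)
\;=\; \sum_{v \in V_L} d_v \cdot deg_+(v) \;+\; \sum_{v \in V \setminus V_L} d_v \cdot deg_+(v),
\]
where $d_v$ denotes the number of paths developed by IDAQ that end at $v$.

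The key step is to upper-bound $d_v$ in each of the two regimes. For an $L$-node $v$, Lemmas \ref{optStatetheorem}, \ref{optIsPareto}, and \ref{l2} together force every developed path to be an optimal, Pareto non-dominated path, and Lemma \ref{lNodOpt} tells us there are at most $L$ of those. Therefore $d_v \leq L$ for $v \in V_L$, and the first sum is at most $L \cdot \sum_{v \in V_L} deg_+(v) \;=\; L \cdot |E_L| \;=\; L \cdot \gamma \cdot |E|$. For a non-$L$-node $v$, Lemma \ref{l1} gives $d_v \leq K$ (at most one developed path per iteration, and there are $K$ iterations), so the second sum is at most $K \cdot \sum_{v \notin V_L} deg_+(v) \;=\; K \cdot (|E| - |E_L|) \;=\; K \cdot (1-\gamma) \cdot |E|$.

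The main obstacle is to show the non-$L$-node contribution is already $O(L \cdot |E|)$, since a priori $K$ can be substantially larger than $L$. This is precisely what Lemma \ref{eq8} is designed for. Writing $K = I + L$, one has
\[
K \cdot (1-\gamma) \;=\; I \cdot (1-\gamma) + L \cdot (1-\gamma) \;\leq\; I \cdot (1-\gamma) + L \;=\; O(L),
\]
where the last equality invokes Lemma \ref{eq8}. Plugging this back yields $K \cdot (1-\gamma) \cdot |E| = O(L \cdot |E|)$.

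Adding the two contributions gives $|P_{Scanned}| = O(L \cdot \gamma \cdot |E|) + O(L \cdot |E|) = O(L \cdot |E|)$, as required. The only subtlety worth double-checking is that the counting of scans truly assigns $deg_+(v)$ work to each development of $v$, regardless of how many of those scanned neighbors are subsequently pushed to the queue; inspecting lines 18--26 of Algorithm \ref{idaqAlgo} confirms this, since the loop always iterates over all neighbors of \textit{p.Node}.
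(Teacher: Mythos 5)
Your proof is correct and follows essentially the same approach as the paper: split the scans by whether the endpoint is an $L$-node, bound developed paths per node using Lemmas \ref{lNodOpt} and \ref{l1}, multiply through by out-degrees to get $L\cdot\gamma\cdot|E| + K\cdot(1-\gamma)\cdot|E|$, and invoke Lemma \ref{eq8} to absorb the second term into $O(L\cdot|E|)$. Your per-node accounting ($d_v\cdot deg_+(v)$ summed over $v$) is slightly more careful than the paper's step from total development counts to edge counts, but the decomposition and the key use of Lemma \ref{eq8} are identical.
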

\begin{proof}
Note that a path that was developed causes all of its neighbors to be scanned. Hence, the  number of scanned paths equals the number of outgoing edges from each developed path.\\

Recall that $E_{L}$ is defined as the set of all outgoing edges from $V_L$ (see Section \ref{complexityDef}). In Lemma \ref{devLNode}, we have established that the number of developed paths ending at an $L$-node is upper-bounded by:
$$|D_{L-nodes}| = O(|V_L| \cdot L) $$
Hence, the number of scans of paths that are neighbors of an $L$-nodes is given by:
$$
|S_{L-nodes}| = O(  |E_L| \cdot L )
$$

Recall that the set of nodes that are not $L$-nodes is given by $V \setminus V_L$. Since the set of edges is given by $E$, and each node is either an $L$-node or a non-$L$-node, the set of all outgoing edges from a $V \setminus V_L$ is given by $ E \setminus E_{L} $. In Lemma \ref{devNotLNode} we have established that the number of developed paths ending at a node that is not an $L$-node is upper-bounded by:
$$|D_{Not-L-nodes}| = O(|V \setminus V_L| \cdot K)$$
Hence, the number of scans of paths that are neighbors of non-$L$-nodes is given by:
$$
|S_{Not-L-nodes}| = O(|E \setminus E_L| \cdot K)
$$

Thus, the total number of scans is given by $|P_{Scanned}|$, where:
$$
|P_{Scanned}| = |S_{L-nodes}| + |S_{Not-L-nodes}| 
$$
$$
=   O(|E_L| \cdot L + |E \setminus E_L| \cdot K)
$$
$$
=L \cdot |E_L| + (|E|-|E_L|) \cdot K 
$$
$$
= |E| \cdot (L \cdot \frac{|E_L|}{|E|} + K \cdot (1-\frac{|E_L|}{|E|}))
$$
From the definition of $\gamma$ (Section \ref{complexityDef}), we conclude that:
$$
|P_{Scanned}| =|E| \cdot (L \cdot \gamma + K \cdot (1-\gamma))
$$
Since $K=L+I$:
$$
|P_{Scanned}| =|E| \cdot (L \cdot \gamma + L + I -L \cdot \gamma - I \cdot \gamma )
$$
$$
=|E| \cdot ( L + I \cdot (1 -  \gamma ))
$$
In Lemma \ref{eq8} we have established that:
$$
 I \cdot (1-\gamma)= O(L),
$$
hence, 
$$
|P_{Scanned}|= |E| \cdot (L+O(L))=|E| \cdot O(L)= O(|E| \cdot L),
$$
as required.

\end{proof}
We proceed to calculate (in Lemma \ref{P5}) the time complexity of scanning a single path ending at an $L$-node. We begin by establishing an auxiliary lemma.
\\

For the following lemmas we consider the set of paths $Q$ (definition \ref{Q}), and the subset of paths $Q_v \subseteq Q$ (definition \ref{Qv}). 

\begin{lemma}
\label{P4}
For any $L$-node $v \in V_L$, $Q$ contains no more than $N_{L}$ paths ending at $v$. 
$$
|Q_v| \leq N_L
$$
\end{lemma}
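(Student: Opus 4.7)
The plan is to bound $|Q_v|$ by showing that every path ever placed in $Q_v$ must be Pareto non-dominated by the single path $pareto\_sample[v]$, and then to invoke the definition of $N_L$.

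First I would recall, via Lemma \ref{paretoSample}, that $pareto\_sample[v]$ is a Pareto non-dominated path ending at $v$, and in particular it is a Pareto path to $v$ in the sense used by Definition \ref{lnode} and the definition of $N_L$ in Subsection \ref{complexityDef}. Next, I would inspect lines 20--25 of Algorithm \ref{idaqAlgo}: whenever IDAQ considers inserting a candidate path $next\_p \in Path_{s,v}$ into the $queue$, it first checks whether $pareto\_sample[v]$ Pareto-dominates $next\_p$, and if so it executes \textit{Continue} and never pushes $next\_p$. Consequently, every path that is ever pushed to the $queue$ (and a fortiori every path that later migrates from the $queue$ to the $optimal\_paths$ list via a $Pop$) is Pareto non-dominated by $pareto\_sample[v]$.

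Since $Q$ is defined (Definition \ref{Q}) as the union of the current $queue$ and $optimal\_paths$ list, every path in $Q_v$ is a path in $Path_{s,v}$ that was inserted into the $queue$ at some earlier point, and hence is Pareto non-dominated by $pareto\_sample[v]$. By hypothesis $v \in V_L$, i.e.\ $v$ is an $L$-node, so the definition of $N_L$ (the maximum number of paths non-dominated by a single Pareto path to an $L$-node) immediately yields
$$
|Q_v| \leq N_L,
$$
as claimed.

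I do not expect any real obstacle here: the substantive content is entirely packed into Lemma \ref{paretoSample} and into the explicit Pareto-dominance guard on line 23 of the pseudo-code. The only thing to be mildly careful about is that removals performed by Is-Relevant (line 8 and line 26 of the Is-Relevant procedure) can only shrink $Q_v$, so the bound remains valid throughout the execution; this is a one-line observation rather than a genuine difficulty.
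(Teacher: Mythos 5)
Your proof is correct and follows essentially the same route as the paper's: both invoke Lemma \ref{paretoSample} to obtain a Pareto non-dominated path $pareto\_sample[v]$, use the Pareto-dominance guard in Algorithm \ref{idaqAlgo} to conclude that every path entering the $queue$ (and hence every path in $Q_v$) is non-dominated by $pareto\_sample[v]$, and then appeal to the bound $N_L$. A minor stylistic point in your favor: you correctly derive the final bound directly from the \emph{definition} of $N_L$, whereas the paper cites Assumption \ref{A5_1} (which only asserts $N_L = O(L)$) for a fact that really follows from the definition alone.
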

\begin{proof}
According to Lemma \ref{paretoSample}, after initialization, the $pareto\_sample$ list contains a Pareto non-dominated path for each $L$-node. Consider such a Pareto non-dominated path $p \in Path_{s,v}$ where $v \in V_{L}$. According to assumption \ref{A5_1}, there exist at most $N_{L}$ paths ending at $v$ that are non-dominated by $p$.\\
A path that is pushed to the $queue$ must be non-dominated by $p$ (lines 21-22 of Algorithm \ref{idaqAlgo}), thus implying that there exist at most $N_{L}$ paths ending at node $v$ that can be pushed to the $queue$ during an execution of the IDAQ algorithm. Only paths that were pushed to the $queue$ can be inserted to the $optimal\_paths$ list, thus the lemma follows.
\end{proof}

\begin{lemma}
\label{P5}
Scanning a single path that ends at an $L$-node is done in $O(L)$.
\end{lemma}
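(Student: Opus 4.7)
The plan is to trace the work done by a single scan when the scanned path $p$ ends at an $L$-node $v$, and to show that each step runs in $O(L)$ under the stated assumptions. A scan consists of (i) the cheap Pareto check against $pareto\_sample[v]$, which is $O(W) = O(1)$ by Assumption~\ref{a0}, and (ii) the call to Is-Relevant on $p$ with respect to $Q$, together with the constant-time bookkeeping needed to push $p$ into the queue if it is declared relevant. The heart of the argument is therefore to bound the cost of Is-Relevant applied to a path ending at an $L$-node.

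First, I would invoke Lemma~\ref{P4} to conclude that $|Q_v| \leq N_L$. Combining this with Assumption~\ref{A5}, namely $N_L \leq K/W$, I get that the test in line 4 of the Is-Relevant pseudocode is satisfied (or, in the boundary case $|Q_v| = K/W$, the alternative branch can be handled with the same bound, so without loss of generality the procedure enters the ``condition~\ref{pa}'' branch). In this branch the work is a Pareto non-dominance check of $p$ against the at most $|Q_v|$ paths of $Q_v$, followed by the removal of those paths in $Q_v$ that are dominated by $p$. Both tasks are proportional to $|Q_v| \cdot W$, as explained in the footnote accompanying condition~\ref{pa}.

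The final step is a direct substitution: by Lemma~\ref{P4} and Assumption~\ref{A5_1} we have $|Q_v| \leq N_L = O(L)$, and by Assumption~\ref{a0} we have $W = O(1)$, so the Is-Relevant call costs $O(|Q_v| \cdot W) = O(L)$. Adding the $O(1)$ contribution of the $pareto\_sample$ filter yields an overall $O(L)$ bound for a single scan, as required.

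The only mild subtlety, and the step I would be most careful about, is justifying that taking the Pareto branch is indeed legitimate for every scan that reaches an $L$-node: one must ensure that Assumption~\ref{A5} is actually applied to the $Q_v$ at the moment of the scan, not just to the final $Q_v$. This is immediate because Lemma~\ref{P4} bounds $|Q_v|$ uniformly throughout the execution (since $Q$ only grows with paths that pass the $pareto\_sample[v]$ filter, and the bound $N_L$ applies to any such collection), so the branch condition holds at every scan of an $L$-node and the $O(L)$ bound is uniform.
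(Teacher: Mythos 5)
Your proof follows essentially the same route as the paper: bound $|Q_v|$ via Lemma~\ref{P4} and Assumption~\ref{A5_1}, argue that the Pareto branch of Is-Relevant is taken, and conclude the dominance checks cost $O(|Q_v| \cdot W) = O(L)$. The only imprecision is calling the Push operation ``constant-time bookkeeping'' --- the paper shows that updating priorities of the queue's paths ending at $v$ in the Push actually costs $O(L)$ (it must scan $Q_v$), though since Is-Relevant already contributes $O(L)$ the final bound is unaffected.
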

\begin{proof}
Consider some $L$-node $v \in V_L$ and a path $p \in Path_{s,v}$. Recall that, during a scan of a path $p$, the IDAQ algorithm checks whether $p$ Is-Relevant with respect to $Q$  (defined in Section  \ref{isRelAlgoSec}), in which case IDAQ pushes $p$ to the $queue$ (line 25 of Algorithm \ref{idaqAlgo}). 
\\

Note that, by Lemma \ref{P4}, $Q$ contains no more then $N_{L}$ paths ending at any $L$-node, i.e., 
 $$|Q_v|<N_L$$
In addition, due to Assumption \ref{A5_1}, $$N_{L}= O(L).$$
Hence,
$$|Q_v|=O(L).$$
We shall calculate the time complexity of scanning $p$.
The operations performed during the scan of $p$ are as follows:
\begin{enumerate}
\item \label{action3} Checking whether $p$ is non-dominated by the paths $Q_v$ (Is-Relevant procedure).

\item \label{action1} Checking whether $p$ dominates paths in $Q_v$ and removing them (Is-Relevant procedure). 

\item  \label{action2} Updating the priority of $queue$'s paths ending at node $v$ (Push operation).

\item \label{action4} Inserting $p$ to the $queue$ (Push operation).
\end{enumerate}
Consider operation \ref{action3}. A single Pareto dominance check can be done in $O(w)$. i.e, the time complexity of checking whether $p$ is Pareto non-dominated by each path in $Q_v$ is upper bounded by:
$$O(N_{L} \cdot W)=O(L \cdot W).$$
Consider operation \ref{action1}, In a similar manner to operation \ref{action3} above, this can be done in $O(L \cdot W)$.
\\
Consider operation \ref{action2}. It involves checking if $p$'s cost is lower in comparison to each of the $queue$ paths ending at $v$. i.e, this can be done in $O(L)$.
\\
Consider operation \ref{action4}. $p$ is added to the $queue$'s data structure, which is done in $O(1)$. In case $p$ is inserted to the $queue$'s Priority Heap (Definition \ref{priorityHeap}), it has to be inserted to a Fibonacci Heap, which is also executed in $O(1)$ \cite{fibQueueDijkstraComplexity}.
\\
By Assumption 1,
$$W = O(1)$$
We thus get, that a scan of an $L$-node is done in $O(L)$ as required.
\end{proof}

Next, we calculate the time complexity of scanning paths that end at nodes which are not $L$-nodes. We begin by establishing several auxiliary lemmas.

\begin{lemma}\label{l4_1}
$Q$ contains no more then $K$ paths ending at any node $v \in V$.
$$
|Q_v| \leq K
$$
\end{lemma}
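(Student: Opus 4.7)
The plan is to establish the bound $|Q_v| \leq K$ as an invariant preserved by every operation of IDAQ that can modify $Q_v$. First I would observe that $Q_v$ changes only when a path ending at $v$ is pushed to the $queue$ (possibly together with the removals performed inside the Is-Relevant procedure), since pop operations merely transfer a path from the $queue$ to the $optimal\_paths$ list, both of which lie in $Q$ by Definition \ref{Q}. Hence it suffices to show that $|Q_v| \leq K$ holds immediately after every push of a path ending at $v$; since $Q_v$ is empty initially, induction on the number of such pushes will close the argument.

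The core of the proof is a case analysis on the two branches of the Is-Relevant procedure. In the first branch, where $|Q_v| < K/W$ prior to the call, a successful relevance check can only remove paths of $Q_v$ Pareto-dominated by $p$ and then add $p$ itself on line 25 of Algorithm \ref{idaqAlgo}; the resulting $|Q_v|$ is therefore at most $K/W$, which is bounded by $K$ since $W \geq 1$ (Assumption \ref{a0}). In the second branch $|Q_v| \geq K/W$ and the procedure updates $best_v$ by comparing $p$ against each $best_v[i]$. If $p$ fails to improve any entry the call returns False and $Q_v$ is unchanged; otherwise, $p$ is installed as $best_v[i]$ for at least one $i$, and all elements of $Q_v$ absent from the updated $best_v$ are removed from $Q$. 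After the subsequent push of $p$, every path in $Q_v$ belongs to the updated $best_v$, which has $K$ entries and therefore at most $K$ distinct paths.

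The main subtlety will be tracking the state of $Q_v$ through the second branch of Is-Relevant, and in particular verifying that the path $p$ being pushed is itself one of the entries of the updated $best_v$, so that adding $p$ back into $Q$ does not inflate $|Q_v|$ beyond the $K$-entry $best_v$. Once this is confirmed, the two cases combine to give $|Q_v| \leq K$ after every push, which together with the empty initial $Q_v$ yields the invariant by induction, as required.
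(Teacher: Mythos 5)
Your proof is correct, but it takes a genuinely different route from the paper's. The paper splits on whether $v$ is an $L$-node: for $L$-nodes it invokes Lemma \ref{P4} ($|Q_v| \leq N_L$) together with Assumption \ref{A5} ($N_L \leq K/W$), and for non-$L$-nodes it gestures at the $best_v$ mechanism (the else branch of Is-Relevant keeping one path per $\lambda_i$). Your argument instead sets up $|Q_v| \leq K$ as an inductive invariant over pushes and does a case split on the two branches of Is-Relevant itself, which tracks the algorithm's actual control flow rather than the $L$-node structure of the graph. This buys you two things: you avoid any dependence on Lemma \ref{P4} and the $N_L$ assumptions, and you explicitly cover the subcase where a non-$L$-node still has $|Q_v| < K/W$ (so the Pareto branch is taken), which the paper's "Otherwise" case handles only implicitly via the trivial bound $|Q_v| < K/W \leq K$. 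One small nit: in branch~1 you claim the post-push size is "at most $K/W$," but when $K/W$ is not an integer the correct bound after adding $p$ is $|Q_v| \leq \lfloor K/W \rfloor + 1$; this is still $\leq K$ for $W \geq 1$, so the conclusion stands, but the intermediate statement should be phrased as $|Q_v| \leq K$ directly rather than $\leq K/W$.
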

\begin{proof}
Consider some node $v \in V$. According to Lemma \ref{P4}, in case $v$ is an $L$-node:
$$|Q_v| \leq N_L \leq \frac{K}{W}\leq K.$$
Otherwise, the Is-Relevant procedure updates $Q_v$ to contain only paths that have the best cost for at least a single coefficient vector $\lambda \in \lambda$. Since $|\Lambda| = K$, there exist at most $K$ different paths in $Q_v$.
\end{proof}

\begin{lemma} \label{l4}
A single IDAQ scan of a path that ends at a node that is not an $L$-nodes is done in $O(K)$.
\end{lemma}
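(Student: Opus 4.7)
The plan is to mirror the argument of Lemma \ref{P5}, but now using the general bound $|Q_v| \leq K$ given by Lemma \ref{l4_1} in place of the sharper $|Q_v| \leq N_L = O(L)$ bound that was available for $L$-nodes. I shall decompose a single scan of a path $p \in Path_{s,v}$ into the same sequence of atomic operations as in Lemma \ref{P5}, and bound each by $O(K)$ under Assumption \ref{a0} ($W = O(1)$).

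First, I would invoke Lemma \ref{l4_1} to conclude that $|Q_v| \leq K$, which holds for every node (including those that are not $L$-nodes). The scan of $p$ begins with a Pareto-dominance test against $pareto\_sample[v]$, a single comparison of $W$-dimensional vectors, hence $O(W) = O(1)$. If $p$ survives, IDAQ invokes the Is-Relevant procedure.

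The Is-Relevant procedure branches on whether $|Q_v| < K/W$. In the dominance branch, IDAQ compares $p$ against each path in $Q_v$ for Pareto dominance; since $|Q_v| \leq K$ and each comparison costs $O(W) = O(1)$, this branch runs in $O(K \cdot W) = O(K)$. Removing dominated paths from $Q_v$ is bounded by the same count, also $O(K)$. In the optimality branch, IDAQ computes $Cost(p, \lambda_j)$ for each of the $K$ coefficient vectors, compares each with the corresponding entry of $best_v$, and updates $best_v$ accordingly; this is a loop of length $K$ whose body costs $O(W) = O(1)$, again $O(K)$ in total. In either branch, therefore, Is-Relevant terminates in $O(K)$.

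Finally, if $p$ is deemed relevant, the Push operation sets the priority of each path in $Q_v$ to null, removing them from the Priority Heap if necessary; since $|Q_v| \leq K$ and deletion from a Fibonacci heap is $O(1)$ amortized for a marked entry, this costs $O(K)$. Inserting $p$ itself into the queue's data structure and into the Priority Heap costs $O(1)$. Summing these bounds yields a total of $O(K)$ for a single scan of a path ending at a node that is not an $L$-node, as claimed. The only subtle point, and the one I would scrutinize most carefully, is verifying that the Is-Relevant procedure indeed performs no hidden work beyond the counts above; once the two branches of its pseudo-code are inspected, the remaining bookkeeping is routine.
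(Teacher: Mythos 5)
Your proposal takes essentially the same route as the paper: invoke Lemma~\ref{l4_1} to bound $|Q_v| \le K$, split the Is-Relevant procedure into its two branches, bound each branch by $O(K)$ under $W = O(1)$, and bound the Push bookkeeping by $O(K)$. The paper's own proof lists six atomic operations (cost computation, reading $best_v$, nulling $p.Costs$, pruning $Q_v$, updating priorities, heap insertion) and bounds each, which is just a slightly finer slicing of what you wrote; the only content the paper makes explicit that you fold in implicitly is that the minimum-cost paths per coefficient vector in $Q_v$ are maintained incrementally and so are read in $O(1)$ rather than recomputed.

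One technical remark: your claim that ``deletion from a Fibonacci heap is $O(1)$ amortized for a marked entry'' is not correct -- arbitrary delete in a Fibonacci heap is $O(\log n)$ amortized (decrease-key to $-\infty$ followed by extract-min), and ``marked'' nodes do not enjoy $O(1)$ deletion. The reason the final bound survives is not the per-delete cost but the count: by Definition~\ref{stateCost} and the Push semantics, at most one path per node ever resides in the Priority Heap with non-null priority, so a Push triggers at most one heap deletion, not $|Q_v|$ of them. The paper sidesteps this entirely by charging only the $|Q_v|$ cost comparisons to operation~5 and the $O(1)$ insertion to operation~6, never explicitly costing a heap removal; your version makes a specific claim that is false as stated, so you should either correct the amortized cost to $O(\log n)$ and note there is at most one removal, or mirror the paper and not attribute an $O(1)$ delete bound to the Fibonacci heap.
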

\begin{proof}
Consider some node that is not an $L$-node, $v \notin V_{L}$ and path $p \in Path_{s,v}$. During a scan of  $p$, the IDAQ algorithm checks whether $p$ Is-Relevant with respect to $Q$  (defined in Section  \ref{isRelAlgoSec}), in which case IDAQ pushes $p$ to the $queue$ (line 25 of Algorithm \ref{idaqAlgo}). 
\\

In case the number of paths in $Q_v$ is smaller than $\frac{K}{W}$, the actions involved in scanning $p$ are identical as if $v$ were an $L$-node. In other words, it can be done in $O(L*w)$ (Lemma \ref{P5}). Otherwise, the actions performed during the scan of $p$ are as follows:
\begin{enumerate}

\item \label{action11} Calculating $p$'s cost for each iteration (Is-Relevant procedure).

\item \label{action12} Finding a path with minimal cost for each iteration in $Q_v$ (Is-Relevant procedure).

\item \label{action13}Updating $p$'s cost to null for each iteration $i$ where $p.Cost[i]>Best[i].Cost[i]$ (Is-Relevant procedure).

\item \label{action131} Removing unnecessary paths from $Q_v$ (Is-Relevant procedure).

\item  \label{action14} Updating the priority of $queue$'s paths ending at node $v$ (Push operation).

\item \label{action15} Inserting $p$ to the $queue$ (Push operation).

\end{enumerate}
Consider operation \ref{action11}. The Cost calculation of $p$ with respect to a single coefficient vector $\lambda \in \Lambda$ is done in $O(w)$. Since $|\Lambda|=K$, the action can be done in $O(K \cdot W)$.
\\
Consider operation \ref{action12}. The paths with minimal cost in $Q_v$ for each iteration can be calculated in an incremental manner, hence during the relevance check it is simply read in $O(1)$. 
\\
Consider operation \ref{action13}. It involves iterating through each of $p.Costs$, which can be done in $O(K)$.
\\
Consider operation \ref{action131}, It involves removing paths from $Q_v$ that are no longer part of $best_v$ (see definition \ref{best}). According to Lemma \ref{l4_1} $|Q_v|<K$, this action can be done in $O(K)$.
\\
Consider operation \ref{action14}. It involves checking if $p$'s cost is lower than that of each of the $queue$ paths ending at $v$. Note that, by Lemma \ref{l4_1}, the number of paths in $Q_v$ is at most $K$, i.e, this operation can be done in $O(K)$.
\\
Consider operation \ref{action15}. $p$ is added to the $queue$ data structure, which is done in $O(1)$. In case $p$ is inserted to the $queue$'s Priority Heap (Definition \ref{priorityHeap}), it has to be inserted to a Fibonacci Heap, which is also executed in $O(1)$ \cite{fibQueueDijkstraComplexity}.
\\
By Assumption 1,
$$W = O(1)$$
We thus get that a scan of a node that is not an $L$-node is done in $O(K)$, as required. 
\end{proof}
We proceed to calculate the time complexity of the scans of nodes that are not $L$-nodes.

\begin{lemma}\label{L6}
The time complexity of scans of nodes that are not $L$-nodes is given by $$O(|V| \cdot \log^2(|V|))$$
\end{lemma}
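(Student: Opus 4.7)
The plan is to combine two results that are already established earlier in the section: an upper bound on the number of scans of paths whose endpoint is not an $L$-node, and the per-scan time cost for such paths. Both factors will then be bounded using the assumptions on $K$, $MaxDeg_+$ and $\alpha$.

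First, I would invoke the bound from the proof of Theorem \ref{P6}: the number of scanned paths that are neighbors of nodes that are not $L$-nodes equals the number of outgoing edges from developed non-$L$-nodes, i.e.
\[
|S_{\text{Not-L-nodes}}| = O(|E \setminus E_L| \cdot K).
\]
Next, I would reuse the computation carried out inside the proof of Lemma \ref{eq8}, where it is shown that
\[
|E \setminus E_L| = |E| \cdot (1 - \gamma) = O(\log^2(|V|)),
\]
by combining Lemma \ref{a1} (so that $(1-\alpha)\cdot|V| = O(\log|V|)$) with Assumption \ref{a2} ($MaxDeg_+ = O(\log|V|)$). Together with Assumption \ref{a3}, which gives $K = O(\sqrt{|V|})$, this yields
\[
|S_{\text{Not-L-nodes}}| = O(\sqrt{|V|} \cdot \log^2(|V|)).
\]

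Finally, I would multiply by the per-scan cost: Lemma \ref{l4} shows that a single IDAQ scan of a path ending at a non-$L$-node is performed in $O(K) = O(\sqrt{|V|})$ time. Therefore the total time complexity of scanning paths that end at non-$L$-nodes is
\[
O\!\left(|S_{\text{Not-L-nodes}}| \cdot K\right) = O\!\left(\sqrt{|V|} \cdot \log^2(|V|) \cdot \sqrt{|V|}\right) = O(|V| \cdot \log^2(|V|)),
\]
as required.

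I do not expect a genuine obstacle here: the argument is purely bookkeeping of asymptotic bounds that have already been proved. The only subtlety is to be careful that the ``number of scans'' counts outgoing edges of developed non-$L$-nodes (so the factor $|E \setminus E_L|$ is the right one, rather than $|V \setminus V_L|$), and that the per-scan cost $O(K)$ from Lemma \ref{l4} applies uniformly whether the Is-Relevant branch taken is the optimality check or the dominance check.
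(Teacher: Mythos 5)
There is a genuine gap: you are bounding the wrong set of scans. The quantity $|S_{\text{Not-L-nodes}}| = O(|E \setminus E_L| \cdot K)$ that you pull from the proof of Theorem~\ref{P6} counts scans that are \emph{triggered by} a developed path whose endpoint is a non-$L$-node --- i.e.\ scans along the outgoing edges of non-$L$-nodes. But the scanned path itself then ends at a \emph{neighbor} of that node, which may perfectly well be an $L$-node. Conversely, a scan of a path ending at a non-$L$-node $v$ may be triggered by a developed $L$-node $u$ with $(u,v)\in E$; such scans are in $E_L$ (since $u \in V_L$) and are \emph{not} counted by $|E \setminus E_L|$. Since Lemma~\ref{l4}'s per-scan cost $O(K)$ is attached to the endpoint of the scanned path (it scales with $|Q_v|$ for the node $v$ at which the scanned path terminates), you must count scans \emph{of} paths ending at non-$L$-nodes, not scans \emph{from} non-$L$-nodes. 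Your parenthetical caveat --- that ``the factor $|E \setminus E_L|$ is the right one, rather than $|V \setminus V_L|$'' --- has it exactly backwards.

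The paper's proof counts the correct quantity: each non-$L$-node $v$ has at most $MaxDeg_-$ incoming edges, each node is developed at most once per iteration (Lemma~\ref{l1}), so $v$ can be the endpoint of a scanned path at most $MaxDeg_-$ times per iteration, giving $\#Not\,L\,Nodes\,Scans = K \cdot |V \setminus V_L| \cdot MaxDeg_-$, then $O(K)$ cost per such scan. Your arithmetic --- $(1-\alpha)|V| = O(\log|V|)$, degree $= O(\log|V|)$, $K = O(\sqrt{|V|})$ --- would close the argument once the counting is fixed, and the two bounds are numerically identical here only because Assumption~\ref{a2} gives $MaxDeg_+ = MaxDeg_- = O(\log|V|)$. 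That coincidence does not rescue the argument: as written, your bound is on a different quantity, so the multiplication by $O(K)$ per scan is not justified for the scans you are actually summing over.
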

\begin{proof}
Recall that, according to the definition of $V_L$ (Section \ref{complexityDef}), the set of $L$-nodes is given by $V_L$, hence the set of nodes that are not $L$-nodes is given by $V \setminus V_L$.
\\

Since each node has at most $MaxDeg_-$ incoming edges and since during a single iteration each node is developed at most once (Lemma \ref{l1}), each node can be scanned at most $MaxDeg_-$ times during a single iteration, thus implying that the number of scans of nodes that are not $L$-nodes is upper-bounded by $\#Not L Nodes Scans$, where:
$$
\#Not L Nodes Scans =K \cdot |V \setminus V_L| \cdot MaxDeg_-$$
$$=  K \cdot (1-\alpha) \cdot V \cdot MaxDeg_-
$$
In Lemma \ref{l4} we have established that a single IDAQ scan is performed in $O(K)$. Therefore, the time complexity of scans of nodes that are not $L$-nodes is given by:
$$
O(\#Not L Nodes Scans \cdot K) = O(K^2 \cdot (1-\alpha) \cdot |V| \cdot MaxDeg_-).
$$
By Lemma \ref{a1}, the number of nodes that are not $L$-nodes is given by:
$$
O((1-\alpha) \cdot |V|)=O(log(|V|).
$$
By Assumption \ref{a2},  $$MaxDeg_-=O(\log(|V|).$$
By Assumption \ref{a3}:
$$
K=O(\sqrt{|V|}).
$$
Thus, the time complexity of scans of paths to nodes that are not $L$-nodes is given by:
$$O(|V| \cdot \log^2(|V|))$$
as required.
\end{proof}
Next, we employ the upper bound on the number of scans (as established in Theorem \ref{P6}), the time complexity of scanning a single $L$-node (as established in Lemma \ref{P5}) and the time complexity of scanning a single node that is not an $L$-node (as established in Lemma \ref{L6}) in order to calculate the total time complexity of IDAQ scans. We begin by proving an auxiliary lemma.
\\

First, we provide some intuition for the lemma.\\
Consider a $Path_{s,v}$ that is scanned during the execution of IDAQ. Recall that, in case $|Q_v| > \frac{K}{W}$, the Is-Relevant procedure uses the set $best_v$ (see Definition \ref{best}).\\
The $best_v$ set is calculated in an incremental manner (line 20 of the Is-Relevant procedure), however, for the first time it is being used, it has to be initialized.\\
In the following lemma, we shall calculate the time complexity of such an initialization during the execution of the IDAQ algorithm. \\
\begin{lemma}
\label{P8}
The time complexity of checking whether each path in $Path_{s,v}$  Is-Relevant for the first time where $Q_v >\frac{K}{W}$ is given by: $$O(|V| \cdot \log(|V|))$$
\end{lemma}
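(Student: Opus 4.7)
The plan is to exploit the fact that the condition $|Q_v| > \frac{K}{W}$ simply cannot fire at $L$-nodes, and then to bound both the number of nodes at which a first-time initialization occurs and the cost of each such initialization.

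First I would observe that, by Lemma \ref{P4}, every $L$-node $v \in V_L$ satisfies $|Q_v| \leq N_L$, and by Assumption \ref{A5} we have $N_L \leq \frac{K}{W}$. Hence the initialization branch of the Is-Relevant procedure governed by the condition $|Q_v| > \frac{K}{W}$ is never entered for $L$-nodes. Consequently, the work we need to account for is concentrated entirely on the set $V \setminus V_L$, whose cardinality is $O(\log |V|)$ by Lemma \ref{a1}.

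Next, I would analyze the cost of a single first-time initialization of $best_v$ at a non-$L$-node $v$. Before $best_v$ can be used incrementally (line 20 of the Is-Relevant pseudocode), its $K$ entries must be computed from scratch by scanning every path currently in $Q_v$ and computing, for each of the $K$ coefficient vectors, the path of minimum cost. By Lemma \ref{l4_1} we have $|Q_v| \leq K$, and by Assumption \ref{a0} a single $\mathrm{Cost}(p,\lambda)$ evaluation costs $O(W) = O(1)$. Therefore the initialization at $v$ costs
$$
O(|Q_v| \cdot K \cdot W) = O(K^2).
$$

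Finally, I would combine these two bounds. Summing the per-node cost $O(K^2)$ over the $O(\log |V|)$ non-$L$-nodes and invoking Assumption \ref{a3} (which gives $K = O(\sqrt{|V|})$, so $K^2 = O(|V|)$) yields a total time complexity of
$$
O(\log |V|) \cdot O(K^2) \;=\; O(\log |V|) \cdot O(|V|) \;=\; O(|V| \cdot \log |V|),
$$
as required. The only delicate point is confirming that the expensive branch of Is-Relevant is indeed unreachable from $L$-nodes; once this is observed the remaining estimate is a direct application of the earlier lemmas and the standing assumptions, so I do not expect any serious obstacle.
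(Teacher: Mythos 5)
Your proof is correct and follows essentially the same decomposition as the paper: bound the per-node initialization cost of $best_v$ by $O(K^2)=O(|V|)$, multiply by the $O(\log|V|)$ non-$L$-nodes, and conclude $O(|V|\cdot\log|V|)$. Two small differences in bookkeeping, neither of which affects validity: (i) you make explicit (via Lemma \ref{P4} and Assumption \ref{A5}) that $L$-nodes can never enter the expensive branch, which the paper leaves implicit by immediately restricting attention to $v\notin V_L$; and (ii) you bound $|Q_v|$ by $K$ using Lemma \ref{l4_1} and then invoke $W=O(1)$, whereas the paper uses the tighter observation that at the \emph{first} crossing of the threshold $|Q_v|$ is only about $\frac{K}{W}$, yielding $O\bigl(\frac{K}{W}\cdot K\cdot W\bigr)=O(K^2)$ without needing Assumption \ref{a0} at that step. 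Both routes reach the same $O(K^2)=O(|V|)$ per-node cost.
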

\begin{proof}
Consider some node that is not an $L$-node, say $v \notin V_L $. Consider the first time that the Is-Relevant procedure identifies more than $\frac{K}{W}$ paths in $Q_v$.
\\
The Is-Relevant procedure has to find all paths in $Q_v$ with minimal cost for each $\lambda \in \Lambda$ ($Best$ paths list). This involves calculating the cost of each path $p \in Q_v$ for each coefficient vector in $\lambda \in \Lambda$. This is done in:
$$O(|Q_v| \cdot K \cdot W)=O(\frac{K}{W} \cdot K \cdot W)=O(K^2)$$
and, due to Assumption \ref{a3}:
$$
=O(|V|)
$$

According to Lemma \ref{a1}, the number of nodes that are not $L$-nodes is given by: $$O(|V| \cdot (1-\alpha))=O(log(|V|).$$
This implies that identifying all nodes that are not $L$-nodes is done in:
$$O(|V| \cdot \log(|V|)).$$
\end{proof}
\begin{theorem}
\label{P9}
The total time complexity of scans in IDAQ is given by $$O(|E| \cdot L^2 +|V| \cdot \log^2(|V|))$$
\end{theorem}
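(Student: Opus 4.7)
The plan is to partition the total scanning work into two disjoint groups, based on whether the endpoint of the scanned path is an $L$-node or not, bound each contribution separately using lemmas already established, and then add in the one-time setup cost from Lemma \ref{P8}.

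First I would use the breakdown performed inside the proof of Theorem \ref{P6}: the number of scanned paths whose endpoint lies in $V_L$ is at most $|S_{L\text{-nodes}}|=O(|E_L|\cdot L)=O(|E|\cdot L)$, and the number of scanned paths whose endpoint lies outside $V_L$ is at most $|S_{\text{Not-}L\text{-nodes}}|=O(|E\setminus E_L|\cdot K)$. For the first group, Lemma \ref{P5} says that each individual scan at an $L$-node costs $O(L)$; multiplying through yields a contribution of $O(|E|\cdot L^2)$. For the second group, rather than re-multiplying by the per-scan cost from Lemma \ref{l4}, I would simply invoke Lemma \ref{L6}, which already delivers a total bound of $O(|V|\cdot \log^2(|V|))$ on all scanning work at non-$L$-nodes (this lemma already internalizes the $O(K)$ per-scan cost together with Assumption \ref{a2} and Lemma \ref{a1}).

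Next I would separately account for the one-time cost of initializing the $best_v$ set the first time the Is-Relevant procedure ever crosses the threshold $|Q_v|>\frac{K}{W}$ at a non-$L$-node $v$; because after this first time the $best_v$ list is maintained incrementally (line 20 of Is-Relevant), the per-scan analysis used in Lemma \ref{l4} is valid only from that point on. By Lemma \ref{P8}, the sum of these first-time setup costs is $O(|V|\cdot \log(|V|))$, which is dominated by the $O(|V|\cdot \log^2(|V|))$ term and disappears in the $O(\cdot)$.

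Summing the two contributions gives
\[
O(|E|\cdot L^2)+O(|V|\cdot \log^2(|V|))+O(|V|\cdot \log(|V|))=O(|E|\cdot L^2+|V|\cdot \log^2(|V|)),
\]
which is exactly the claimed bound. The only subtle point I expect is the bookkeeping around the initialization of $best_v$: the per-scan analysis in Lemma \ref{l4} silently assumes that $best_v$ is already in place, so one must be careful to charge the one-time initialization separately via Lemma \ref{P8} and then verify that this extra term is absorbed. Everything else is a direct combination of the number-of-scans bound from Theorem \ref{P6} with the per-scan cost bounds from Lemmas \ref{P5} and \ref{L6}.
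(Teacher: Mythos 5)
Your proof is correct and follows essentially the same route as the paper's: decompose scans by whether the endpoint is an $L$-node, bound the $L$-node part via the scan count from Theorem~\ref{P6} combined with the per-scan cost of Lemma~\ref{P5}, invoke Lemma~\ref{L6} directly for the non-$L$-node total, and add the one-time $best_v$ initialization cost from Lemma~\ref{P8}. If anything your bookkeeping is slightly cleaner than the paper's (you use $|S_{L\text{-nodes}}|=O(|E_L|\cdot L)$ directly rather than the looser ``all scans'' bound, and you avoid the paper's redundant inclusion of the $|V|\log^2|V|$ term in the $L$-node line), and your explicit observation that Lemma~\ref{l4}'s per-scan cost presupposes $best_v$ is already initialized is a useful clarification of why Lemma~\ref{P8} is needed as a separate charge.
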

\begin{proof}
In Theorem \ref{P6} we have established that the number of scans during an execution of the IDAQ algorithm is upper-bounded by  $|P_{Scanned}|$, where:
 \begin{equation}
     |P_{Scanned}| =  O(|E| \cdot L)
 \end{equation}
 In Lemma \ref{P5} we have established that a scan of an $L$-node is done in $O(L \cdot W)$. The number of scans of $L$-Nodes is upper-bounded by the total number of scans $|P_{Scanned}|$, therefore the scans of $L$-Nodes in IDAQ is done in:
 $$
 O(|P_{Scanned}| \cdot L \cdot W+|V| \cdot \log^2(|V|))
 $$
 $$
 =O(|E| \cdot L^2 \cdot W+|V| \cdot \log^2(|V|))
 $$
In Lemma \ref{L6} we have established that the scans nodes that are not $L$-nodes is done in $O(|V| \cdot \log^2(|V|))$.\\
In Lemma \ref{P8} we have established that the time complexity of checking whether each path in $Path_{s,v}$  Is-Relevant for the first time where $Q_v >\frac{K}{W}$ is given by: $$O(|V| \cdot \log(|V|))$$
Hence, the time complexity of scans in IDAQ is:
$$
 O(|E| \cdot L^2 \cdot W+|V| \cdot \log^2(|V|)) + 
 $$
 $$
 O(|V| \cdot \log^2(|V|)) + O(|V| \cdot \log(|V|))
$$
$$ = O(|E| \cdot L^2 \cdot W+|V| \cdot \log^2(|V|)+|V| \cdot \log(|V|))= $$
By Assumption \ref{a0}, $$W = O(1)$$ Thus we get,
$$
O(|E| \cdot L^2 + |V| \cdot \log^2(|V|))
$$

\end{proof}
\subsubsection{Complexity Analysis Part \ref{Part3} - Adapting the Queue}\label{part3Analysis}
We proceed to establish the time complexity of adapting the $queue$ between IDAQ iterations (part \ref{Part3}). We shall show that, even though this part is relatively time-consuming, it is still more efficient than developing paths in the Standard Algorithm, which is done in:
$$O(K \cdot |V| \cdot \log(|V|)$$

\begin{theorem}\label{AdaptThoreme}
Adapting the $queue$ between IDAQ iterations is done in: 
$$
O(K \cdot |V| \cdot L)=O(K \cdot |V| \cdot F(|V|))
$$
\end{theorem}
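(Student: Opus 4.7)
The plan is to analyze a single invocation of the \emph{Adapt} operation and then multiply by the number of times it is called. Recall from Section~\ref{adaptedQueue} that a call to \emph{Adapt}$(i)$ performs four actions: (i) empties the Priority Heap; (ii) sets $\Lambda\_index = i$; (iii) recomputes the priority of every path currently stored in the $queue$ according to Definition~\ref{stateCost}; and (iv) reinserts each non-null-priority path into the fresh Priority Heap. Steps (i) and (ii) are $O(1)$. Each Fibonacci-heap insertion in (iv) costs $O(1)$, and since at most one path per node can simultaneously carry non-null priority (by Definition~\ref{stateCost}), step (iv) contributes at most $O(|V|)$ per call, which will be absorbed by the dominant term.

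The crux is step (iii). For each node $v$, I would assign priorities to all paths in $Q_v$ in a single sweep: first compute $Cost(p,\lambda_i)$ for every $p \in Q_v$ in $O(|Q_v| \cdot W) = O(|Q_v|)$ time (by Assumption~\ref{a0}); then a linear pass identifies the unique path in $Q_v$ whose priority is non-null under Definition~\ref{stateCost}, namely the in-queue path of minimum cost, with ties broken by push order and with priority collapsing to null for everyone if a popped path already attains that minimum. All three conditions (a)--(c) of Definition~\ref{stateCost} can be resolved in one sweep using push-time and in-queue/popped flags already maintained by the Adaptive Queue. Hence step (iii) costs $O\!\left(\sum_{v \in V}|Q_v|\right)$ in total.

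To bound this sum I would split by node type. By Lemma~\ref{P4}, $|Q_v| = O(L)$ for every $L$-node $v \in V_L$; by Lemma~\ref{l4_1}, $|Q_v| \leq K$ for every node. Therefore
\[
\sum_{v \in V}|Q_v| \;=\; O(|V_L|\cdot L) + O(|V\setminus V_L|\cdot K) \;=\; O(|V|\cdot L) + O(\log|V| \cdot K),
\]
where the second equality invokes Lemma~\ref{a1}. By Assumption~\ref{a3}, $K = O(\sqrt{|V|})$, so $\log|V| \cdot K = O(\sqrt{|V|}\,\log|V|)$, which is dominated by $|V|\cdot L$ under Assumption~\ref{A6} (any sub-logarithmic $F$ eventually dominates $\log|V|/\sqrt{|V|}$). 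Thus a single Adapt invocation runs in $O(|V|\cdot L)$ time.

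Finally, since Algorithm~\ref{idaqAlgo} invokes Adapt exactly once per transition between consecutive iterations, i.e., $O(K)$ times in total, the aggregate running time of all Adapt calls is $O(K\cdot |V|\cdot L) = O(K\cdot |V|\cdot F(|V|))$, as claimed. The main obstacle I anticipate is justifying that Definition~\ref{stateCost} can indeed be evaluated in amortized $O(|Q_v|)$ per node rather than $O(|Q_v|^2)$: this hinges on showing that the push-order and popped/in-queue flags already tracked by the Adaptive Queue suffice to decide conditions (a)--(c) in a single linear pass after the per-path costs are computed, so that no pairwise comparison across $Q_v$ is required.
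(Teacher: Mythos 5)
Your proof is correct and reaches the paper's bound, but your handling of the non-$L$-nodes differs from the paper's in an interesting way. The paper argues that for a node $v\notin V_L$, the priority recomputation in a single Adapt call is $O(1)$ per node because the Is-Relevant procedure has already been maintaining the $best_v$ list (the minimum-cost path in $Q_v$ for each $\lambda_i$) incrementally, so Adapt just reads off the answer; it then takes the worst case over all nodes as $O(|V|\cdot L\cdot W)$ and multiplies by $K$. You instead recompute costs from scratch over all of $Q_v$, bound $|Q_v|=O(L)$ for $L$-nodes via Lemma~\ref{P4} and $|Q_v|\le K$ for the rest via Lemma~\ref{l4_1}, and then appeal to Lemma~\ref{a1} together with Assumptions~\ref{a3} and~\ref{A6} to show the non-$L$-node contribution $O(K\log|V|)$ is swallowed by $O(|V|\cdot L)$. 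The paper's route is leaner because it piggybacks on state that the algorithm already maintains, while yours is slightly more self-contained (it does not presume that the incremental $best_v$ bookkeeping is available to Adapt) at the cost of invoking two extra lemmas and an asymptotic domination argument. The concern you flag at the end — whether Definition~\ref{stateCost} can be decided in one linear pass per node without pairwise comparisons — is exactly the point the paper sidesteps by using the precomputed $best_v$ list; your sweep argument (find the in-queue minimum with tie-breaks, then compare against popped paths) is a valid alternative resolution of it.
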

\begin{proof}
Consider the beginning of a new iteration, $i$. In order to adapt the $queue$, the following operations need to be executed:
\begin{enumerate}
    \item \label{action101}Setting each path's priority for iteration $i$.
    \item \label{action 102}Initialize a new priority $queue$ with the priorities calculated in \ref{action101}
\end{enumerate}

Consider operation \ref{action101}. According to definition \ref{stateCost}, this involves finding for each node $v \in V$ a path $p \in Path_{s,v}$ with minimal $Cost(p,\lambda i)$ among all the path in the $Q_{v}$. For a node that is not an $L$-node, this can be done in $O(1)$, since for such a node the $queue$ holds such a path for each iteration ($Best$ path list calculated by the Is-Relevant procedure). For an $L$-node, this can be done in $O(L \cdot W)$. In the worst case, i.e, when all nodes are $L$-nodes, this is done in $O(|V| \cdot L \cdot W)$.
\\

Consider operation \ref{action 102}. Using Fibonacci heaps for the priority queue, this is done in $O(V)$.
\\

In summary, we get that adapting the queue between iterations is done in:
$$
O(|V| \cdot L \cdot W)
$$
Since there is a total of $K$ iterations we get that the total time complexity of adapting the queue is:
$$
O(K \cdot |V| \cdot L \cdot W)=O(K \cdot |V| \cdot F(|V|)  \cdot W)
$$
By Assumption \ref{a0}, $$W = O(1)$$ thus we get
$$
=O(K \cdot |V| \cdot L )=O(K \cdot |V| \cdot F(|V|))
$$
\end{proof}

\subsubsection{Complexity Analysis Part \ref{Part4} - Build-Sets procedure}\label{part4Analysis}
At the end of the IDAQ algorithm, the Build-Sets procedure converts the list of discovered optimal paths with respect to any coefficient vector to a set of paths for each coefficient vector, as required from a solution of the problem.
\\
In this section, we calculate the time complexity of the  Build-Sets procedure (line 11 of algorithm \ref{idaqAlgo}).

\begin{theorem}\label{buildSettheorem}
The complexity of the Build-Sets procedure (line 11 in algorithm \ref{idaqAlgo}) is given by: 
$$
O(K \cdot |V| \cdot L)=O(\sqrt{|V|} \cdot |V| \cdot F(|V|))
$$
\end{theorem}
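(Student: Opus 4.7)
The plan is to trace through the nested-loop structure of the \textit{Build-Sets} pseudocode in Section \ref{buildSetSec} and bound the work by the total size of its input list $PL = optimal\_paths$, then substitute the bounds on the number of developed paths that were already proved in Part \ref{Part1} of the complexity analysis. The outermost loop iterates $K$ times (once per coefficient vector). Inside, there is a loop over $v \in V$ that, for each $v$, enumerates the paths of $PL$ ending at $v$ and evaluates $Cost(p,\lambda_i)$ using equation \ref{costEq}; for a fixed $i$, the combined inner work summed over $v$ therefore costs $O(|PL|\cdot W)$, since every path in $PL$ is visited exactly once and each cost evaluation is $O(W)$.

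Next I would bound $|PL|$ directly from the results of Section \ref{part1Analysis}. By Lemma \ref{devLNode}, the number of developed paths ending at $L$-nodes is $O(|V|\cdot\alpha\cdot L)$; by Lemma \ref{devNotLNode}, the number of developed paths ending at non-$L$-nodes is $O(|V|\cdot(1-\alpha)\cdot K)$. Exactly as in the proof of Theorem \ref{developTime}, the first term is $O(|V|\cdot L)$ (since $\alpha<1$) and the second is $O(\sqrt{|V|}\cdot\log|V|)$ after invoking Lemma \ref{a1} and Assumption \ref{a3}, so both are absorbed into $|PL|=O(|V|\cdot L)$.

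Putting the two pieces together gives a total running time of $O(K\cdot|PL|\cdot W) = O(K\cdot|V|\cdot L\cdot W)$. Applying Assumption \ref{a0} ($W=O(1)$) removes the $W$ factor, and applying Assumption \ref{a3} ($K=O(\sqrt{|V|})$) together with Assumption \ref{A6} ($L=O(F(|V|))$) rewrites the result in the two equivalent forms
\begin{equation*}
O(K\cdot|V|\cdot L) \;=\; O(\sqrt{|V|}\cdot|V|\cdot F(|V|)),
\end{equation*}
matching the theorem statement.

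I do not expect any real obstacle here: the procedure is a clean doubly-nested scan, and all the nontrivial work was already done in establishing the upper bounds on the number of developed paths. The only point that requires a line of care is justifying that the per-node inner loop can be charged globally as $O(|PL|\cdot W)$ rather than node-by-node (so that the asymmetry between $L$-nodes and non-$L$-nodes is handled implicitly via $|PL|$), and that the bookkeeping of extracting ``All paths in $PL$ ending in node $v$'' can be arranged in linear time in $|PL|$ by a single preprocessing bucket pass over $PL$ (done once, outside the $K$-loop, at cost $O(|PL|)$, which is dominated).
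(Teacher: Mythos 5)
Your proof is correct and reaches the same bound as the paper, but it reorganizes the accounting in a modestly different way. The paper's proof works node-by-node inside each outer iteration: it bounds the number of candidate paths ending at a non-$L$-node by $K$ (via Lemma \ref{l1}, each node developed at most once per iteration), bounds the number ending at an $L$-node by $L$ (via Lemma \ref{lNodOpt}), multiplies by the respective node counts ($O(\log|V|)$ non-$L$-nodes by Lemma \ref{a1}, at most $|V|$ $L$-nodes), and then multiplies by $K$. You instead charge the whole inner double loop of a single iteration as one $O(|PL|\cdot W)$ pass over the input list, and bound $|PL|$ once by directly reusing Lemmas \ref{devLNode} and \ref{devNotLNode} from Part \ref{Part1}; this makes the $L$-node/non-$L$-node distinction implicit and makes the bucket-pass preprocessing observation natural. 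The two derivations use the same underlying ingredients (your two lemmas are themselves built on Lemma \ref{l1}, Lemma \ref{lNodOpt}, and Lemma \ref{a1}), and both collapse to $O(K\cdot|V|\cdot L)=O(\sqrt{|V|}\cdot|V|\cdot F(|V|))$ after Assumptions \ref{a0}, \ref{a3}, and \ref{A6}. Your version is a bit cleaner since it avoids re-deriving per-node path counts already available as corollaries; the paper's is more self-contained in that it does not require the reader to have internalized the developed-path totals.
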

\begin{proof}
For each coefficient vector $\lambda_i \in \Lambda$, the Build-Sets procedure finds optimal paths ending in each node with respect to $\lambda_i$ by searching among the paths in the $optimal\_paths$ list.
\\

Recall that in Lemma \ref{optStatetheorem} we have established that each developed path is optimal, hence, the $optimal\_paths$ list, which is the input paths list for the Build-Sets procedure, holds only optimal paths.
\\

We begin by calculating the complexity of finding an optimal path with respect to $\lambda_i$ ending in a some node which is not an $L$-node, say $v \in V \setminus V_L$.
In Lemma \ref{l1} we have established that, at each iteration, IDAQ develops each node at most once. Recall that the number of iterations is given by $K$, i.e., the number of developed paths ending at any node is upper-bounded by $K$. Therefore, finding an optimal path with respect to $\lambda_i$ is done in $O(K \cdot W)$.
\\

We proceed to calculate the complexity of finding an optimal path with respect to $\lambda_i$ ending at some  $L$-node, say $v \in V_L$.  Due to Lemma \ref{lNodOpt}, for any $L$-node, say $v \in V_L$, there are at most $L$ different $Path_{s,v}$ that are optimal with respect to any coefficient vector in $\Lambda$. Therefore, finding the optimal path with respect to a single coefficient vector is done in $O(L \cdot W)$.
\\

We now calculate the complexity of finding an optimal path with respect to $\lambda_i$ ending at any node. Recall that the number of nodes that are not $L$-Nodes is bounded by $O(log(|V|))$ (Lemma \ref{a1}) and the number of $L$-Nodes is bounded by the total number of nodes, namely $|V|$. i.e, the time complexity is given by:
$$
O(log(|V|) \cdot K + |V| \cdot L )
$$

Finally, we are ready to calculate the complexity of the Build-Sets procedure. The Build-Sets procedure 
finds an optimal path with respect to any coefficient vector ending in any node. i.e, its time complexity is given by:
$$
O(K \cdot(log(|V|) \cdot K \cdot W  + |V| \cdot L ))=
$$
$$
O(log(|V|) \cdot K^2  \cdot W+ |V| \cdot L \cdot K \cdot W)
$$
By Assumption \ref{a0}, $$W = O(1)$$ Thus we get,
$$
=O(log(|V|) \cdot |V| + |V| \cdot F(|V|) \cdot \sqrt(|V|))
$$
$$
=O(|V| \cdot F(|V|) \cdot \sqrt(|V|))=O( K \cdot |V| \cdot L)
$$
as required.
\end{proof}
\subsubsection{IDAQ Time Complexity}
In Sections \ref{part1Analysis}, \ref{part2Analysis}, \ref{part3Analysis}, \ref{part4Analysis} we analyzed the time complexity of each of the four parts of the IDAQ algorithm, namely: developing paths, scanning path, adapting the $queue$ between iterations, and executing the Build-Sets procedure. This now allow us to establish the total time complexity of IDAQ, as presented in the following theorem.\\
\begin{theorem}\label{IDAQcomplexity}
The time complexity of IDAQ is given by:
$$
O(|E| \cdot F^{2}(|V|) +\sqrt{|V|} \cdot |V| \cdot F(|V|))=
$$$$
O(|E| \cdot L^2 +K \cdot |V| \cdot L)$$
\end{theorem}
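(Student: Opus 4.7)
The plan is to assemble the total by adding the four bounds that the paper has already established for the separate parts of IDAQ, and then absorb the lower-order terms using the assumptions on $K$, $W$, $L$, and the node-degree bound.

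First, I would restate the four previously proved bounds: from Theorem \ref{developTime} (developing paths) we get $O(L\cdot|V|\cdot\log|V|)$; from Theorem \ref{P9} (scanning paths) we get $O(|E|\cdot L^2+|V|\cdot\log^2|V|)$; from Theorem \ref{AdaptThoreme} (adapting the queue) we get $O(K\cdot|V|\cdot L)$; and from Theorem \ref{buildSettheorem} (the Build-Sets procedure) we get $O(K\cdot|V|\cdot L)$. Since IDAQ's total running time is the sum of the time spent in these four disjoint parts, the total is
$$O\bigl(L\cdot|V|\cdot\log|V| + |E|\cdot L^2 + |V|\cdot\log^2|V| + K\cdot|V|\cdot L\bigr).$$

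Next I would show that the two "stray" terms --- the $L\cdot|V|\cdot\log|V|$ coming from Part 1 and the $|V|\cdot\log^2|V|$ coming from Part 2 --- are dominated by $K\cdot|V|\cdot L$. Both follow from Assumption \ref{a3}, which gives $K=O(\sqrt{|V|})$, together with the elementary fact that $\log|V|=o(\sqrt{|V|})$. For the first stray term, $L\cdot|V|\cdot\log|V| = O(L\cdot|V|\cdot\sqrt{|V|}) = O(K\cdot|V|\cdot L)$. For the second stray term, since $L\geq 1$ (every reachable node has at least one Pareto-non-dominated path), $|V|\cdot\log^2|V| = o(|V|\cdot\sqrt{|V|}) = O(K\cdot|V|\cdot L)$. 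After absorbing these, only the $|E|\cdot L^2$ and $K\cdot|V|\cdot L$ summands remain, giving the stated bound $O(|E|\cdot L^2 + K\cdot|V|\cdot L)$.

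Finally I would translate to the $F(|V|)$ form by substituting the definitions: Assumption \ref{A6} gives $L=O(F(|V|))$, so $L^2=O(F^2(|V|))$, and Assumption \ref{a3} gives $K=O(\sqrt{|V|})$, so $K\cdot|V|\cdot L = O(\sqrt{|V|}\cdot|V|\cdot F(|V|))$, yielding the equivalent form in the theorem statement.

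I do not expect any real obstacle here: the work is additive bookkeeping over the four parts, and the only subtlety is verifying that the logarithmic terms from Parts 1 and 2 are swallowed by the $K\cdot|V|\cdot L$ term coming from Parts 3 and 4. That verification is a one-line application of $\log|V|=o(\sqrt{|V|})$ combined with Assumption \ref{a3}, so the proof is essentially a summation plus two asymptotic comparisons.
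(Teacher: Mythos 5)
Your decomposition into the four parts (developing, scanning, adapting, Build-Sets) and the plan to sum their bounds matches the paper exactly, and the four bounds you quote are correct. However, your absorption step has a direction error. You claim $L \cdot |V| \cdot \log|V| = O(L \cdot |V| \cdot \sqrt{|V|}) = O(K \cdot |V| \cdot L)$, and similarly $|V| \cdot \log^2(|V|) = O(K \cdot |V| \cdot L)$, invoking Assumption \ref{a3}. But Assumption \ref{a3} asserts only $K = O(\sqrt{|V|})$, which is an \emph{upper} bound on $K$: it tells you $K$ is at most $\sqrt{|V|}$ up to constants, not at least. To conclude $O(L \cdot |V| \cdot \sqrt{|V|}) = O(K \cdot |V| \cdot L)$ you would need the reverse inequality $\sqrt{|V|} = O(K)$, and likewise the second absorption needs $\sqrt{|V|} = O(K\cdot L)$; neither is provided by the assumptions. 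If $K$ were, say, constant, the term $K \cdot |V| \cdot L$ would fail to dominate $L \cdot |V| \cdot \log|V|$, so your step would break.

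The paper avoids this by absorbing in the opposite direction: it first rewrites each of the four bounds in terms of $F(|V|)$ and $\sqrt{|V|}$ using $L = O(F(|V|))$ and $K = O(\sqrt{|V|})$ (both used in the correct direction), then swallows the two log-factored terms into $\sqrt{|V|} \cdot |V| \cdot F(|V|)$ via the valid comparison $\log|V| = O(\sqrt{|V|})$. This rigorously yields the intermediate form $O(|E| \cdot F^2(|V|) + \sqrt{|V|} \cdot |V| \cdot F(|V|))$. Only then does the paper assert the equality with $O(|E| \cdot L^2 + K \cdot |V| \cdot L)$, an assertion that is tight only when $K=\Theta(\sqrt{|V|})$ and $L=\Theta(F(|V|))$; but the $\sqrt{|V|}$-parameterized upper bound it derives along the way is correct, which is what your argument lacks. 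To repair your proof, absorb the stray $\log$-terms into $\sqrt{|V|} \cdot |V| \cdot F(|V|)$ rather than into $K \cdot |V| \cdot L$.
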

\begin{proof}
By Theorem \ref{developTime},
the time complexity of developing paths is:
$$
O(L \cdot |V| \cdot \log(|V|)) =
O(F(|V|) \cdot |V| \cdot \log(|V|)).
$$
By Theorem \ref{P9}, the time complexity of scanning paths is: 
$$
O(|E| \cdot L^2 +|V| \cdot \log^2(|V|)).
$$
By Theorem \ref{AdaptThoreme}, the time complexity of adapting the $queue$ between IDAQ iterations is:
$$
O(K \cdot |V| \cdot L)=O(\sqrt{|V|} \cdot |V| \cdot F(|V|) ).
$$
By Theorem \ref{buildSettheorem}, the time complexity of executing the Build-Sets procedure is:
$$
O(K \cdot |V| \cdot L)=O(\sqrt{|V|} \cdot |V| \cdot F(|V|) ).
$$
We thus conclude that the time complexity of IDAQ is given by:
$$
O(F(|V|) \cdot |V| \cdot \log(|V|))+
$$$$
O(|E| \cdot L^2 +|V| \cdot \log^2(|V|))+
$$$$
O(\sqrt{|V|} \cdot |V| \cdot F(|V|))+
$$$$
O(\sqrt{|V|} \cdot |V| \cdot F(|V|))
$$$$
=O(|E| \cdot F^{2}(|V|) +\sqrt{|V|} \cdot |V| \cdot F(|V|))
$$$$
=O(|E| \cdot L^2 +K \cdot |V| \cdot L)
$$
\end{proof}
we are ready to formally claim that under the assumptions presented in section \ref{IDAQassumptionsSection}, IDAQ's time complexity (Theorem \ref{IDAQcomplexity}) is asymptotically lower than that of the Standard Algorithm.

\begin{theorem}\label{idaqBetterThorem}
The time complexity of IDAQ is asymptotically lower than that of the Standard Algorithm by a factor of: $$\Omega(\frac{\log{|V|}}{F(|V|)}$$
\end{theorem}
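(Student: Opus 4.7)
The plan is to compare the asymptotic worst-case upper bounds in the regime in which the assumptions of Section \ref{IDAQassumptionsSection} hold, reducing each complexity to a single dominant term and then taking the ratio. No Pareto/relevance machinery is needed at this stage -- everything downstream of the two stated complexity bounds is manipulation of $|V|$, $|E|$, $K$, and $L = \Theta(F(|V|))$.

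I would begin by simplifying the Standard Algorithm's bound from Lemma \ref{l0}, namely $O(K \cdot W \cdot |E| + K \cdot |V|\log|V|)$. Assumption \ref{a0} removes the factor $W$, and Assumption \ref{a2} gives $|E| = O(|V|\log|V|)$ since $|E| \le |V| \cdot MaxDeg_+$, so the $K|E|$ term is absorbed into $K|V|\log|V|$. Hence $T_{\text{Std}} = O(K \cdot |V| \cdot \log|V|)$. Combining with Assumption \ref{a3} ($K = O(\sqrt{|V|})$) yields, in the worst allowed regime, $T_{\text{Std}} = O(\sqrt{|V|} \cdot |V| \cdot \log|V|)$.

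Next I would simplify IDAQ's bound from Theorem \ref{IDAQcomplexity}, namely $O(|E|\cdot L^2 + K\cdot|V|\cdot L)$. Substituting $|E| = O(|V|\log|V|)$, $L = O(F(|V|))$ (Assumption \ref{A6}), and $K = O(\sqrt{|V|})$ (Assumption \ref{a3}), the first summand becomes $O(|V|\log|V| \cdot F^2(|V|))$ and the second $O(\sqrt{|V|} \cdot |V| \cdot F(|V|))$. Because $F$ is sub-logarithmic, $F(|V|)\log|V| = o(\sqrt{|V|})$, so the first summand is asymptotically dominated by the second, and $T_{\text{IDAQ}} = O(\sqrt{|V|} \cdot |V| \cdot F(|V|))$.

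Finally, taking the ratio,
\[
\frac{T_{\text{Std}}}{T_{\text{IDAQ}}} \;=\; \frac{\sqrt{|V|}\cdot|V|\cdot\log|V|}{\sqrt{|V|}\cdot|V|\cdot F(|V|)} \;=\; \frac{\log|V|}{F(|V|)},
\]
which establishes the claimed $\Omega(\log|V|/F(|V|))$ improvement factor. The only step that is not purely mechanical is showing that the $|E|L^2$ summand of IDAQ is absorbed by $K|V|L$; this is where the sparsity bound (Assumption \ref{a2}) and the sub-logarithmic growth of $F$ (Assumption \ref{A6}) jointly do the work, via the single inequality $F(|V|)\log|V| = o(\sqrt{|V|})$. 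Everything else is book-keeping on the two previously established complexity bounds.
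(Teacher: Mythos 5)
Your proposal is correct, and it reaches the same bound, but it takes a genuinely different route from the paper. The paper never invokes the sparsity assumption (Assumption~\ref{a2}) in this theorem: it pairs the two summands term-by-term, $I_1 = L^2|E|$ against $S_1 = K|E|$ and $I_2 = K|V|L$ against $S_2 = K|V|\log|V|$, so that the $|E|$ factors cancel in the first ratio ($I_1/S_1 = L^2/K$) and $|E|$ never has to be related to $|V|$. It then defines two candidate improvement factors, $\omega_1 = \sqrt{|V|}/F^2(|V|)$ for the edge pair and $\omega_2 = \log|V|/F(|V|)$ for the vertex pair, checks $O(I_1)\cdot\Theta(\omega_1)=O(S_1)$ and $O(I_2)\cdot\Theta(\omega_2)=O(S_2)$, and finishes by noting that $\omega_2 < \omega_1$ (since $F$ is sub-logarithmic, $\log|V|\cdot F(|V|) = o(\sqrt{|V|})$), so the smaller factor $\omega_2$ governs. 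You instead spend Assumption~\ref{a2} up front to rewrite $|E| = O(|V|\log|V|)$, collapse each algorithm's cost to a single dominant summand, and take one ratio. Your route is shorter and more mechanical, but it burns an assumption that the paper keeps in reserve; the paper's decomposition makes explicit which term-pair is the bottleneck (the vertex pair, via $\omega_2<\omega_1$) and holds without any sparsity hypothesis. One wobble that both arguments share: when going from $\sqrt{|V|}$ to $K$ (paper: ``$O(\sqrt{|V|}\cdot|E|)=O(K\cdot|E|)$''; you: substituting the worst-case $K=\Theta(\sqrt{|V|})$ everywhere), both implicitly tighten Assumption~\ref{a3}'s upper bound $K=O(\sqrt{|V|})$ to a two-sided $\Theta$, so neither proof covers the regime of very small $K$. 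That gap is inherited from the paper, not introduced by you.
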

\begin{proof}
In Theorem \ref{IDAQcomplexity} we established that the time complexity of IDAQ is given by:
\begin{equation}\label{IdaqComExp}
O(\underbrace{L^2 \cdot |E|}_\text{(I1)} + \underbrace{ K \cdot |V| \cdot L|}_\text{(I2)}) = O(I_1 + I_2)    
\end{equation}
where:
$$
I_1 = L^2 \cdot |E|
$$
$$
I_2 =  K \cdot |V| \cdot L
$$
In Lemma \ref{l0} we established that the time complexity of the Standard Algorithm is given by:
$$
O(K \cdot W \cdot |E|+K \cdot |V| \cdot log|V|)
$$
By Assumption \ref{a0}, 
$$W = O(1)$$
Thus we get that the time complexity of the Standard Algorithm is given by:
\begin{equation}\label{SAComExp}
O(\underbrace{K \cdot |E|}_\text{(S1)} + \underbrace{K \cdot |V| \cdot log|V|}_\text{(S2)}) = O(S_1+S_2)
\end{equation}
where:
$$
S_1 =K \cdot |E|
$$
$$
S_2 = K \cdot |V| \cdot log|V|
$$
Recall that, according to Assumption \ref{A6},
$$
   L=O(F(|V|) )
$$
Also recall that according to Assumption \ref{a3}:
$$
K = O(\sqrt(|V|))
$$
Denote:
$$ \omega_1=   \frac{\sqrt{|V|}}{F^{2}(|V|)}$$
$$ \omega_2=  \frac{\log{|V|}}{F(|V|)}$$
Note that
\begin{equation}\label{I2Eq}
\begin{split}
O(I_2) \cdot \Theta(\omega_2)  & =O(K \cdot |V| \cdot L) \cdot \Theta (\frac{\log{|V|}}{F(|V|)})  \\
 & =O(K \cdot |V| \cdot F(|V|) \cdot \frac{\log{|V|}}{F(|V|)}) \\
 & = O(K \cdot |V| \cdot log|V|) \\
 & = O(S_2)
\end{split}
\end{equation}
In addition,
\begin{equation}\label{I1Eq}
\begin{split}
O(I_1) \cdot \Theta(\omega_1 ) & = O(L^2 \cdot |E| ) \cdot \Theta(\frac{\sqrt{|V|}}{F^{2}(|V|)} ) \\
& = O(L^2 \cdot |E|  \cdot \frac{\sqrt{|V|}}{F^{2}(|V|)} ) \\
& = O(F^{2}(|V|) \cdot |E|  \cdot \frac{\sqrt{|V|}}{F^{2}(|V|)} )\\
 & = O(\sqrt{|V|} \cdot |E|) = O(K \cdot |E|) \\
 & = O(S_1)
\end{split}
\end{equation}
Since $F$ is a sub-logarithmic function (Assumption \ref{A6}):
$$
\omega_1  = \frac{\sqrt{|V|}}{F^{2}(|V|)} > \Theta(\log(|V|)) 
$$
and
$$
\omega_2 < \Theta(\log(|V|))
$$
The following inequality holds:
\begin{equation}\label{omegaEq}
 \Theta(\omega_2) < \Theta(\omega_1)
\end{equation}
From expressions \ref{I1Eq} and \ref{omegaEq} we can conclude that the following holds:
\begin{equation}\label{I1Omega2}
\begin{split}
O(I_1 ) \cdot \Theta( \omega_2)  &  < O(I_1) \cdot \Theta(\omega_1) =  O(S_1)
\end{split}
\end{equation}
Finally, from expressions \ref{I2Eq} and \ref{I1Omega2} we can conclude that
\begin{equation}
\begin{split}
O(I_1 + I_2) \cdot \Theta(\omega_2)  & =O( I_1) \cdot \Theta( \omega_2 ) + O( I_2) \cdot \Theta( \omega_2)    \\
 & = O(S_1) + O( S_2) \\
 & = O(S_1+S_2)
\end{split}
\end{equation}
Since the complexity of IDAQ is given by $O(I_1+I_2)$ (expression \ref{IdaqComExp}) and the complexity of the Standard Algorithm is given by $O(S_1+S_2)$ (expression \ref{SAComExp}), we conclude that IDAQ's complexity is lower than that of the Standard Algorithm by at least a factor of $\omega_2$.
Thus, we get that, indeed, the time complexity of IDAQ is asymptotically lower than that of the Standard Algorithm by a factor of: $$\Omega(\omega_2)  = \Omega(\frac{\log(|V|)}{F(|V|)})$$
\end{proof}
In the previous Theorem \ref{idaqBetterThorem}, we established that the time complexity of IDAQ is asymptotically lower than that of the Standard Algorithm by a factor of $\Omega(\frac{\log{|V|}}{F(|V|)}$.
Recall that, according to Definition \ref{L}, $L = O(F(|V|))$ corresponds to the maximal number of Pareto paths for most nodes, which can be expected to be relatively small in practice \cite{paretoIsFeasible}. 
For example, if $F(|V|) = O(1)$, the factor of improvement is given by $\Omega(\log{|V|})$; 
if $F(|V|) = \log{\log{|V|}}$, 
it is $\Omega( \frac{\log{|V|}}{\log{\log{|V|}}})$;
and if $F(|V|) = \log^{1-\beta}{|V|}$, for some $0<\beta<1$, then it is $\Omega( \log^{\beta}(|V|)$. 

\section{Simulation Study}
In this section, we present computational experiments conducted in order to assess the performance of IDAQ (Algorithm \ref{idaqAlgo}) in comparison to the Standard Algorithm (Algorithm \ref{naiveAlgoPseudo}). Both algorithms have been implemented in a MATLAB environment. All experiments were conducted on a PC with 32GB RAM and 5$th$ Generation Intel® Core™ i5 Processor. In Section \ref{randomExp}, we describe experiments conducted using randomly generated $MOG$s. In Section \ref{practicleExp} ,we describe experiments conducted on $MOG$s generated using actual data representing a more practical setting (provided by Open Street  \cite{openstreetmap}). 
As shall be presented, in each experiment, we obtained identical results for IDAQ and for the Standard Algorithm, in terms of the quality (optimality) of the solution, as indeed implied by the established correctness of IDAQ (Theorem \ref{idaqcorrect}). In each experiment, we measured the performance of each algorithm in terms of running time.

\subsection{Random Generated Experiments}\label{randomExp}
In this section, we demonstrate the advantage of using IDAQ to solve a randomly generated instance of the $MOWSP$ problem (Problem \ref{mowsp}). We describe two different experiments (Experiment 1 and Experiment 2) conducted in order to asses IDAQ's improvement in performance, in terms of running time, with respect to the Standard Algorithm. In Section \ref{mogsgen} we describe the $MOG$ (Definition \ref{MOG}) instances employed by the experiments (we used the same $MOG$s in the two experiments). In Sections \ref{Exp1} and \ref{Exp2} we describe the conducted experiments. 

\subsubsection{$MOG$ instances for Experiments 1 and 2}\label{mogsgen}
In this section we describe the generation of $MOG$s employed by Experiment 1 and Experiment 2.
\\First, we generated random Waxman graphs \cite{waxman} using various parameters. The properties of the generated graph and generation parameters are described in Table \ref{tab:table1}. In order to generate an $MOG$ out of our generated Waxman graphs, we selected for each edge, uniformly at random, five objectives (see Definition \ref{MOG}), each assuming a value between $0$ and $1$.

\begin{table}[t]
\begin{center}
 \begin{tabular}{||c c c||} 
 \hline
Number of Nodes & Number Of Edges & Density ($\frac{|E|}{|V|^2)}$) \\ [0.5ex] 
 \hline\hline
 221 & 8344 & 0.1708 \\ 
 \hline
 287 & 14290 & 0.1734  \\
 \hline
 236 & 9682 & 0.1738  \\
 \hline
 245 & 10474 & 0.1744  \\
 \hline
 226 & 8566 & 0.1677  \\ 
 \hline
  266 & 12306 & 0.17394 \\ 
 \hline
  238 & 9716 & 0.1715 \\ 
 \hline
  233 & 9370 & 0.1725 \\ [1ex] 
 \hline
\end{tabular}
\caption { Randomly Generated Waxman graph parameters, used in the experiments conducted in Section \ref{randomExp}. The graphs were generated using the following Waxman model parameters: $Domain = \{[0,1],[0,0.1]\}$; $\lambda = 5000$; $\alpha = 4$; $\beta = 0.03$;}
\label{tab:table1}
\end{center}
\end{table}

\subsubsection{Experiment 1}\label{Exp1}
In this section, we describe the experiment conducted in order to asses IDAQ's improvement in running time with respect to the Standard Algorithm. 

Fist, we generate a set of $MOG$s, as described in Section \ref{mogsgen}).\\
Next, we construct  a $MOWSP$ out of each $MOG$ by selecting uniformly at random $K$ coefficient vectors (see definition \ref{lambdaEq} (recall that $K=|\Lambda$|)). Each coefficient is a randomly uniformly generated number between 0.1 and 1.1. we compared between the running times of IDAQ and the Standard Algorithm under various values of $K$. For each value of $K$ we calculated the average running time of IDAQ and the Standard Algorithm (Figure \ref{f1}).  As can be seen, IDAQ exhibits better performance (in terms of running time), and its advantage is significant (up to an improvement of about $50\%$) for large numbers of coefficient vectors.
As described in Section \ref{IDAQDesc}, similarly to the Standard Algorithm, IDAQ is an iterative algorithm, which produces at each iteration at most a single solution for each $v \in V$. However, unlike the Standard Algorithm, IDAQ shares knowledge between its iterations. Therefore, we expect that in $MOWSP$s with a large size of the coefficient vector, namely with a large $K$, IDAQ would exhibit better performance, as indeed demonstrated in Figure \ref{f1}.\\

\begin{figure}[ht!]
\includegraphics[scale=1.0]{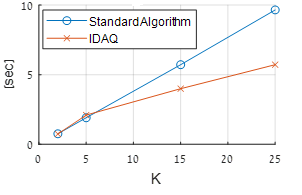}
\caption{Average running times of the Standard Algorithm and IDAQ  for the MOWSP problem with different values of $K$.}
\label{f1}
\end{figure}

\subsubsection{Experiment 2}\label{Exp2}
In this section we describe an experiment conducted in order to compare between the running times of IDAQ and the Standard Algorithm in scenarios in which IDAQ develops a significantly smaller number of paths than the Standard Algorithm. In other words, we investigated IDAQ's performance improvement in scenarios where there is considerable knowledge that can be shared between the algorithm's iterations (in particular, considerably more than in the Waxman topologies of Experiment 1).\\

We found out that an easy way to control the number of IDAQ developed paths is through the $MOWSP$'s coefficient vectors. Intuitively, using coefficient vectors that exhibit similarity increases the probability that a path that is optimal with respect to a specific coefficient vector would be optimal also with respect to other coefficient vectors, thus lowering the amount of paths developed by IDAQ. 
\\

The experiment runs as follows.
We construct a set of $MOWSP$ in a similar manner to Experiment 1 (Section \ref{Exp1}). We use the same set of $MOG$s, but now create the coefficient vectors with a small change: instead of setting each coefficient to a uniformly distributed number between $0.1$ to $1.1$ (as in Experiment 1), we now set it to a uniformly distributed number between $0.5$ to $1.1$ (i.e., larger values than in Experiment 1), which results in more similar coefficient vectors. Figure \ref{f1_1} demonstrates that, in such scenarios IDAQ exhibits much better performance than the Standard Algorithm. 

\begin{figure}[ht!]
\includegraphics[scale=1.0]{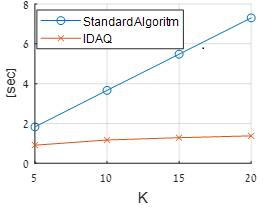}
\caption{Standard Algorithm and IDAQ average running time on $MOWSP$ problem with a different number of coefficient vectors $K$. The coefficient vectors were selected so that iterations are more correlated than in Experiment 1.}
\label{f1_1}
\end{figure}

\subsection{Practical Application Experiments}\label{practicleExp}
In this section we describe several experiments conducted on $MOWSP$ generated from actual data.\\
We consider an application that finds several "optimal" routes for a bicycle rider in Manhattan (New York, USA). To generate our $MOG$, we used data provided by Open Street Map \cite{openstreetmap}. Each edge's (=road) objective is determined by the following:
\begin{description}
    \item \textbf{C1} - Road distance.
    \item \textbf{C2} -  If bicycle road : $\frac{C1 }{ 2}$ , $C1$ otherwise.
    \item \textbf{C3} -  If road not close to highway : $\frac{C1 }{ 2}$ , $C1$ otherwise.
    \item \textbf{C3} -  If road not close to buildings : $\frac{C1 }{ 2}$ , $C1$ otherwise.
\end{description}
In Table \ref{table2} we present seven considered coefficient vectors. 
\begin{table}[]
 \begin{center}
 \begin{tabular}{||c c c c||} 
 \hline
 C1  & C2  & C2 & C3 \\ [0.5ex] 
 \hline\hline
 13.06  & 0.17	& 0.13	& 0.21 \\ 
 \hline
 0.13	& 16.98 &	0.13 &	0.21 \\
 \hline
 0.13 & 0.17 &	13.17	& 0.21 \\
 \hline
 0.13 & 0.17 &	0.13 &	21.11  \\
 \hline
 4.28 & 7.45	& 1.4 &	3.25 \\ 
 \hline
  3.89 & 6.12 &	1.5 &	5.45 \\ 
 \hline
 6.23 & 8.27 & 0.61 &	0.45 \\ [1ex]
 \hline
\end{tabular}
\caption{Example of weigh vectors in Experiment 2, some were selected manually (lines 1-4), other at random (lines 5-7).}
\label{table2}
\end{center}
\end{table}
In Figure \ref{f2} we depict the routes identified by IDAQ on a specific problem ($25$ coefficient vectors).   
\begin{figure}[ht!]
\includegraphics[scale=0.6]{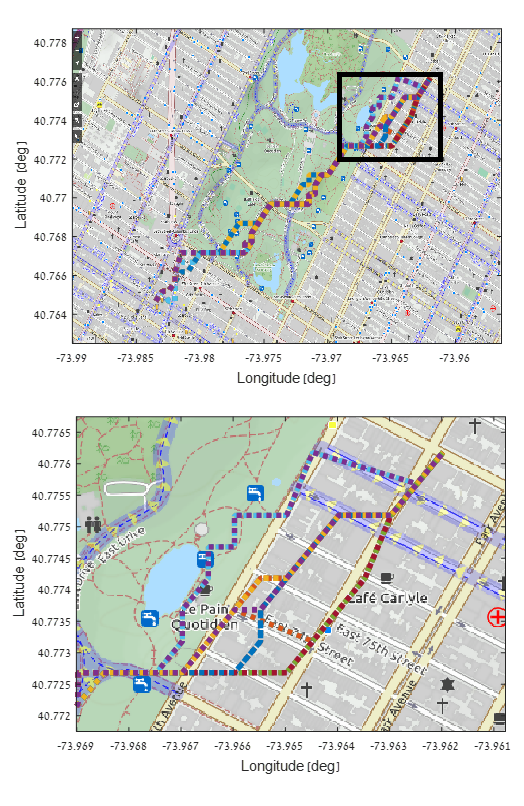}
\caption{Optimal routes selected by IDAQ on a generated MOWSP problem. Lower part of the figure zooms on areas where routes set apart.}
\label{f2}
\end{figure}

Figure \ref{f4} demonstrates that, similarly to the randomly generated experiments (Section \ref{randomExp}), in the current experiments too IDAQ exhibits better performance than the Standard Algorithm.

\begin{figure}[ht!]
\includegraphics[scale=0.9]{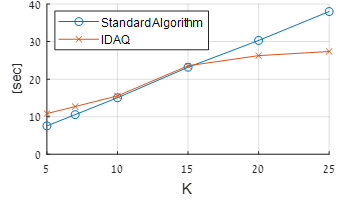}
\caption{Standard Algorithm and IDAQ average running time on MOWSP problem generated using actual data (by Open Street Map) with different amount of coefficient vectors $K$.}
\label{f4}
\end{figure}

\newpage

\section{Conclusion}
We investigated the fundamental problem of routing with multiple objectives. More specifically, we considered the problem of providing several routes that minimize different optimization criteria. While this can be simply achieved by consecutively executing a standard shortest path algorithm, in case of a large number of different optimization criteria this may require an excessively large number of executions, thus incurring a prohibitively large running time.
\\

Our major contribution is a novel efficient algorithm for the considered problem, namely the IDAQ algorithm. Similarly to the standard-approach algorithm, IDAQ iteratively searches for routes, one per optimization criteria; however, instead of executing each iteration independently, it reduces the average running time by skilfully sharing information among the iterations. By doing so, it exploits the similarity among optimal routes with respect to different optimization criteria, so as to improve the performance of the solution scheme. 
\\

We showed that both IDAQ and the standard algorithm provide an optimal solution for the considered problem. We then showed that, under reasonable assumptions, IDAQ typically provides considerably lower computational complexity than that of the standard algorithm. We confirmed this finding through several computational experiments on both randomly generated settings, as well as settings that correspond to real-world environments (specifically using data generated from Open Street Map \cite{openstreetmap}). 
\\

Several important issues are left for future work. One is to investigate whether assumptions employed in this paper can be applied to speed up other multi-objective routing solution schemes.
For instance, in this study, we presented an efficient algorithm that takes advantage of small Pareto sets to gain significant speed-up in comparison to a standard approach algorithm that consecutively executes a (standard) shortest path algorithm. We employed the assumption that, even though the number of Pareto paths can be non-polynomial, it is often relatively small in practice \cite{paretoIsFeasible}. It is of interest to investigate whether such an assumption can be applied to speed up other multi-objective routing schemes, such as Resource Constrained Shortest Path algorithms or Fully Polynomial Time Approximation Schemes (FPTAS). 
\\

Another interesting direction is to investigate whether similar results can be obtained in a setting where more than one computational unit is allocated to solve the MOWSP problem. A major advantage of the standard approach algorithm over our proposed algorithm is that it is easy to break it into several processes: each iteration (in other words, execution of the Dijkstra algorithm) can be executed independently, hence different computational units can take care of different iterations, simultaneously.
It is of interest to investigate whether in the case where more than a single computational unit is available, an efficient algorithm can be established to solve the MOWSP problem while maintaining a significant speedup in comparison to a standard approach algorithm. This is, of course, quite challenging since our approach is based on iterative steps where each iteration depends on the previous one. 
It might be interesting to consider an iterative algorithm where each iteration executes a distributed version of a shortest-path algorithm, e.g. Bellman-Ford's \cite{BF}. This may allow us to use similar ideas to those implemented in IDAQ while taking advantage of multiple available computation units.
\\

Another important aspect is to compare our approach with sub-optimal algorithms that provide an estimation of the Pareto set. In particular, it is of interest to investigate under which conditions our approach provides better results (in terms of the quality of the solution or running time). The latter has to consider that, unlike Pareto sub-optimal algorithms, our approach depends on pre-selected coefficient vectors.
\\

Last, in some settings, a heuristic scheme can be applied to speed up a shortest path search, while still providing optimal solutions (e.g, the A* algorithm \cite{aStar} or other informed search algorithms \footnote{I.e., an algorithm guided by some heuristic}). In the case where an admissible heuristic is supplied for each coefficient vector, a standard informed algorithm can be proposed that is based on executions of an informed search algorithm (instead of Dijkstra's). It is of interest to investigate whether an efficient informed algorithm can be proposed, based on similar ideas to those implemented in IDAQ, while reaching a similar speedup factor in comparison to the standard informed algorithm.

\bibliographystyle{plain}
\bibliography{\jobname.bib}

\begin{thebibliography}{10}

\bibitem{openstreetmap}
Open street map. available online: http://www.openstreetmap.org.

\bibitem{Goel}
D.~Kataria A.~Goel, K.G.~Ramakrishnan and D.~Logothetis.
\newblock Efficient computation of delay-sensitive routes from one source to
  all destinations.
\newblock IEEE INFOCOM-2001, pages 854-858.

\bibitem{Juttner}
I.~Mecs A.~Juttner, B.~Szviatovszki and Z.~Rajko.
\newblock Lagrange relaxation based method for the qos routing problem.
\newblock IEEE INFOCOM-2001, 859-868.

\bibitem{BF}
R.~Bellman.
\newblock On a routing problem.
\newblock Quarterly of Applied Mathematics. 16: 87–90, 1958.

\bibitem{LorenzoOrda}
D.~Raz D.~H.~Lorenz, A.~Orda and Y.~Shavitt.
\newblock Efficient qos partition and routing of unicast and mulicast.
\newblock In IEEE/ACM Transactions on Networking, vol. 14, no. 6, pages
  1336-1347, Dec. 2006.

\bibitem{Dean}
B.~C. Dean.
\newblock Continuous-time dynamic shortest path algorithms.
\newblock Master’s thesis, Massachusetts Institute of Technology, 1999.

\bibitem{DellingWagner}
D.~Delling and D.~Wagner.
\newblock Time-dependent route planning.
\newblock Robust and Online Large-Scale Optimization, volume 5868 of Lecture
  Notes in Computer Science, pages 207–230. 2009.

\bibitem{fibQueueDijkstraComplexity}
M.~L. Fredman and R.~E. Tarjan.
\newblock Fibonacci heaps and their uses in improved network optimization
  algorithms.
\newblock Journal of the ACM (JACM), 1987.

\bibitem{Funke}
S.~Funke and S.~Storandt.
\newblock Polynomial-time construction of contraction hierarchies for
  multi-criteria objectives.
\newblock In Proceedings of the 15th Meeting on Algorithm Engineering and
  Experiments (ALENEX’13), pages 31–54. SIAM, 2013.

\bibitem{routingTask1}
R.~G. Gallager.
\newblock A minimum delay routing algorithm using distributed computation.
\newblock IEEE Trans. Commun., vol. COM-25, pp. 73-85, Jan. 1977.

\bibitem{flexObj}
R.~Geisberger, M.~N. Rice, P.~Sanders, and V.~J. Tsotras.
\newblock Route planning with flexible edge restrictions.
\newblock In ACM Journal of Experimental Algorithmics, volume 17, 2012.

\bibitem{ch}
R.~Geisberger, P.~Sanders, D.~Schultes, and D.~Delling.
\newblock Contraction hierarchies: Faster and simpler hierarchical routing in
  road networks.
\newblock In: McGeoch C.C. (eds) Experimental Algorithms. WEA 2008. Lecture
  Notes in Computer Science, volume 5038, pages 319-333.

\bibitem{IntegerCSP}
G.~Handler and I.~Zang.
\newblock A dual algorithm for the constrained shortest path problem.
\newblock Networks, Volume 10, Issue 4, pages 293-310, 1980.

\bibitem{Hansen}
P.~Hansen.
\newblock Bricriteria path problems.
\newblock In Multiple Criteria Decision Making – Theory and Application,
  pages 109–127, 1979.

\bibitem{EvRoutePlaning}
M.~Baum J, Dibbelt, L.~H. Schneider, T.~Pajor, and D.~Wagner.
\newblock Speed-consumption tradeoff for electric vehicle route planning.
\newblock In Proceedings of the 14th Workshop on Algorithmic Approaches for
  Transportation Modelling, Optimization, and Systems, volume 42, pages
  138-151, 2014.

\bibitem{LorenzoRaz}
D.~Lorenz and D.~Raz.
\newblock A simple efficient approximation scheme for the restricted shortest
  paths problem.
\newblock Operations Research Letters, volume 28, issue 5, June 2001, Pages
  213-219.

\bibitem{Loridan}
P.~Loridan.
\newblock $\epsilon$-solutions in vector minimization problems.
\newblock Journal of Optimization Theory and Applications, volume 43, pages
  265–276, 1984.

\bibitem{Martins}
E.~Q. Martins.
\newblock On a multicriteria shortest path problem.
\newblock European Journal of Operational Research, Volume 16, Issue 2, 1984.

\bibitem{cspDef}
K.~Mehlhorn and M.~Ziegelmann.
\newblock Resource constrained shortest paths.
\newblock In Proceedings of the 8th Annual European Symposium on Algorithms
  (ESA 2000), volume 1879 of Lecture Notes in Computer Science, pages
  326–337, 2000.

\bibitem{paretoIsFeasible}
M.~Muller–Hannemann and K.~Weihe.
\newblock Pareto shortest paths is often feasible in practice.
\newblock WAE 2001, Proceedings of the 5th International Workshop on Algorithm
  Engineering, 2001.

\bibitem{routingTask3}
A.~Orda and R.~Rom.
\newblock Routing with packet duplication and elimination in computer networks.
\newblock IEEE Transactions on Communications, vol. 36, no. 7, pp. 860-866,
  July 1988.

\bibitem{routingTask2}
D.~Vigo P.~Toth.
\newblock The vehicle routing problem: An overview of exact and approximate
  algorithms.
\newblock European Journal of Operational Research 59 (1992) 345-358.

\bibitem{Papadimitriou}
C.~H. Papadimitriou and M.~Yannakakis.
\newblock On the approximability of trade-offs and optimal access of web
  sources.
\newblock In Proceedings of the 41st Annual IEEE Symposium on Foundations of
  Computer Science (FOCS’00), pages 86–92, 2000.

\bibitem{aStar}
B.~Raphael P.E.~Hart, N.J.~Nilsson.
\newblock A formal basis for the heuristic determination of minimum cost paths.
\newblock IEEE Transactions on Systems Science and Cybernetics. 4 (2):
  100–107, 1968.

\bibitem{routingTask4}
H.~N. Psaraftis.
\newblock Dynamic vehicle routing: Status and prospects.
\newblock Annals of Operations Research volume 61, pages 143–164, December
  1995.

\bibitem{Hassin}
R.Hassin.
\newblock Approximation schemes for the restricted shortest path problem.
\newblock Mathematics of Operation Research, volume 17, pages 36-42, 1992.

\bibitem{Tsaggouris}
G.~Tsaggouris and C.~Zaroliagis.
\newblock Multiobjective optimization: Improved fptas for shortest paths and
  non-linear objectives with applications.
\newblock Theory of Computing Systems, volume 42, pages 162–186, June 2009.

\bibitem{Warburton}
A.~Warburton.
\newblock Approximation of pareto optima in multiple-objective shortest path
  problems.
\newblock Operations Research, volume 35, pages 70-79, 1987.

\bibitem{waxman}
B.~M. Waxman.
\newblock Routing of multipoint connections.
\newblock In IEEE Journal on Selected Areas in Communications, volume 6, no. 9,
  pages 1617-1622, 1988.

\bibitem{HAR}
M.~Werner and S.~Feld.
\newblock Homotopy and alternative routes in indoor navigation scenarios.
\newblock International Conference on Indoor Positioning and Indoor Navigation
  (IPIN), pages 230-238, 2014.

\bibitem{White}
D.~J. White.
\newblock Epsilon efficiency.
\newblock Journal of Optimization Theory and Applications, volume 49, pages
  319–337, 1986.

\bibitem{larac}
Y.~Xiao, K.~Thulasiraman, G.~Xue, A.~Juttner, and S.~Arumugam.
\newblock The constrained shortest path problem: Algorithmic approaches and an
  algebraic study with generalization.
\newblock AKCE International Journal of Graphs and Combinatorics, volume 2,
  pages 63-86, 2005.

\bibitem{DisserParetoFeasible}
M.~Muller–Hannemann Y.~Disser and M.~Schnee.
\newblock Multi-criteria shortest paths in time-dependent train networks.
\newblock In Proceedings of the 7th Workshop on Experimental Algorithms
  (WEA’08), volume 5038 of Lecture Notes in Computer Science, pages
  347–361. June 2008.

\end{thebibliography}
\newpage

\end{document}